\DeclareMathAlphabet{\mathpzc}{OT1}{pzc}{m}{it}
\newcommand{\A}{\mathcal{A}}
\newcommand{\B}{\mathcal{B}}
\newcommand{\G}{\mathbb{G}}
\newcommand{\Ghat}{\hat{\mathbb{G}}}
\newcommand{\ghat}{\hat{g}}
\newcommand{\hhat}{\hat{h}}
\newcommand{\vhat}{\hat{v}}
\newcommand{\what}{\hat{w}}
\newcommand{\Z}{\mathbb{Z}}
\newcommand{\N}{\mathbb{N}}
\newcommand{\PR}{\mathrm{Pr}}
\newcommand{\id}{\mathsf{id}}
\newcommand{\Sid}{\Sigma_{\mathsf{ID}}}
\newcommand{\HF}{\mathsf{HF}}
\newcommand{\HFSetup}{\textsf{HF.Setup}}
\newcommand{\HFMKg}{\textsf{HF.MKg}}
\newcommand{\HFKg}{\textsf{HF.Kg}}
\newcommand{\HFEv}{\textsf{HF.Eval}}
\newcommand{\HFInv}{\textsf{HF.Inv}}
\newcommand{\HFDel}{\textsf{HF.Del}}
\newcommand{\LHF}{\mathsf{LHF}}
\newcommand{\LHFMKg}{\textsf{LHF.MKg}}
\newcommand{\LHFKg}{\textsf{LHF.Kg}}
\newcommand{\InpSp}{\mathsf{InpSp}}
\newcommand{\IdSp}{\mathsf{IdSp}}
\newcommand{\AxSp}{\mathsf{AuxSp}}
\newcommand{\barw}{\bar{w}}
\newcommand{\checkeq}{\stackrel{?}{=}}
\newcommand{\CC}{\mathbf{CT}}
\newcommand{\DD}{\mathbf{D}}
\newcommand{\KK}{\mathbf{K}}
\newcommand{\JJ}{\mathbf{J}}
\newcommand{\PP}{\mathsf{PP}}
\newcommand{\hh}{\mathbf{h}}
\newcommand{\err}{\mathbf{r}}
\newcommand{\valpha}{\boldsymbol{\alpha}}
\newcommand{\vzeta}{\boldsymbol{\zeta}}
\newcommand{\vgamma}{\boldsymbol{\gamma}}
\newcommand{\ww}{\mathbf{w}}
\newcommand{\CCC}{\mathbf{C}}
\newcommand{\yy}{\mathbf{y}}
\newcommand{\ess}{\mathbf{s}}
\newcommand{\REAL}{\mathsf{REAL}}
\newcommand{\LOSSY}{\mathsf{LOSSY}}
\newcommand{\DE}{\mathsf{HIB}\textrm{-}\mathsf{DE}}
\newcommand{\DESetup}{\textsf{HIB-DE.Setup}}
\newcommand{\DEMKg}{\textsf{HIB-DE.MKg}}
\newcommand{\DEKg}{\textsf{HIB-DE.Kg}}
\newcommand{\DEEnc}{\textsf{HIB-DE.Enc}}
\newcommand{\DEDec}{\textsf{HIB-DE.Dec}}
\newcommand{\DEDel}{\textsf{HIB-DE.Del}}
\def \sample { \stackrel{\begin{footnotesize} _{R}
\end{footnotesize}}{\leftarrow} }
\title{Hierarchical Identity-Based (Lossy) Trapdoor Functions}
\author{Alex Escala \inst{1} \and Javier Herranz\inst{1} \and Beno\^it Libert\inst{2}    \and Carla
R\`afols\inst{3} }
\institute{Universitat Polit\`ecnica de Catalunya, Dept.
Matem\`atica Aplicada IV
 (Spain) \\
{\small e-mail: {\tt $\{$alex.escala,jherranz$\}$@ma4.upc.edu}  }
\and Universit\'e catholique de Louvain, ICTEAM Institute -- Crypto
Group
(Belgium) \\
{\small e-mail: {\tt benoit.libert@uclouvain.be}  } \and
Ruhr-Universit\"at Bochum, Horst G\"ortz Institut f\"ur IT-Sicherheit
 (Germany)  \\
{\small e-mail: {\tt Carla.Rafols@rub.de}  } }
\begin{document}

\maketitle 

\begin{abstract}

Lossy trapdoor functions, introduced by Peikert and Waters
(STOC'08), have received a lot of attention in the last years,
because of their wide range of applications in theoretical
cryptography. The notion  has been recently extended to the
identity-based setting by Bellare \textit{et al.} (Eurocrypt'12). We
provide one more step in this direction, by considering the notion
of hierarchical identity-based (lossy) trapdoor functions
(HIB-TDFs).
Hierarchical identity-based cryptography has proved very useful
both for practical applications and to establish theoretical
relations with other cryptographic primitives.

The notion of security for IB-TDFs put forward by Bellare \textit{et
al.}   easily extends to the hierarchical scenario, but an (H)IB-TDF
secure in this sense is not known to generically imply  other
related primitives with security against adaptive-id adversaries,
not even IND-ID-CPA secure encryption. Our first contribution is to
define a new security property for (H)IB-TDFs. We show that
functions satisfying this property imply secure cryptographic
primitives in the adaptive  identity-based setting: these include
encryption schemes with semantic security under chosen-plaintext
attacks, deterministic encryption schemes, and (non-adaptive) hedged
encryption schemes that maintain some security when messages are
encrypted using randomness of poor quality. We emphasize that some
of these primitives were unrealized in the (H)IB setting previous to
this work.

As a  second contribution, we describe  the first pairing-based
HIB-TDF realization. This  is also the first example of hierarchical
trapdoor function based on traditional number theoretic assumptions:
 so far, constructions were only known under lattice
assumptions.
 Our  HIB-TDF construction is based on   techniques that differ from those of Bellare {\it et al.} in that it uses  a
hierarchical predicate encryption scheme as a key ingredient. The
resulting HIB-TDF is proved to satisfy the new security definition,
against
 either selective or, for hierarchies of constant depth, adaptive adversaries. In either case,
 we only need
 the underlying predicate encryption system to be selectively secure.
\smallskip

\textbf{Keywords.} Lossy trapdoor functions, hierarchical
identity-based encryption, partial lossiness.
\end{abstract}



\section{Introduction}

\subsection{(Identity-Based) Lossy Trapdoor Functions}

Lossy trapdoor functions, as introduced by Peikert and Waters in
\cite{PW08}, have been proved very powerful in theoretical
cryptography and received a lot of attention in the recent years
(see, e.g., \cite{FGKRS10,HO11,MY10,BW10,Hof12,Wee12}). Roughly
speaking, a lossy trapdoor function is a family of functions that
can be instantiated in two different modes. In the injective mode,
the function is injective and can be inverted using the
corresponding trapdoor. In lossy mode, the function is (highly)
non-injective since its image size is much smaller than the size of
the domain. The key point is that lossy instantiations of the
function must be indistinguishable from injective instantiations. \\
\indent In their seminal paper  \cite{PW08}, Peikert and Waters
showed that lossy trapdoor functions   provide  black-box
constructions of chosen-ciphertext secure (IND-CCA)  public-key
encryption schemes as well as universal one-way and
collision-resistant hash functions. Later on, other applications of
lossy trapdoor functions were discovered: they gave rise to
deterministic encryption schemes \cite{BBO07} in the standard model
\cite{BFO08}, public-key encryption hedged schemes maintaining some
security in the absence of reliable encryption coins \cite{BBN09}
and even public-key encryption with selective-opening security
\cite{BHY09} ({\it i.e.}, which offer certain
security guarantees in case of sender corruption). \\
\indent Recently, Bellare, Kiltz, Peikert and Waters \cite{BKPW}
introduced the notion of identity-based (lossy) trapdoor function (IB-TDF),
which is the analogue of lossy trapdoor functions in the setting of
identity-based cryptography \cite{Sha}.  In the identity-based
scenario, users' public keys are directly derived from their
identities, whereas secret keys are delivered by a trusted master
entity. In this way, the need for digital certificates, which
usually bind public keys to users in traditional public-key
cryptography, is drastically reduced. Throughout the last decade, several generalizations of identity-based cryptography were put forth,
 including hierarchical identity-based cryptography \cite{GS0}, attribute-based cryptography \cite{SW05,GPSW} or
 predicate-based cryptography \cite{BW07,KSW08}. In the setting of hierarchical
 identity-based cryptography, identities are organized in a hierarchical way, so that a user who holds the
 secret key of an identity $\id$ can generate, use and distribute valid secret keys for any identity that is a descendant of $\id$ in the hierarchy.
 Hierarchical identity-based encryption (HIBE) is of great interest due to both practical and theoretical reasons.
 On the practical side, many organizations and systems that may need (identity-based) cryptographic solutions are organized in a hierarchical way.
  On the theoretical side,  generic constructions \cite{CHK03,CHK} are known to
  transform a weakly secure HIBE scheme  ({\it i.e.}, IND-CPA security against selective adversaries) into  (public-key)
  encryption schemes with strong security properties,
 like chosen-ciphertext security \cite{CHK} or  forward-security \cite{And,CHK03}, where private keys are updated in such a way that past encryptions remain
 safe after a private key exposure.  \\
\indent Bellare \textit{et al.} \cite{BKPW} proposed instantiations
of identity-based lossy trapdoor functions  based on bilinear
 maps and on   lattices (as noted in \cite{BKPW}, almost all IBE
schemes belong to these families). The former makes clever use of an
anonymous IBE system (where the ciphertext hides the receiver's
identity) with pseudorandom ciphertexts whereas the latter relies on
lossiness properties of   learning-with-error-based cryptosystems.
Moreover, they show that their definition of partial-lossiness for identity-based
 trapdoor functions
leads to the same cryptographic results as lossy trapdoor functions,
but in the selective identity-based setting only, where the attacker
must choose the target identity upfront in the attack game. Namely,
in the case of selective adversaries, IB-TDFs satisfying their
definition imply identity-based encryption with semantic security,
identity-based deterministic encryption and identity-based hedged
encryption. In \cite{BKPW}, it was left  as an open problem to prove
that the same results hold in the case of adaptive adversaries.

\subsection{Our Contributions}

This paper  extends to the hierarchical setting the notion of
identity-based (lossy) trapdoor function. \\ \vspace{-0.3 cm}

\noindent \textsc{New Definition of Partial Lossiness.} From a
theoretical standpoint, we first define a new security property for
hierarchical identity-based trapdoor functions (HIB-TDFs). We show
that a HIB-TDF which satisfies this property can be used to obtain
the same kind of results that are derived from standard lossy
trapdoor functions \cite{PW08}. Namely, they lead to standard
encryption schemes, to deterministic encryption schemes, and to
non-adaptive hedged encryption schemes, which are secure in the
hierarchical identity-based setting, in front of adaptive-id
adversaries. Since HIB-TDFs contain IB-TDFs as a particular case,
our results for adaptive adversaries solve an open problem in
\cite{BKPW}. Interestingly,  the pairing-based IB-TDF of Bellare
{\it et al.} \cite{BKPW} can be proved to also satisfy the new
security property. As a consequence, it provides  adaptively secure
deterministic and hedged IBE schemes. Recently, Xie {\it et al.}
\cite{XXZ}  designed an adaptively secure D-IBE system using
lattices. The   construction of \cite{BKPW} is thus the first adaptively secure pairing-based realization.  \\
\vspace{-0.3 cm}

\noindent \textsc{Construction of a Pairing-Based Hierarchical
  Trapdoor Function.}  On the constructive side,
we  focus on pairing-based systems where, as already mentioned in
\cite{BKPW}, greater challenges are faced.
Indeed,     in the hierarchical scenario, anonymity -- which was
used as an important ingredient by Bellare {\it et al.} \cite{BKPW}
-- has been significantly harder to obtain in the world of bilinear
maps than with lattices \cite{CHKP,ABB10a,ABB10b}: for example, the
first pairing-based anonymous HIBE system \cite{BW06} appeared four
years after the first collusion-resistant HIBE \cite{GS0}. Moreover,
very few
anonymous IBE systems seem amenable for constructing IB-TDFs, as noted in \cite{BKPW} where a new scheme was specially designed for this purpose.  \\
\indent
 Using bilinear
maps, we thus construct 
a HIB-TDF and prove that it satisfies our new definition of partial
lossiness under
 mild hardness assumptions in groups of prime order. As an
intermediate step, we design 
a hierarchical predicate encryption (HPE) system \cite{ShiWa08,OT09}
with suitable anonymity  properties, which may be of independent
interest. Perhaps surprisingly, although this scheme is proved
secure only against weak \textit{selective} adversaries (who select
their target attribute set before seeing the public parameters), we
are able to turn it into a HIB-TDF providing security (namely, our
new version of partial lossiness) against \textit{adaptive}
adversaries for hierarchies of constant depth. To the best of our
knowledge, our HIB-TDF    gives rise to
 the first   hierarchy of trapdoor functions which does not rely on lattices: realizing such
 a hierarchy
 using   traditional number theoretic techniques was identified as an open problem by Cash {\it et al.} \cite{CHKP}. \\
 \indent Beyond its hierarchical nature, our construction
brings out an alternative design principle for (H)IB-TDFs. The idea
 is to rely on hierarchical predicate encryption (HPE) to deal with hierarchies. Namely, public parameters consist
of  a matrix of
 HPE encryptions and, when the function has to be evaluated,
  the latter matrix is turned into a matrix of (anonymous) HIBE ciphertexts. The  homomorphic properties of the underlying HIBE then make it
 possible to evaluate the function  while guaranteeing a sufficient amount of lossiness in lossy mode.  \\ \indent While
the pairing-based construction of Bellare {\it et al.} \cite{BKPW}
builds on an adaptively secure anonymous IBE, our (hierarchical)
IB-TDF is obtained from a {\it selectively} weakly attribute-hiding
 HPE system. This result is
somewhat incomparable with \cite{BKPW}:  on one hand, we start from
a more powerful primitive -- because predicate encryption implies
anonymous IBE -- but, on the other hand, we need a weaker security
level to begin with. Both (H)IB-TDF constructions rely on specific
algebraic properties in the underlying IBE/HPE and neither is
generic.  It would be interesting to see if a more general approach
exists for
building such functions. \\

\subsection{Discussion on the Implications}

  Combining our HIB-TDF with the theoretical implications of
our new security property, we obtain: (1) a semantically  secure
HIBE encryption scheme under adaptive adversaries, (2) the first
secure deterministic HIBE scheme\footnote{See \cite{XXZ} for a
recent and independent construction, in the (non-hierarchical) IBE case.}, (3)
the first HIBE scheme that (non-adaptively) hedges against bad
randomness, as advocated by Bellare {\it et al.} \cite{BBN09}. \\
\indent
  In the case of adaptive adversaries, these results only hold
for hierarchies of constant depth (said otherwise, we do not provide
full security). However,  using our definition of partial lossiness
or that of Bellare {\it et al.} \cite{BKPW}, this appears very
difficult to avoid. The reason is that both definitions seem
inherently bound to the partitioning paradigm. Namely, they assume
the existence of alternative public parameters, called {\it lossy
parameters}, where the identity space is partitioned into subsets of
injective  and lossy identities. The definition of \cite{BKPW}
intuitively captures that a fraction $\delta$ of identities are
lossy in the case of lossy parameters. In the hierarchical setting,
the analogy with HIBE schemes suggests that all ancestors of a lossy
identity   be lossy themselves.   Hence, unless one can make sure
that certain lossy identities only have lossy descendants, the
fraction
 $\delta$ seems  doomed to exponentially decline with the depth of the hierarchy.    \\
\indent Finally, due to the results of Canetti, Halevi and Katz
\cite{CHK03}, our construction also implies the first forward-secure
deterministic and hedged public-key encryption schemes (note that
selective security suffices to   give forward-secure public-key
cryptosystems). Although our scheme is not practical due to very
large ciphertexts and key sizes, it provides the first feasibility
results in these directions.


\section{Hierarchical Identity-Based (Lossy) Trapdoor Functions, and Applications}
\label{HIB-TDF-general}

 In this section we extend to the
hierarchical scenario the definitions for identity-based (lossy)
trapdoor functions given in \cite{BKPW}. \\ \vspace{-0.3 cm}

\noindent \textsc{Syntax.} A hierarchical identity-based trapdoor
function (HIB-TDF) is a tuple of efficient 
algorithms $\HF=(\HFSetup , \HFMKg , \HFKg, \HFDel , \HFEv , \HFInv
)$. The setup algorithm $\HFSetup$ takes as input a security
parameter $\varrho \in \mathbb{N}$, the (constant) number of levels
in the hierarchy $d\in\N$, the length of the identities $\mu\in
\mathsf{poly}(\varrho)$ and the length of the function inputs
$n\in\mathsf{poly}(\varrho)$, and outputs a set of global public
parameters $\mathsf{pms}$, which specifies an input space $\InpSp$,
an identity space $\IdSp$ and the necessary mathematical objects and
hash functions. The master key generation algorithm $\HFMKg$ takes
as input $\mathsf{pms}$ and outputs a master public key
$\mathsf{mpk}$ and a master secret key $\mathsf{msk}$. The key
generation algorithm $\HFKg$ takes as input $\mathsf{pms}$,
$\mathsf{msk}$ and a hierarchical identity
$(\id_1,\ldots,\id_{\ell})\in \IdSp$, for some $\ell \geq 1$ and
outputs a secret key $\mathbf{SK}_{(\id_1,\ldots,\id_\ell)}$. The
delegation algorithm $\HFDel$ takes as input $\mathsf{pms}$,
$\mathsf{msk}$, a hierarchical identity $(\id_1,\ldots,\id_{\ell})$,
a secret key $\mathbf{SK}_{(\id_1,\ldots,\id_\ell)}$ for it, and an
additional identity $\id_{\ell + 1}$; the output is a secret key
$\mathbf{SK}_{(\id_1,\ldots,\id_\ell,\id_{\ell + 1})}$ for the
hierarchical identity $(\id_1,\ldots,\id_\ell,\id_{\ell + 1})$ iff
$(\id_1,\ldots,\id_\ell,\id_{\ell + 1})\in \IdSp$. The evaluation
algorithm $\HFEv$ takes as input $\mathsf{pms}$, $\mathsf{msk}$, an
 identity $\id=(\id_1,\ldots,\id_{\ell})$ and a value $X
\in \InpSp$; the result of the evaluation is denoted as $C$.
Finally, the inversion algorithm $\HFInv$ takes as input
$\mathsf{pms}$, $\mathsf{msk}$, a hierarchical identity
$\id=(\id_1,\ldots,\id_{\ell})$, a secret key $\mathbf{SK}_{\id}$
for it and an evaluation $C$, and outputs a value $\tilde{X} \in \InpSp$. \\
 \indent
A HIB-TDF satisfies the property of correctness if
$$
\HFInv \big(\mathsf{pms},\mathsf{mpk},\id,\mathbf{SK}_{\id}, \HFEv
\big(\mathsf{pms},\mathsf{mpk},\id=(\id_1,\ldots,\id_{\ell}),X \big)
\big) \ =\ X ,
$$
for any $X \in \InpSp$, any
$\mathsf{pms},(\mathsf{mpk},\mathsf{msk})$ generated by  $\HFSetup$
and $\HFMKg$, any hierarchical identity $(\id_1,\ldots,\id_\ell)\in
\IdSp$ and any secret key $\mathbf{SK}_{(\id_1,\ldots,\id_\ell)}$
generated either by running
$\HFKg\big(\mathsf{pms},\mathsf{msk},(\id_1,\ldots,\id_{\ell})
\big)$ or by applying   the delegation algorithm
$\HFDel$ to secret keys of shorter hierarchical identities. \\
\indent Before formalizing the new definition of partial lossiness
for a HIB-TDF, let us recall the notion of \emph{lossiness}: if $f$
is a function with domain Dom$(f)$ and image Im$(f)= \{ f(x) \ :\ x
\in \textrm{Dom}(f)\}$, we say that $f$ is $\omega$-lossy if
$\lambda(f) \geq \omega$, where $\lambda(f) = \log
\frac{|\textrm{Dom}(f)|}{|\textrm{Im}(f)|}$. \\
\indent To properly define lossiness in the IB setting, it will also
be useful to consider extended HIB-TDFs, which differ from standard
HIB-TDFs in that, in the latter, the algorithm $\HFSetup$ specifies
in $\mathsf{pms}$ an auxiliary input space $\AxSp$, and $\HFMKg$
takes as additional auxiliary input $aux \in \AxSp$. Also, given
some HIB-TDF $\HF=(\HFSetup , \HFMKg , \HFKg, \HFDel , \HFEv ,
\HFInv )$, a \emph{sibling} for $\HF$ is an extended HIB-TDF
$\LHF=(\HFSetup , \LHFMKg , \LHFKg,$ $ \HFDel , \HFEv , \HFInv )$
whose  delegation, evaluation and inversion algorithms are those of
$\HF$, and where an auxiliary space $\mathsf{AuxSp}$ is contained in
$\mathsf{pms} \leftarrow \HFSetup(\varrho)$, so that $\IdSp  \subset
\mathsf{AuxSp}$.\\ \indent Looking ahead, we will define, as in
\cite{BKPW}, two different experiments: one corresponding to the
standard setup and one corresponding to the lossy setup, in one of
them the experiment will interact with a standard HIB-TDF, in the
other one with a sibling in which some identities lead to lossy
evaluation functions. The notion of extended HIB-TDF will serve to
construct both of these functions as an extended HIB-TDF but with
different auxiliary inputs $\vec{y}^{(0)}, \vec{y}^{(1)}$.

\subsection{A New Security Definition for HIB-TDFs}\label{sec:new_sec_def_HIBTDF}

The basic security property of a trapdoor function is
\emph{one-wayness}, which  means that the function is hard to invert
without the suitable secret key.  In the identity-based setting,
one-wayness is required to hold even when the adversary has oracle
access to the secret keys for some identities. \emph{Partial
lossiness} for identity-based trapdoor functions was introduced in
\cite{BKPW}, where it was  proved to imply one-wayness. Roughly
speaking, partial lossiness requires that the weighted difference of
the probability that any adversary outputs $1$ in the lossy or in
the real setup is negligible. For the selective case, the weights
can simply be set to $1$ and it can be proved that an IB-TDF
satisfying their notion of partial lossiness in the selective
scenario can be used to build: (1) identity-based encryption (IBE)
schemes with selective IND-CPA security, (2) selectively secure
deterministic IBE schemes, (3) selectively secure hedged IBE
schemes. However, these results are not known to be true in the
adaptive setting, in particular, the definition is not even known to
yield an IND-ID-CPA secure encryption scheme. \\
\indent To address this question, we propose  an alternative
definition for the partial lossiness of (hierarchical)
identity-based trapdoor functions --- in particular, they also
result in a new definition when the hierarchy depth is equal to $1$,
the case considered by Bellare \textit{et al.} \cite{BKPW}.  We will
show that a HIB-TDF satisfying this new definition gives a secure
construction of the same primitives we mentioned for the selective
case.\\ \indent
As in \cite{BKPW}, we define two different experiments, a lossy
experiment and a real experiment. For any adversary $\A$ against a
HIB-TDF, and any $\zeta \in (0,1)$, the
$\REAL_{\HF,\LHF,\mathcal{P},\omega,\zeta}^\A$ experiment and the
$\LOSSY_{\HF,\LHF,\mathcal{P},\omega,\zeta}^\A$ experiment are
parameterized by the security parameter $\varrho$ (which we will
usually omit in the notation) and values
$\zeta(\varrho),\omega(\varrho)$. The value $\zeta$ will be
important to show that our construction implies other cryptographic
primitives. Intuitively,  $\zeta$ will be the advantage of an
adversary against a cryptographic scheme the security of which is
reduced to the security of the HIB-TDF. The experiment also takes as
input the specification of some efficient algorithm $\mathcal{P}$
which takes as input
$\zeta,\mathsf{pms},\mathsf{mpk}_1,\mathsf{msk}_1,$ $IS,$
$\id^\star$, and outputs a bit $d_2$. This procedure $\mathcal{P}$
can be, in general, any probabilistic and polynomial-time algorithm.
 In the security analysis of the selectively secure version of our
HIB-TDF, $\mathcal{P}$ will be the trivial algorithm that always
outputs $d_2=1$. In other cases, $\mathcal{P}$ could be a more
complicated algorithm; for instance, in order to prove the security
of the adaptive-id version of our HIB-TDF, we will take as
pre-output stage $\mathcal{P}$ Waters' artificial abort step.
Actually, procedure $\mathcal{P}$ is a way to relax the security
requirements in Denition \ref{HIB-TDF-sec-def}, in order to allow
the possibility of building cryptographic schemes from secure
HIB-TDFs, in a black-box way.

Finally, the value $\omega$ is
related to the lossiness of the trapdoor function. To simplify
notation, we will simply write $\REAL$ instead of
$\REAL_{\HF,\LHF,\mathcal{P},\omega,\zeta}^\A$ and $\LOSSY$ instead
of $\LOSSY_{\HF,\LHF,\mathcal{P},\omega,\zeta}^\A$.

%
%

For compactness, we present the two experiments as a single
experiment depending of a bit $\beta$: the  challenger
$\mathcal{C}$, who interacts with the adversary $\A$, runs either
$\REAL$ if $\beta=0$ or $\LOSSY$ if $\beta=1$. Also, some
instructions of both experiments depend on whether \emph{selective}
 or \emph{adaptive} security is being considered. We say
that   a hierarchical identity $\id = (\id_1,\ldots,\id_{\ell})$ is
a \emph{prefix} of another one $\id^\star
=(\id^\star_1,\ldots,\id^\star_{\ell^\star})$ if $\ell \leq
\ell^\star$ and $\id_i = \id^\star_i$ for every $i =1,\ldots,\ell$.
We denote it by $\id\leq\id^\star$.

\begin{itemize}
\item[0.] $\mathcal{C}$ chooses  global
parameters $\mathsf{pms}$ by running $\HFSetup$. The parameters
$\mathsf{pms}$ are given to $\A$, who replies by choosing a
hierarchical identity $\id^\dagger
=(\id_1^\dagger,\ldots,\id_{\ell^\dagger}^\dagger)$, for some
$\ell^\dagger \leq d$.

\item[1.] $\mathcal{C}$ runs $(\mathsf{mpk}_0,\mathsf{msk}_0) \leftarrow \HFMKg(\mathsf{pms})$ and
 $(\mathsf{mpk}_1,\mathsf{msk}_1) \leftarrow \LHFMKg(\mathsf{pms},aux=\id^\dagger)$. The adversary $\A$ receives $\mathsf{mpk}_\beta$ and
 lists $IS \leftarrow \emptyset,\ QS\leftarrow \emptyset$ are initialized.

\item[2.]   $\A$ can make adaptive queries for hierarchical identities $\id = (\id_1,\ldots,\id_{\ell})$ and identities $\id_{\ell+1}$.
\\ \vspace{-0.3 cm}
\begin{itemize}
\item[-] {\bf Create-key}: $\A$ provides $\id$
and $\mathcal{C}$ creates a private key $SK_\id$. If $\beta=0$,
$SK_\id$ is created by running
$\HFKg(\mathsf{pms},\mathsf{msk}_0,\id )$. If $\beta=1$, it is
created by running $\LHFKg(\mathsf{pms},\mathsf{msk}_1,\id )$. The
list $QS$ is updated as $QS=QS\cup \{\id\}$.
\item[-] {\bf Create-delegated-key}: $\A$ provides $\id=(\id_1,\ldots,\id_{\ell})$ and $\id_{\ell+1}$ such that $\id\in QS$.
The challenger $\mathcal{C}$ then computes $SK_{\id'}$ for
$\id'=(\id_1,\dots,\id_{\ell+1})$ by running the delegation
algorithm
$\HFDel\big(\mathsf{pms},\mathsf{mpk}_\beta,SK_\id,\id_{\ell+1}\big)$.
The list $QS$ is updated as $QS=QS\cup \{\id'\}$.
\item[-] {\bf Reveal-key}: $\A$ provides $\id$ with the
restriction that if $\A$ is selective, then
$\id\not\leq\id^\dagger$. $\mathcal{C}$ returns $\perp$ if
$\id\not\in QS$. Otherwise, $SK_{\id}$ is returned to $\A$ and the
list $IS$ is updated   as $IS=IS
\cup \{\id\}$.\\
\vspace{-0.3 cm}
\end{itemize}

\item[3.] The adversary $\A$ outputs a hierarchical identity
$\id^\star=(\id_1^\star,\ldots,\id_{\ell^\star}^\star)$ and a   bit
$d_\A \in \{0,1\}$. If $\A$ is selective, then
$\id^\star=\id^\dagger$. In the adaptive case, no element of $IS$
can be a prefix of $\id^\star$. Let $d_1$ be the bit
$d_1:=\big(\forall\ \id\in IS, \lambda\left( \HFEv
(\mathsf{pms},\mathsf{mpk}_1,\id,\cdot ) \right)=0 \big)\wedge\big(
\lambda\left( \HFEv (\mathsf{pms},\mathsf{mpk}_1,\id^\star,\cdot )
\right)\geq\omega\big) $.

\item[4.] $\mathcal{C}$ sets $d_2 $ to be the output of the pre-output stage $\mathcal{P}$ with input
$\zeta,\mathsf{pms},\mathsf{mpk}_1,\mathsf{msk}_1,$ $IS,$
$\id^\star$.

\item[5.] The final output of the experiment consists of $\{d_\A,\ d_{\neg abort}^\A\}$, where $d_{\neg abort}^\A=d_1\wedge d_2 \in \{0,1\}$.
\end{itemize}
For notational convenience, from now on, let us define
$d_{exp}^\A=d_\A\wedge d_{\neg abort}^\A$.

Note that, in the lossy experiment, some identities may lead to
lossy functions, which can be detected by $\A$ if it queries the
secret key for such an identity. This causes an asymmetry when
comparing the real and the lossy experiments. For this reason,
Bellare \textit{et al.} defined the advantage of a distinguisher
among the lossy and real experiments as the \emph{weighted}
difference of the probability of outputting $1$ in the real case
minus the same probability in the lossy case. Our solution is
different: we force the experiment to adopt the same behavior in the
real and lossy settings: if a query would force the $\LOSSY$
experiment to set the bits $d_1$ or $d_2$ to 0 (and, as a
consequence, setting $d_{exp}^\A=0$), then it also forces the
$\REAL$ experiment to set $d_1$ or $d_2$ to $0$, respectively.
Therefore, the difference of probabilities in our definition (see
condition (i) below) is not weighted.

\begin{definition}\label{HIB-TDF-sec-def}
A HIB-TDF is $(\omega,\delta)$-partially lossy if it admits a sibling and an efficient pre-output stage $\mathcal{P}$ such that for all PPT adversaries $\A$ and for all non-negligible $\zeta$, there exist two non-negligible values $\epsilon_1,\epsilon_2$  with $\delta=\epsilon_1 \epsilon_2$ such that the following three conditions hold:
\begin{itemize}
\item[(i)] the following advantage function is negligible in the security parameter $\varrho$:
\begin{eqnarray}\label{cond1lossy}
\mathbf{Adv}_{\HF,\LHF,\mathcal{P},\omega,\zeta}^{\mathrm{lossy}}(\A)= |\Pr[d_{exp}^\A=1 |\ \REAL]\ -\
\Pr[d_{exp}^\A=1 |\ \LOSSY]|
\end{eqnarray}
\medskip
\item[(ii)] $\Pr[d_{\neg abort}^{\mathcal{A}}=1\ |\
\REAL] \geq \epsilon_1.$
\medskip
\item[(iii)] if $\A$ is such that $\Pr[d_\A=1\ |\ \REAL]\ -\ \frac{1}{2}>\zeta$, then
\begin{equation}\label{cond3lossy}
\Pr[d_{\mathcal{A}}=1\ |\ \REAL \wedge d_{\neg
abort}^{\mathcal{A}}=1 ] -\ \frac{1}{2} >\epsilon_2 \cdot \zeta,
\end{equation}
where $\delta$ may be a function of $q$ the maximal number of secret key queries of $\A$.
\end{itemize}

\end{definition}

As   said above, condition (i) is  a simple modification of the
definition of partial lossiness of Bellare {\it et al.} \cite{BKPW}.
We add condition (ii) to rule out some cases in which the definition
would be trivial to satisfy, like the case where the procedure
$\mathcal{P}$ aborts with overwhelming probability or the case where
the sibling admits only lossy identities:   in any of these
scenarios, we would have $d_{\neg abort}=1$ with negligible
probability, which would
  render the scheme useless with the sole condition
(i). Then, we add a third condition (iii) which allows reducing
HIB-TDFs to other primitives. Roughly speaking, this condition
guarantees that the probability of aborting is somewhat independent
of the behavior of any computationally bounded adversary.
Interestingly, it is possible to prove (by proceeding exactly as in
the proof of our HIB-TDF) that the pairing-based IB-TDF described by
Bellare {\it et al.} \cite{BKPW} satisfies our new partial-lossiness
definition.

\subsection{Implications of Lossy (H)IB-TDFs: the Example of  (H)IBE}\label{sec:app_IND_CPA}

Using the same argument as in \cite{BKPW}, it is quite easy to prove
that a HIB-TDF which enjoys the new version of the partial lossiness
property is already one-way, in both the selective and adaptive
settings.
In this section we prove that a HIB-TDF which satisfies our new
security definition can be used to build other primitives in the
hierarchical identity-based security, with security against adaptive
adversaries. We detail the example of hierarchical identity-based
encryption (HIBE) with IND-CPA security\footnote{The cases of
deterministic HIBE and hedged HIBE are discussed in Appendices
\ref{deterministic-scary-part} and \ref{hedged-appendix}.}. The construction is the
direct adaptation of the Peikert-Waters construction \cite{PW08} in
the public-key setting.

Let $\HF$ be a HIB-TDF with message space
$\{0,1\}^n$ and lossiness $\omega$, and $\mathcal{H}$ a family of
pairwise independent hash functions from $\{0,1\}^n$ to $\{0,1\}^l$
where $l\leq \omega-2\lg(1/\epsilon_{LHL})$ for some negligible
$\epsilon_{LHL}$. The HIBE scheme has message space $\{0,1\}^l$. Its setup, key generation and key delegation algorithms are basically the same ones as those
for $\HF$, the rest are as follows:
\begin{center}
  \begin{tabular}{ l | c | c| r }
   \textbf{MKGen}($\mathsf{pms}$)  &   \textbf{Enc}($\mathsf{pms}, \mathsf{mpk},m,\id$)  &   \textbf{Dec}($\mathsf{pms},  \mathsf{mpk},\mathbf{SK}_{\id}, C, \id$)   \\ \hline
    $(\mathsf{mpk}',\mathsf{msk})  \leftarrow \HFMKg(1^k)~$ & $x\gets\{0,1\}^n$ & $~x = \HFInv(\mathsf{pms},  \mathsf{mpk},\mathbf{SK}_{\id}, c_1, \id)$  \\
   $h\gets \mathcal{H}$ & $~c_1=\HFEv(\mathsf{pms},\mathsf{mpk},\id,x)~$ & $m= c_2 \oplus h(x)$ \\
        $\mathsf{mpk}=(\mathsf{mpk}',h)$   &$ c_2=h(x)\oplus m$& Return $m$ \\
         Return $\mathsf{mpk}$   &  Return $C=(c_1,c_2)$  &
            \end{tabular}
\end{center}
We prove the following theorem.
\begin{theorem}\label{cpa-construction-secure}If $\HF$ is $(\omega,\delta)$-partially lossy for some non-negligible value of $\delta$, then the HIBE scheme $\Pi$ described is IND-ID-CPA secure. In particular, for every IND-ID-CPA adversary $\B$ against $\Pi$ there exists a PPT adversary $\A$ against $\HF$  such that
$$\mathbf{Adv}_{\HF,\LHF,\mathcal{P},\omega,\zeta}^{\mathrm{lossy}}(\A) \geq \frac{2 }{3} \cdot \delta  \cdot  \mathbf{Adv}^{\mathrm{ind-id-cpa}}(\B)- \nu(\varrho)$$
for some  negligible  function $\nu$; both adversaries $\A$ and $\B$
run in comparable times.
\end{theorem}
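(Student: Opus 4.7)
The plan is a Peikert--Waters-style reduction, carefully threaded through the three clauses of Definition~\ref{HIB-TDF-sec-def}. Given an IND-ID-CPA adversary $\B$ with advantage $\zeta_\B := \mathbf{Adv}^{\mathrm{ind-id-cpa}}(\B)$, I build a HIB-TDF adversary $\A$ as follows. On input $\mathsf{pms}$, $\A$ samples $h$ from $\mathcal{H}$ and commits to the hierarchical identity $\id^\dagger$ that $\B$ would target (verbatim in the selective case; through whatever guessing mechanism the pre-output stage $\mathcal{P}$ encodes in the adaptive case). After receiving $\mathsf{mpk}_\beta$, $\A$ hands $(\mathsf{mpk}_\beta, h)$ to $\B$ as the HIBE master public key, and routes every key-extraction query from $\B$ through its own Create-key, Create-delegated-key, and Reveal-key oracles. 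When $\B$ produces a challenge $(\id^\star, m_0, m_1)$, $\A$ draws $x \in \{0,1\}^n$ and $b \in \{0,1\}$ uniformly, computes $c_1 = \HFEv(\mathsf{pms}, \mathsf{mpk}_\beta, \id^\star, x)$ and $c_2 = h(x) \oplus m_b$, hands $(c_1,c_2)$ back, and on receiving $\B$'s guess $b'$ outputs $(\id^\star, d_\A := [b' = b])$.

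In the $\REAL$ experiment $\A$ perfectly simulates the IND-ID-CPA game, so $\Pr[d_\A = 1 \mid \REAL] = \tfrac{1}{2} + \zeta_\B$. Fixing any non-negligible $\zeta$ strictly below $\zeta_\B$ (the constant $\tfrac{2}{3}$ falls out of setting $\zeta = \tfrac{2}{3}\zeta_\B$), condition~(iii) yields $\Pr[d_\A = 1 \mid \REAL \wedge d_{\neg abort}^\A = 1] > \tfrac{1}{2} + \epsilon_2 \zeta$, and condition~(ii) gives $\Pr[d_{\neg abort}^\A = 1 \mid \REAL] \geq \epsilon_1$. In the $\LOSSY$ experiment, conditioning on $d_{\neg abort}^\A = 1$ forces $d_1 = 1$, which means that $\HFEv(\mathsf{pms}, \mathsf{mpk}_1, \id^\star, \cdot)$ is $\omega$-lossy (and every revealed identity is injective, so the simulation of $\B$'s key queries is consistent on both sides). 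The conditional min-entropy of $x$ given $c_1$ is therefore at least $\omega$, and the Leftover Hash Lemma combined with $l \leq \omega - 2\lg(1/\epsilon_{LHL})$ makes $h(x)$ look $\epsilon_{LHL}$-close to uniform, so $c_2$ statistically hides $m_b$ and $\Pr[d_\A = 1 \mid \LOSSY \wedge d_{\neg abort}^\A = 1] \leq \tfrac{1}{2} + \epsilon_{LHL}$.

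Writing $p_* := \Pr[d_{\neg abort}^\A = 1 \mid *]$ and $q_* := \Pr[d_\A = 1 \mid * \wedge d_{\neg abort}^\A = 1]$ for $* \in \{\REAL, \LOSSY\}$, the advantage expands as $\mathbf{Adv}^{\mathrm{lossy}}(\A) \geq p_\REAL q_\REAL - p_\LOSSY q_\LOSSY = p_\REAL (q_\REAL - q_\LOSSY) - (p_\LOSSY - p_\REAL)\,q_\LOSSY$. The first summand is at least $\epsilon_1(\epsilon_2\zeta - \epsilon_{LHL}) = \delta\zeta - \epsilon_1\epsilon_{LHL}$. For the second, I apply condition~(i) a second time to a twin adversary $\A'$ that runs $\A$ verbatim but always outputs $d_{\A'} = 1$: since $d_{exp}^{\A'} = d_{\neg abort}^{\A'}$, we get $|p_\REAL - p_\LOSSY| = \mathbf{Adv}^{\mathrm{lossy}}(\A')$, which is negligible by the partial-lossiness hypothesis. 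Absorbing all slack into $\nu(\varrho)$ yields the claimed bound. The delicate point — and the main obstacle — is not the leftover-hash step, which is classical, but decoupling the non-abort event from the guess event across the two experiments: condition~(i) alone controls only the joint $d_\A \wedge d_{\neg abort}$, so the auxiliary always-$1$ twin is needed to transfer the bound onto $d_{\neg abort}$ alone. This is precisely why clauses~(ii) and~(iii) were added to Definition~\ref{HIB-TDF-sec-def}: without~(ii) the reduction would see essentially no non-aborting transcripts in $\REAL$, and without~(iii) the conditional advantage of $\A$ given non-abort could not be related back to $\B$'s IND-ID-CPA advantage.
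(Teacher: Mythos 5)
Your proposal is correct and mirrors the paper's own proof: the same reduction from $\B$ to $\A$, the same use of conditions (ii) and (iii) to lower-bound $\Pr[d_{\neg abort}^\A=1\mid\REAL]\cdot\bigl(\Pr[d_\A=1\mid\REAL\wedge d_{\neg abort}^\A=1]-\tfrac{1}{2}\bigr)$ by $\delta\zeta$, the same Leftover-Hash-Lemma step in $\LOSSY$ conditioned on non-abort, and the same always-output-$1$ twin adversary to control $|\Pr[d_{\neg abort}^\A=1\mid\REAL]-\Pr[d_{\neg abort}^\A=1\mid\LOSSY]|$ (which the paper relegates to a footnote). The only divergence is bookkeeping: the paper obtains the factor $\tfrac{2}{3}$ by letting $\mathbf{Adv}_{\HF,\LHF,\mathcal{P},\omega,\zeta}^{\mathrm{lossy}}(\A)$ appear on both sides of the final inequality and solving for it, whereas you obtain it by instantiating the experiment with $\zeta=\tfrac{2}{3}\zeta_{\B}$ and bounding the cross term directly by the twin's negligible advantage --- an equally valid, and at that one point arguably tidier, route to the same bound.
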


\begin{proof}
Let us assume that an adversary $\B$  has advantage at least $\zeta$
in breaking the IND-ID-CPA security of the HIBE scheme $\Pi$, for
some non-negligible $\zeta$.  We
  build an adversary $\A$ that breaks the condition (i) of
Definition \ref{HIB-TDF-sec-def} assuming that conditions (ii) and
(iii) are satisfied.
Our adversary $\A$, who interacts with a challenger that runs either
the experiment $\REAL$ or the experiment $\LOSSY$, proceeds to
simulate the challenger in the IND-ID-CPA game with $\B$ as follows.

$\A$ forwards an identity $\id^\dagger$ to its challenger, which is some random identity in the adaptive case or corresponds to the challenge identity chosen by
$\B$ in the selective case.  When the challenger
runs the setup and gives the output to $\A$, $\A$ forwards this
information to $\B$ together with a hash function $h \leftarrow \mathcal{H}$. When $\B$ asks for a secret key for a
hierarchical identity $\id$, $\A$ forwards the query to the
experiment and forwards the reply to $\B$. At some point, $\B$ outputs $(m_0, m_1,\id^\star)$, with $\id^\dagger=\id^*$ in the selective case.
Adversary $\A$ then forwards $\id^\star$ to its challenger, chooses $\gamma\gets\{0,1\}$ at random and encrypts $m_\gamma$ under the identity $\id^\star$.
After some more secret key queries, $\B$ outputs a guess $\gamma'$ and $\A$ outputs $d_\A=1$ if $\gamma=\gamma'$ and
$d_\A=0$ otherwise.


In the $\REAL$ setting, we will have  $$ \Pr[\gamma'=\gamma|\
\REAL]-\frac{1}{2}=\Pr[d_\A=1|\ \REAL]-\frac{1}{2}\geq \zeta,$$
since $\A$ perfectly simulated the IND-ID-CPA game with $\B$. This
inequality can be combined with conditions (ii) and (iii) of the
definition of $(\omega,\delta)$-partial lossiness (which we assume
to be satisfied by $\HF$), and we obtain
\begin{eqnarray}\label{telopeto}
\Pr[d_{\neg abort}^{\mathcal{A}}=1\ |\ \REAL] \cdot \left(\ \Pr[d_{\mathcal{A}}=1\ |\ \REAL \wedge d_{\neg abort}^{\mathcal{A}}=1 ]
-\ \dfrac{1}{2}\ \right)    >\epsilon_1 \epsilon_2 \zeta.
\end{eqnarray}
On the other hand, as proved in \cite{BKPW}, in the $\LOSSY$ setting
when $\id^\star$ is lossy, the advantage of $\B$ in guessing
$\gamma$ is negligible. Indeed, since we are using a pairwise
independent hash function,  the Leftover Hash Lemma \cite{HILL}
(more precisely, its variant proved in \cite{DRS04}) implies that
the distribution of $c_2$
  given $c_1$ is statistically close to the uniform
distribution. We thus have $\Pr[d_\A=1|\ \LOSSY\wedge d_{\neg
abort}^\A=1]\leq \dfrac{1}{2}+\epsilon_{LHL}$, for some negligible
function $\epsilon_{LHL}$. Since $d_{exp}^\A=d_{\neg abort}^\A\wedge
d_\A$, we find
\begin{eqnarray} \nonumber
\Pr[d_{exp}^\A=1\ |\ \LOSSY]&=& \Pr[d_\A=1\ |\ \LOSSY\wedge d_{\neg
abort}^\A=1]\Pr[d_{\neg abort}^\A=1\ |\ \LOSSY]\\
\nonumber & \leq& \big(\dfrac{1}{2} + \epsilon_{LHL} \big) \cdot
\Pr[d_{\neg abort}^\A=1\ |\ \LOSSY] \\ \label{ineq-1} &\leq&
\dfrac{1}{2} \cdot \big( \Pr[d_{\neg abort}^\A=1\ |\ \REAL] +
\mathbf{Adv}_{\HF,\LHF,\mathcal{P},\omega,\zeta}^{\mathrm{lossy}}(\A)
\big)+\nu,
\end{eqnarray}
for some negligible function $\nu \in \mathsf{negl}(\varrho)$.  The
last equality follows from the fact that we can assume that
$\Pr[d_{\neg abort}^\A=1|\ \LOSSY] - \Pr[d_{\neg abort}^\A=1|\
\REAL] \leq \mathbf{Adv}_{\HF,\LHF,\mathcal{P},\omega,\zeta}^{\mathrm{lossy}}(\A)$: otherwise, we can easily build a
distinguisher\footnote{This distinguisher $\A_1$ is obtained from
$\A$ by ignoring  $d_{\A} \in \{0,1\}$ and replacing it by a $1$, so
that $d_{\neg abort}^{\A}=d_{exp}^{\A}$.}  against condition (i) of
the partial lossiness definition. If we plug (\ref{ineq-1}) into the
definition of
$\mathbf{Adv}_{\HF,\LHF,\mathcal{P},\omega,\zeta}^{\mathrm{lossy}}(\A)$,
we obtain
\begin{eqnarray}  \nonumber
\mathbf{Adv}_{\HF,\LHF,\mathcal{P},\omega,\zeta}^{\mathrm{lossy}}(\A)&=&
\left|\Pr[d_{exp}^\A=1\ |\ \REAL]-\Pr[d_{exp}^\A=1\ |\
\LOSSY]\right| \\ \label{contrad} & \geq &\left| \Pr[d_{\neg
abort}^\A=1\ |\ \REAL] \cdot \left(\Pr[d_\A=1\ |\ \REAL \wedge
d_{\neg abort}^\A=1]-\dfrac{1}{2}\right)   \right| \\ \nonumber & &
\qquad - \frac{1}{2} \cdot
\mathbf{Adv}_{\HF,\LHF,\mathcal{P},\omega,\zeta}^{\mathrm{lossy}}(\A) -\nu,
\end{eqnarray}
so that there exists $\tilde{\nu} \in \mathsf{negl}(\varrho)$ such
that
\begin{eqnarray*}  \nonumber
\mathbf{Adv}_{\HF,\LHF,\mathcal{P},\omega,\zeta}^{\mathrm{lossy}}(\A)& \geq &
\frac{2}{3} \cdot
 \left| \Pr[d_{\neg abort}^\A=1\ |\ \REAL] \cdot \left(\Pr[d_\A=1\ |\
\REAL \wedge d_{\neg abort}^\A=1]-\dfrac{1}{2}\right)   \right|
-\tilde{\nu}.
\end{eqnarray*}
This means that
$\mathbf{Adv}_{\HF,\LHF,\mathcal{P},\omega,\zeta}^{\mathrm{lossy}}(\A)$ is
non-negligible since $\delta=\epsilon_1 \epsilon_2$ and the
right-hand-side member of the above expression is at least $(2/3)
\cdot\delta \cdot \zeta -\nu$. In other words, any adversary
guessing $\gamma'=\gamma$ with non-negligible advantage $\zeta$ in
$\mathsf{REAL}$ necessarily contradicts condition (i). \qed
\end{proof}

\section{Interlude: Hierarchical Predicate Encryption}\label{sec:background}

In this section we propose a new hierarchical predicate encryption scheme with the attribute-hiding property, which will be used as an ingredient to build, in the next section, a hierarchical identity-based (lossy) trapdoor function. The syntax and security model for hierarchical predicate encryption schemes are recalled in Appendix \ref{app:syntax-HIPE}.

\subsection{Some Complexity Assumptions}

We consider  groups $(\G,\Ghat,\G_T)$ of prime order $p$ for which
an asymmetric bilinear map $e:\G \times \Ghat \rightarrow \G_T$ is
efficiently computable.  We will assume that the DDH assumption
holds in both $\G$ and $\Ghat$, which implies that  no isomorphism
is efficiently computable  between $\G$ and $\Ghat$. The assumptions
that we need are sometimes somewhat stronger than DDH. However, they
have {\it constant} size ({\it i.e.}, we de not rely on $q$-type
assumptions) and were previously used in \cite{Duc10}.

\begin{description}
 \item[The Bilinear Diffie Hellman Assumption (BDH):] in  bilinear
  groups $(\G,\Ghat,\G_T)$ of prime order $p$,
  the
  distribution
$D_1 =\{(g,~g^a,~g^c,~\ghat,~\ghat^a,~\ghat^b, ~ e(g,\ghat)^{abc} )\
~|~
 a,b,c \sample \Z_p \},$ is computationally indistinguishable from
 $D_2  =\{ (g,~g^a,~g^c,~\ghat,~\ghat^a,~\ghat^b, ~ e(g,\ghat)^{z} )\
~|~
 a,b,c,z \sample \Z_p  \}.$
 \item[The $\mathcal{P}$-BDH$_1$ Assumption: ]
  in  asymmetric bilinear groups $(\G,\Ghat,\G_T)$ of prime order $p$,
  the
  distribution
   $ D_1=\{(g ,  g^b, g^{ab},g^c, \ghat,\ghat^a,\ghat^b,
 g^{abc }     )~ |~
a,b,c \sample \Z_p  \}$ is computationally   indistinguishable from
 $D_2 =\{ (g ,  g^b,  g^{ab}, g^c,  \ghat, \ghat^a, \ghat^b,
  g^z     ) ~ | ~
   a,b,c,z \sample \Z_p  \}$.
 \item[The  DDH$_2$ Assumption: ]
  in  asymmetric bilinear groups  $(\G,\Ghat,\G_T)$ of prime order
  $p$, the distribution $D_1 = \{(g ,  ~\ghat,~\ghat^a, ~\ghat^b,
 ~\ghat^{ab  }     )\ |~
a,b  \sample \Z_p  \}$ is computationally indistinguishable
  from the distribution $D_2  = \{ (g ,
~\ghat,~\ghat^a, ~ \ghat^b,
 ~\ghat^{z }     ) \ |~
  a,b ,z \sample \Z_p  \}$.
\end{description}

\subsection{A Selectively Secure Weakly Attribute-Hiding Hierarchical
Predicate Encryption Scheme}\label{sec:new-HIPE}

The construction is inspired by the Shi-Waters
 delegatable predicate encryption scheme \cite{ShiWa08} (at a high-level, it also bears similarities with the lattice-based scheme of \cite{AFV11}). However,
we have to turn it into a predicate encryption scheme for inner
product relations (like the one of Okamoto and Takashima
\cite{OT09}) instead of a hidden vector encryption \cite{BW07}.
Another  difficulty to solve is that we cannot use composite order
groups as in \cite{ShiWa08} because, in our  HIB-TDF of Section
\ref{hierarchical-ltdf}, one of the subgroups would eventually leak
information on the input in lossy mode (this is actually what
happened with our initial attempt). For this reason, we chose to
work with prime-order groups and used asymmetric pairing
configurations to anonymize ciphertexts. As a benefit, we obtain a
better efficiency than by using the techniques of \cite{BW07} by
reducing the number of pairing evaluations.

\begin{description}
\item[Setup$(\varrho,d,\mu)$:] given a security parameter $\varrho \in
\mathbb{N}$, the (constant) desired number of levels in the hierarchy $d\in\N$ and the desired length $\mu$ of the attribute vectors at
each level, choose asymmetric bilinear groups $(\G,\Ghat,\G_T)$ of
order $p$, where $p >2^\varrho$. Choose $g \sample \G$, $\ghat
\sample \Ghat$. Then, pick $\alpha,\alpha_v,\alpha_w \sample \Z_p^*$
and set $v=g^{\alpha_v}$, $\hat{v}=\hat{g}^{\alpha_v}$,
$w=g^{\alpha_w}$ and $\hat{w}=\ghat^{\alpha_w}$.  For
$i_1=1,\ldots,d$ and $i_2=0,\ldots
  ,\mu$, choose $\alpha_{i_1,i_2} \sample \Z_p^*$ and compute
 $ h_{i_1,i_2}=g^{\alpha_{i_1,i_2}}  \in \G $ and $ \hhat_{i_1,i_2}=\ghat^{\alpha_{i_1,i_2}}  \in \G $  .  The master   public
key is defined to be
\begin{eqnarray*}
 \mathsf{mpk} & := &  \Bigl( v ,~w, ~e(g,\hat{v})^{\alpha} , ~\{ h_{i_1,i_2}   \}_{i_1 \in \{1,\ldots,d\},
~ i_2 \in \{0,\ldots,
  \mu \}}
\Bigr)
\end{eqnarray*}
while the master secret key is
$\mathsf{msk}:=\big(\hat{g},\hat{g}^{\alpha},\hat{v} ,\hat{w} ,
 \{ \hat{h}_{i_1,i_2} \}_{i_1 \in \{1,\ldots,d\}, ~i_2 \in \{0,\ldots, \mu \}} \big)$.
\\ \vspace{-0.3 cm}

\item[Keygen$\big(\mathsf{msk},(\vec{X}_1,\ldots,\vec{X}_{\ell}) \big)$:] to generate a private key for
 vectors $(\vec{X}_1,\ldots,\vec{X}_{\ell})$ with $\ell \leq d$, parse  $\mathsf{msk}$ as $
\big(\ghat,\hat{v} ,\hat{w},
 \{ \hat{h}_{i_1,i_2} \}_{i_1 \in \{1,\ldots,d\}, ~i_2 \in \{0,\ldots,\mu \}} \big)$.
 For $i_1=1$ to $\ell$, parse  $\vec{X}_{i_1}$ as $(x_{i_1,1},\ldots,x_{i_1,\mu}) \in \Z_p^{\mu}$.
 Choose  $r_w \sample \Z_p^*$ and
$r_1,\ldots,r_{\ell}  \sample \Z_p^*$,
  for $i_1 \in \{ 1,\ldots,\ell\}$. Then, compute the decryption component
$SK_D=(D,D_w,  \{D_{i_1}\}_{i_1=1}^\ell)$ of the key as
$$
  D = \hat{g}^{\alpha} \cdot  \prod_{i_1=1}^{\ell} \big(   \prod_{ i_2 =1}^{\mu} \hat{h}_{i_1,i_2}^{x_{i_1,i_2}}
\big)^{r_{i_1} } \cdot \hat{w}^{r_{w}}   ,
\qquad
 D_w  =  \hat{v}^{r_{w}}   ,  \qquad
   D_{i_1} = \hat{v}^{r_{i_1}}.$$
  \vspace{-0.3 cm}
To define the elements of its delegation component $SK_{DL}$
$$
   \big(\{K_{j,k},L_{j}
,L_{j,k,i_1}, L_{w,j,k} \}_{j \in \{\ell+1,\ldots,d\}, ~k \in
\{1,\ldots,\mu\},~i_1 \in \{1,\ldots,\ell\} }
  \big)
$$

pick $s_{j}  \sample \Z_p^*$, $s_{j,k,i_1} \sample \Z_p^*$,
$s_{w,j,k} \sample \Z_p^*$  for $i_1 \in \{1,\ldots,\ell\},\ i_2 \in
\{1,\ldots,\mu\}$ and set:
 $$ K_{j,k}  =  \prod_{i_1=1}^{\ell} \big(   \prod_{ i_2 =1}^{\mu}
\hat{h}_{i_1,i_2}^{x_{i_1,i_2}} \big)^{s_{j,k,i_1} } \cdot
\hat{h}_{j,k}^{s_{j}} \cdot \hat{w}^{s_{w,j,k}},\ \ L_j
=\hat{v}^{s_j},\ \ L_{j,k,i_1} = \hat{v}^{s_{j,k,i_1}}, \ \
L_{w,j,k} = \hat{v}^{s_{w,j,k}}.$$

 Output the private key $SK_{(\vec{X}_1,\ldots,\vec{X}_\ell)}=\bigl(
 SK_D,SK_{DL}\big).$ \\ \vspace{-0.2 cm}
 \item[Delegate$\big(\mathsf{mpk},(\vec{X}_1,\ldots,\vec{X}_\ell),SK_{(\vec{X}_1,\ldots,\vec{X}_\ell)},\vec{X}_{\ell+1}\big)$:]
parse the   key $SK_{(\vec{X}_1,\ldots,\vec{X}_\ell)}$ as
$(SK_D,SK_{DL})$. Given,
 $\vec{X}_{\ell+1}=(x_{\ell+1,1},\ldots,x_{\ell+1,\mu}) \in
\Z_p^{\mu}$,  do the following. \\ \vspace{-0.3 cm}
\begin{itemize}
\item[1.] Randomize $SK_{DL}$  by raising all its component to some $z \sample \Z_p^*$. Call this new key $\widehat{SK}_{DL}$ and write its elements with a hat (e.g.,
$\widehat{K}_{j,k}=K_{j,k}^z$).
\item[2.] Compute a \emph{partial decryption key}
$$
K_{\ell+1} = \prod_{k=1}^{\mu} \widehat
K_{\ell+1,k}^{x_{\ell+1,k}}
  =   \prod_{i_1=1}^{\ell} (\prod_{i_2=1}^{\mu}
\hat{h}_{i_1,i_2}^{x_{i_1,i_2}})^{s_{\ell+1,i_1}}  \cdot
\big(\prod_{k=1}^{\mu} \hat{h}_{\ell+1,k}^{x_{\ell+1,k}} \big)^{
s_{\ell+1}} \cdot \hat{w}^{s_{w,\ell+1}} ,\ \
L_{\ell+1,\ell+1}=\widehat L_{\ell+1},$$
$$
 L_{\ell+1,i_1} = \prod_{k=1}^{\mu} \widehat L_{\ell+1,k,i_1}^{x_{\ell+1,k}} = \hat{v}^{s_{\ell+1,i_1}}\ \ \textrm{for }   ~i_1 \in \{1,\ldots,\ell\},\ \
  \ \   L_{w,\ell+1} = \prod_{k=1}^{\mu} \widehat
L_{w,\ell+1,k}^{x_{\ell+1,k}} = \hat{v}^{s_{w,\ell+1}}$$ where we
define the exponents $ s_{\ell+1,i_1}  = z \cdot \sum_{k=1}^{\mu}
s_{\ell+1,k,i_1} \cdot x_{\ell+1,k}$  for   $i_1 \in
\{1,\ldots,\ell\}$, and $s_{w,\ell+1} = z \cdot \sum_{k=1}^{\mu}
s_{w,\ell+1,k} \cdot x_{\ell+1,k}$. \\ \vspace{-0.3 cm}
\item[3.] \label{rerandjk} For all $j\in \{\ell+2,\ldots,d\}$, $k\in \{1,\ldots,\mu\}$, compute re-randomized
versions of the partial decryption key by raising the partial
decryption key to a random power $\tau_{j,k} \sample \Z_p^*$.
$$
K_{\ell+1}^{(j,k)}=K_{\ell+1}^{\tau_{j,k}},  \qquad
L_{w,\ell+1}^{(j,k)}  =   L_{w,\ell+1}^{\tau_{j,k}}, \qquad
 \{ L_{\ell+1,i_1}^{(j,k)} =  L_{\ell+1,i_1}^{\tau_{j,k}} \}_{ i_1=1}^{\ell+1}.
$$
These values will be used to compute the delegation component of the
 new key at step 5. \\ \vspace{-0.3 cm}

\item[4.] Compute a decryption component $SK_D'=(D',D_w', \{D_{i_1}'\}_{i_1=1}^{\ell+1})$  for the delegated key by setting
  $D'=D \cdot K_{\ell+1}  $,  $D_w'=D_w \cdot L_{w,\ell+1} $. Then, define $D_{\ell+1}'=L_{\ell+1,\ell+1} $
and, for each $i_1 \in \{1,\ldots,\ell\}$, set $D_{i_1}'=D_{i_1}
\cdot L_{\ell+1,i_1} $.  \\
\vspace{-0.3 cm}

\item[5.] Compute a delegation component for the delegated key. For each $j \in \{\ell+2,\ldots,d\}$, set
 $L_j'=\widehat L_j  $. Then, for $k=1$ to $\mu$ and $i_1=1$ to $\ell+1$, set 
 \begin{eqnarray*}
K_{j,k}'  =  \widehat K_{j,k} \cdot K_{\ell+1}^{(j,k)} , \quad  ~
L_{w,j,k}' = \widehat L_{w,j,k} \cdot L_{w,\ell+1}^{(j,k)}  \quad ~
L_{j,k,i_1}'  =  \widehat L_{j,k,i_1} \cdot L_{\ell+1,i_1}^{(j,k)} ,
\end{eqnarray*}
where   $\widehat L_{j,k,\ell+1}=1$ for all $j,k$. The new
delegation component $SK_{DL}'$ is
$$
   \big(   \{K_{j,k}' , L_{j}', L_{j,k,i_1}',  L_{w,j,k}'\}_{j \in \{\ell+2,\ldots,d\}, ~k
\in \{1,\ldots,\mu\},~i_1 \in \{1,\ldots,\ell\} }
    \big)
$$
Return the delegated private key
$SK_{(\vec{X}_1,\ldots,\vec{X}_{\ell+1})}=(SK_D',SK_{DL}')$.
\end{itemize}

\item[Encrypt$\big(\mathsf{mpk},(\vec{Y}_1,\ldots,\vec{Y}_\kappa),M \big)$:] given $\mathsf{mpk}$,  a plaintext $M \in \G_T$ as well as a hierarchy of  vectors  $
\vec{Y}_1=(y_{1,1},\ldots,y_{1,\mu})$, \ldots,
$\vec{Y}_\kappa=(y_{\kappa,1},\ldots,y_{\kappa,\mu})
 $, choose $s \sample \Z_p^*$ and compute
\begin{eqnarray*}
C_0=M \cdot e(g,\hat{v})^{\alpha \cdot s}, \quad ~~~    C_v = v^{s}
, \quad ~~~ C_w= w^s  , \quad~~
   \{ C_{i_1,i_2}= \bigl( h_{i_1,0}^{
y_{i_1,i_2}} \cdot h_{i_1,i_2} \bigr)^s   \}_{ i_1 \in
\{1,\ldots,\kappa \},~i_2\in \{1,\ldots,\mu \}},
\end{eqnarray*}
The ciphertext is
$C=\bigl(C_0,C_v,C_{w}, \{C_{i_1,i_2}  \}_{i_1 \in
\{1,\ldots,\kappa \},~i_2 \in \{1,\ldots,\mu \} }  \bigr).$ \\
 \vspace{-0.3 cm}
\item[Decrypt$\big(\mathsf{mpk},(\vec{X}_1,\ldots,\vec{X}_\ell),SK_{(\vec{X}_1,\ldots,\vec{X}_\ell)},C\big)$:]
parse the private key $SK_{(\vec{X}_1,\ldots,\vec{X}_\ell)}$ as
$\bigl( SK_D,SK_{DL}\big)$, where $SK_D= (D,D_w,
\{D_{i_1}\}_{i_1=1}^\ell) $ and  the ciphertext $C$ as
$\bigl(C_0,C_v,C_{w}, \{C_{i_1,i_2} \}_{i_1 \in \{1,\ldots,\kappa
\},~i_2 \in \{1,\ldots,\mu \} } \bigr)$. 
\begin{itemize}
\item[1.] For each $i_1 \in \{1,\ldots,\ell \}$, compute
 $C_{i_1}=    \prod_{i_2=1}^{\mu}  C_{i_1,i_2}^{x_{i_1,i_2}}=\big(      h_{i_1,0}^{    \vec{X}_{i_1} \cdot \vec{Y}_{i_1}} \cdot \prod_{i_2=1}^{\mu} h_{i_1,i_2}^{x_{i_1,i_2}}
\big)^s  .$
\item[2.]  Return $M$ if
 $   M =  C_0 \cdot  e(C_v,D) ^{-1} \cdot e(C_{w},D_{w})  \cdot  \prod_{i_1=1}^{\ell} e(C_{i_1},D_{i_1})   $
 is in the appropriate subspace\footnote{As in \cite{BW07,KSW08}, the plaintext space is restricted to have a size much smaller than $|\G_T|$ to
 make sure that the decryption algorithm returns $\perp$ if an unauthorized key is used to decrypt. } of $\G_T$. Otherwise, return $\perp$.
\end{itemize}
\end{description}

We show that our scheme is correct in Appendix
\ref{proof-correctness}. Our HIB-TDF uses the predicate-only variant
of the above scheme,
 obtained by  discarding the ciphertext component $C_0$
(which contains the payload) and the factor $\ghat^{\alpha}$ from
the private key component $D$.

%
%
%

We remark that the number of pairing evaluations only depends on the
depth $\ell$ of the predicate $(\vec{X}_1,\ldots,\vec{X}_{\ell})$
encoded in the private key and not on the dimension $\mu $ of
vectors at each level. Except a recent fully secure construction
\cite{OT11}, all previous schemes required $O(\ell \cdot \mu)$
pairing evaluations to decrypt. In our HIB-TDF of Section
\ref{hierarchical-ltdf}, this will make it possible to invert a
function admitting $n$-bit inputs by computing $O(\ell \cdot n)$
pairings, instead of  $O(\ell \cdot n \cdot \mu)$.


%
%


%
%
%

 The new HPE  scheme is selectively weakly attribute-hiding under the
BDH, $\mathcal{P}$-BDH$_1$ and DDH$_2$ assumptions, as established
by Theorem \ref{thm-HIPE}. The security of its predicate-only
variant (which is the one used as a key ingredient in the design of
our HIB-TDF) relies only on the latter two assumptions.

\begin{theorem} \label{thm-HIPE}
The HPE scheme    is selectively weakly attribute-hiding (in the
sense of Definition \ref{sec-def} in Appendix \ref{app:syntax-HIPE})
if the BDH, $\mathcal{P}$-BDH$_1$ and DDH$_2$ assumptions hold in
$(\G,\Ghat,\G_T)$. \textnormal{(The proof is given in Appendix
\ref{proof-HPE}).}
\end{theorem}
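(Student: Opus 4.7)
The plan is to establish Theorem~\ref{thm-HIPE} by a sequence of hybrid games transforming the real selective weakly attribute-hiding experiment (where the challenge ciphertext encrypts $M_\beta$ under the target hierarchy $(\vec{Y}_1^\star,\ldots,\vec{Y}_\kappa^\star)$ committed upfront by $\A$) into a terminal game in which the challenge ciphertext is distributed independently of both the bit $\beta$ and of $\vec{Y}^\star$. The selective nature of the model is crucial: once the reduction receives $\vec{Y}^\star$ before publishing $\mathsf{mpk}$, it can program the exponents $\alpha_{i_1,i_2}$ so that every factor $h_{i_1,0}^{y_{i_1,i_2}^\star}\cdot h_{i_1,i_2}$ takes a simulatable form, in the spirit of the Katz--Sahai--Waters programming trick for inner-product predicates.

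First, a BDH-based hop replaces the blinding term $e(g,\vhat)^{\alpha s}$ in $C_0$ with a uniformly random element of $\G_T$, thereby making $C_0$ statistically independent of $\beta$; the BDH exponents $(a,b,c)$ are implicitly embedded as $(\alpha,\alpha_v,s)$ and the remaining ciphertext components are assembled using the programmed $\{\alpha_{i_1,i_2}\}$. Next, a block of $\kappa$ successive hops, one per level $i_1=1,\ldots,\kappa$, reduces to $\mathcal{P}$-BDH$_1$ in order to turn each block $\{C_{i_1,i_2}\}_{i_2=1}^{\mu}$ into an independent tuple of uniform $\G$-elements: in the ``real'' branch these components look like an honest encryption under $\vec{Y}_{i_1}^\star$, while in the ``random'' branch they are information-theoretically independent of $\vec{Y}_{i_1}^\star$. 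After all $\kappa$ hops, the tuple $(C_0,C_v,C_w,\{C_{i_1,i_2}\})$ is distributed independently of $\beta$ and $\vec{Y}^\star$, yielding zero advantage to $\A$ in the terminal game. The DDH$_2$ assumption in $\Ghat$ is invoked along the way to re-randomize the $\vhat$- and $\what$-components of simulated secret keys and of delegated keys, so that their joint distribution matches that of honestly generated ones.

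The main obstacle will be faithfully simulating the hierarchical private keys in each intermediate hybrid. By the definition of weak attribute-hiding, any queried $(\vec{X}_1,\ldots,\vec{X}_\ell)$ either has depth $\ell>\kappa$ or satisfies $\vec{X}_{i_1^\star}\cdot \vec{Y}_{i_1^\star}^\star\neq 0$ at some level $i_1^\star\leq\min(\ell,\kappa)$. In the latter case the reduction exploits the invertibility of that inner product to choose $r_{i_1^\star}$ so that the unknown quantity embedded in $h_{i_1^\star,0}$ cancels out of the decryption element $D=\ghat^{\alpha}\cdot \prod_{i_1}\bigl(\prod_{i_2}\hhat_{i_1,i_2}^{x_{i_1,i_2}}\bigr)^{r_{i_1}}\cdot \what^{r_w}$, yielding a correctly distributed $SK_D$. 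The delegation component $SK_{DL}$ must then be produced in a jointly consistent way so that any subsequent invocation of \textsf{Delegate} by $\A$ still yields properly distributed descendant keys; the multi-level randomization through the $s_j$, $s_{j,k,i_1}$ and $s_{w,j,k}$ exponents, combined with a DDH$_2$-based re-randomization in $\Ghat$, is precisely what makes this consistent simulation feasible. Once the soundness of these simulations is established in each hop, summing the hybrid losses over the BDH transition, the $\kappa$ many $\mathcal{P}$-BDH$_1$ transitions, and the DDH$_2$-based re-randomization yields the desired security bound. Note that for the predicate-only variant used in the HIB-TDF, the term $\ghat^{\alpha}$ disappears from the keys and the payload-hiding component $C_0$ is discarded, so the BDH step becomes unnecessary and the proof goes through under $\mathcal{P}$-BDH$_1$ and DDH$_2$ alone.
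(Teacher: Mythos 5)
Your overall architecture (a BDH hop to randomize $C_0$, then $\mathcal{P}$-BDH$_1$ hops to randomize the attribute-dependent components, with DDH$_2$ used to argue that delegated keys are distributed like fresh \textbf{Keygen} outputs) matches the paper's game sequence, and your treatment of the BDH step---programming the $h_{i_1,i_2}$ from the committed $\vec{Y}^\star$ and cancelling the unknown $\ghat^{\alpha}=\ghat^{ab}$ inside $D$ by shifting $r_{i_1^\star}$ at a level where $\vec{X}_{i_1^\star}\cdot\vec{Y}^\star_{i_1^\star}\neq 0$---is exactly the paper's argument for that transition. The genuine gap is in the $\mathcal{P}$-BDH$_1$ hops. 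There the challenge element $g^{ab}$ must be embedded in some $h_{\delta_1,\delta_2}$ so that $C_{\delta_1,\delta_2}$ can be set from $g^{z}$; but then every admissible key query whose vector is nonzero in coordinate $\delta_2$ at level $\delta_1$ contains an uncomputable factor $\ghat^{ab\,\gamma_{\delta_1,\delta_2}x_{\delta_1,\delta_2}r_{\delta_1}}$ (in $D$ if $\ell\geq\delta_1$, in $K_{\delta_1,\delta_2}$ otherwise), and the admissibility condition only guarantees $\vec{X}_{i}\cdot\vec{Y}^\star_{i}\neq 0$ at \emph{some} level---not at level $\delta_1$. So the non-orthogonality cancellation you propose to reuse is unavailable precisely for the queries that matter. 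The paper's fix is a different mechanism: it plants $g^{b}$ in $w$ and absorbs the uncomputable factor into $\hat{w}^{r_w}$ by shifting $r_w$, which works for every admissible key. That mechanism in turn forces an extra intermediate game missing from your plan, in which $C_w$ is first replaced by a uniform group element (itself a $\mathcal{P}$-BDH$_1$ hop using the dual embedding, $w=g^y(g^{ab})^x$, where the non-orthogonality trick \emph{is} the right one); without it the later reductions cannot output $C_w=w^c$, as that would require $g^{bc}$.

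Two smaller points. A single $\mathcal{P}$-BDH$_1$ instance supplies one challenge element $g^{z}$, so you cannot randomize an entire level $\{C_{i_1,i_2}\}_{i_2=1}^{\mu}$ in one hop as stated; the hybrid must proceed component by component ($d^\star\cdot\mu$ hops), as in the paper. And for queries of depth $\ell>\kappa$ you still must simulate keys; the paper does so by sampling hidden random vectors $\vec{Y}^{\beta}_{\kappa+1},\ldots,\vec{Y}^{\beta}_{d}$ so that non-orthogonality holds beyond level $\kappa$ with overwhelming probability---this should be made explicit. Your remarks on the DDH$_2$ step (a GDDH-style hybrid over delegation queries) and on the predicate-only variant needing only $\mathcal{P}$-BDH$_1$ and DDH$_2$ are consistent with the paper.
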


\section{A Hierarchical Identity-Based (Lossy) Trapdoor Function} \label{hierarchical-ltdf}

From the HPE scheme of Section \ref{sec:new-HIPE}, our lossy
function is obtained by including a $n \times n$ matrix of HPE
ciphertexts in the master public parameters. As in the DDH-based
 function of  \cite{PW08}, each row of the
matrix is associated with an encryption exponent, which is re-used
throughout the entire row. Each column corresponds to a different
set of public parameters in the HPE system.
\\ \indent  The HIB-TDF that we construct is actually an extended
HIB-TDF, and so the master key generation protocol takes an
auxiliary input. Depending on the value of this auxiliary input, we
obtain the trapdoor (injective) function or a partially lossy
function, used in the security proofs. Actually, all  HPE ciphertexts in the above-mentioned matrix
  correspond to different hierarchical vectors
$(\yy_1,\ldots,\yy_d) \in \Z_p^{d \cdot \mu}$, depending on the auxiliary input. The selective weak
attribute-hiding property of the HPE scheme   guarantees that the
two setups are computationally indistinguishable. \\
\indent In order to evaluate a function for some hierarchical
identity $\id=(\id_1,\ldots,\id_{\ell})$, the first step of the
evaluation algorithm  computes a transformation on HPE ciphertexts
so as to obtain a matrix of  HIBE ciphertexts. During this
transformation, a set of inner products   $\{\langle \yy_{i_1}
,\id_{i_1} \rangle \}_{i_1=1}^{\ell}$ is calculated in the exponent
in the diagonal entries of the matrix. The transformation provides a
$n \times n$ matrix (\ref{HCT-2}) of anonymous HIBE ciphertexts that
are always well-formed in non-diagonal entries. As for diagonal
entries, they contain ``perturbed'' HIBE ciphertexts: at each level,
one ciphertext component contains a perturbation factor of the form
$\langle \yy_{i_1} ,\id_{i_1} \rangle $. In this matrix of HIBE
ciphertexts, random
encryption exponents are again re-used in all positions at each row. \\
\indent The  function evaluation is then carried out as in
\cite{PW08}, by computing a matrix-vector product in the exponent
and taking advantage of homomorphic properties of the HIBE scheme
over the randomness space. The function output can be seen as a set
of $n$ anonymous HIBE ciphertexts -- one for each input bit -- which
are
 well-formed  ciphertexts if and only if the corresponding input
bit is $0$ ({\it i.e.}, if and only if the perturbation factors
$\{\langle \yy_{i_1} ,\id_{i_1} \rangle \}_{i_1=1}^{\ell} $ are left
out when computing the matrix-vector product in the exponent). The
function is thus inverted by testing the well-formedness of each
HIBE ciphertext using the private key.

\subsection{Description}

\begin{description}
\item[$\HFSetup(\varrho,d,n,\mu)$:] given a security parameter $\varrho \in
\mathbb{N}$, the (constant) desired number of levels in the
hierarchy $d\in \mathbb{N}$ and integers $\mu,n \in
\mathsf{poly}(\varrho)$  specifying the length of identities and
that of function inputs, respectively, choose asymmetric bilinear
groups $(\G,\Ghat,\G_T)$ of prime order $p>2^\varrho$.  Define
$\InpSp = \{0,1\}^n$, $\Sid=\{(1,\mathbf{x}): \mathbf{x}\in
\Z_p^{\mu-1} \}$, $\IdSp = \Sid^{(\leq d)}$ and $\mathsf{AuxSp} =
\Z_p^{d \cdot \mu}$. The public parameters are $\mathsf{pms} =
\big(p,(\G,\Ghat,\G_T),d,n,\mu,\InpSp,\IdSp,\mathsf{AuxSp} \big)$.
\end{description}
 Since $\HF$ is an extended HIB-TDF,   the master key generation algorithm of our HIB-TDF
receives an auxiliary input $\mathbf{y} \in \mathsf{AuxSp}$. Here, it is seen as a
concatenation of $d$ row vectors $\yy_1,\ldots,\yy_d \in \Z_p^{\mu}$.
\begin{description}
\item[$\HFMKg(\mathsf{pms},\mathbf{y})$:]
 parse the auxiliary input as $\yy=[\yy_1|\ldots |\yy_d] \in \Z_p^{d \cdot
\mu}$, and proceed as follows. \\ \vspace{-0.3 cm}
\begin{itemize}
\item[1.]
 Choose  $  \alpha_v   \sample \Z_p^*$,     $\valpha_w  \sample (\Z_p^*)^n$,
 and $\valpha_h \sample (\Z_p^*)^{d \times (\mu+1) \times n}$. Define
 $v=g^{\alpha_v}$, $\hat{v}=\hat{g}^{\alpha_v}$,
 $\mathbf{w}=g^{\valpha_w} \in \G^n$ and  $\hat{\mathbf{w}}=\ghat^{\valpha_w} \in
 \Ghat^n$. Likewise, set up vectors
 $\hh =g^{\valpha_h}   \in \G^{d \times (\mu+1) \times n }$ and $\hat{\hh} =\ghat^{\valpha_h}   \in \Ghat^{d \times (\mu+1) \times n }$.  Define
\begin{eqnarray*}
  \PP_{core} &:= & \Bigl( v
 ,~ \{\mathbf{w}[l_1] \}_{l_1=1}^n ,~
    \{ \hh[i_1,i_2,l_1]  \}_{i_1 \in \{1,\ldots,d\}, i_2 \in \{0,\ldots,
  \mu \},~l_1\in \{1,\ldots,n\}}
\Bigr)
\end{eqnarray*}
   \item[2.]
For $i_1=1$ to $d$, parse $ \yy_{i_1}$ as $(\yy_{i_1}[1],\ldots
,\yy_{i_1}[\mu]) \in \Z_p^{\mu}$. For $l_2=1$ to $n$, do the
following. \\ \vspace{-0.2 cm}
\begin{itemize}
\item[a.]
 Choose $\ess[l_2]   \sample \Z_p^*$ and  compute $\JJ[l_2]=v^{\ess[l_2]} $ as well as
\begin{eqnarray*}
 \CCC_w[l_2,l_1]=\ww[l_1]^{\ess[l_2]}, \ \
   \CCC[i_1,i_2,l_2,l_1]= \bigl(
  \hh[i_1,0,l_1]^{ \yy_{i_1}[i_2] \cdot
\Delta (l_2,l_1)} \cdot \hh[i_1,i_2,l_1] \bigr)^{\ess[l_2]}
\end{eqnarray*}
for each $ i_1 \in \{1,\ldots,d \}$, $i_2\in \{1,\ldots,\mu\}$,
$l_1\in \{1,\ldots,n\}$. \\ \vspace{-0.3 cm}
\item[b.] Define a $n \times n$ matrix $\{\CC[l_2,l_1] \}_{l_2,l_1 \in \{1,\ldots,n\}}$ of HPE ciphertexts
\begin{eqnarray}\label{HCT}
\hspace{-0.7 cm} \CC[l_2,l_1]=\big(\JJ[l_2],\CCC_w[l_2,l_1], \{
\CCC[i_1,i_2,l_2,l_1] \}_{i_1 \in \{1,\ldots,d \},~i_2 \in
\{1,\ldots,\mu \} } \big). \quad
\end{eqnarray}
\end{itemize}
The master public key consists of $\mathsf{mpk}:=\big( \PP_{core},
\{\CC[l_2,l_1] \}_{l_2,l_1 \in \{1,\ldots,n\}}  \big)$ while the
master secret key is $\mathsf{msk}:=\big(\vhat ,\hat{\mathbf{w}},
   \hat{\hh}  \big)$. For each $l_1 \in \{1,\ldots,n\}$, it will be convenient to view $(\mathsf{PP}_{core},\mathsf{msk})$ as a vector of HPE
   master key pairs
$(\mathsf{mpk}[l_1],\mathsf{msk}[l_1])$, with
\begin{eqnarray*}
 \mathsf{mpk}[l_1] &=& \big(v,\ww[l_1], \{ \hh[i_1,i_2,l_1]
\}_{i_1 \in \{1,\ldots,d\},i_2 \in \{0,\ldots,\mu\}}  \big) \\
\mathsf{msk}[l_1] &=& \big(\hat{v},\hat{\ww}[l_1], \{
\hat{\hh}[i_1,i_2,l_1] \}_{i_1 \in \{1,\ldots,d\},i_2 \in
\{0,\ldots,\mu\}}  \big).
 \vspace{-0.3 cm}
 \end{eqnarray*}
\end{itemize}

\item[$\HFKg\big(\mathsf{pms},\mathsf{msk},(\id_1,\ldots,\id_{\ell}) \big)$:] to generate a  key for
 an identity   $(\id_1,\ldots,\id_{\ell}) \in \IdSp,$ parse  $\mathsf{msk}$ as $
\big( \vhat,\hat{\mathbf{w}},
   \hat{\hh}  \big)$ and  $\id_{i_1}$ as $\id_{i_1}[1]\ldots \id_{i_1}[\mu]$ for  $i_1=1$ to $\ell$. Choose
$\mathbf{r}_w,\mathbf{r}_1,\ldots,\mathbf{r}_{\ell} \sample
(\Z_p^*)^n$. For each $l_1 \in \{1,\ldots,n\}$, compute the
decryption component
$\mathbf{SK}_D=(\DD,\DD_w,\{\DD_{i_1}\}_{i_1=1}^{\ell})$ of the key
as

\begin{eqnarray}
  \DD[l_1]  =      \prod_{i_1=1}^{\ell} \big(  \prod_{ i_2 =1}^\mu \hat{\hh}[i_1,i_2,l_1]^{\id_{i_1}[i_2]}
\big)^{\err_{i_1}[l_1] }    \cdot
\hat{\mathbf{w}}[l_1]^{\mathbf{r}_w[l_1] }   , \  \label{dec-comp}
\DD_{w }[l_1] = \vhat^{\err_w[l_1]}   ,\   \DD_{i_1}[l_1]=
\vhat^{\err_{i_1}[l_1]}
\end{eqnarray}
and the delegation component  $\mathbf{SK}_{DL}=\bigl(
\{\KK[j,k,l_1] \}_{j,k,l_1},~\{\mathbf{L}[j,l_1]\}_{j,l_1},  \ \{
\mathbf{L}[j,k,i_1,l_1] \}_{j,k,i_1,l_1},$ $\{ \mathbf{L}_w[j,k,l_1]
  \}_{j,k,l_1} \bigr),$
with $j \in \{\ell+1,\ldots,d\},$ $k \in \{1,\ldots,\mu\}$ and $i_1 \in \{1,\ldots,\ell\}$ as
$$
    \KK[j,k,l_1] =  \prod_{i_1=1}^{\ell} \Big( \prod_{i_2=1}^{\mu} \hat{\hh}[i_1,i_2,l_1]^{\id_{i_1}[i_2]} \Big)^{\mathbf{s}[j,k,i_1,l_1]} \cdot \hat{\hh}[j,k,l_1]^{\mathbf{s}'[j,l_1]} \cdot \hat{\mathbf{w}}[l_1]^{\mathbf{s}_{w}[j,k,l_1]}   ,
  $$
$$ \mathbf{L}[j,l_1] =  \vhat^{\mathbf{s}'[j,l_1]},    \quad \quad  \mathbf{L}[j,k,i_1,l_1]  =  \vhat^{\mathbf{s}[j,k,i_1,l_1]}       \quad \mathrm{and}  \  \mathbf{L}_w[j,k,l_1] = \vhat^{\mathbf{s}_w[j,k,l_1]} . $$
Output  $\mathbf{SK}_{(\id_1,\ldots,\id_\ell)}   =\big(
\mathbf{SK}_D,\mathbf{SK}_{DL} \big).$ \smallskip

 \item[$\HFDel\big(\mathsf{pms},\mathsf{mpk},(\id_1,\ldots,\id_\ell),\mathbf{SK}_{(\id_1,\ldots,\id_\ell)},\id_{\ell+1}\big)$:]
parse $\mathbf{SK}_{(\id_1,\ldots,\id_\ell)}$ as a $\mathrm{HF}$
private key of the form $(\mathbf{SK}_D,\mathbf{SK}_{DL})$, and
$\id_{\ell+1}$ as a  string $\id_{\ell+1}[1]\ldots \id_{\ell+1}
[\mu] \in \Sid$.
\\ \vspace{-0.3 cm}
\begin{itemize}
\item[1.] For $l_1=1$ to $n$, 
define $\ell$-th level HPE  keys
$\mathbf{SK}_{(\id_1,\ldots,\id_\ell)}[l_1]=(\mathbf{SK}_D[l_1],\mathbf{SK}_{DL}[l_1])$
where
\begin{eqnarray*}
 \mathbf{SK}_D[l_1]&=&
 (\mathbf{D}[l_1],\mathbf{D}_w[l_1], \{\mathbf{D}_{i_1}[l_1]\}_{i_1=1}^{\ell})
 \\
\mathbf{SK}_{DL}[l_1] &=& \bigl( \{\KK[j,k,l_1]
\}_{j,k},~\{\mathbf{L}[j,l_1]\}_{j},  ~\{ \mathbf{L}[j,k,i_1,l_1]
\}_{j,k,i_1}, \{ \mathbf{L}_w[j,k,l_1]  \}_{j,k} \bigr).
\end{eqnarray*}
\item[2.]
 For $l_1=1$ to $n$, run 
$\mathbf{Delegate}(\mathsf{mpk}[l_1],(\id_1,\ldots,\id_\ell),\mathbf{SK}_{(\id_1,\ldots,\id_\ell)
}[l_1],\id_{\ell+1})$ (as specified in Section \ref{sec:new-HIPE})
to get
$\mathbf{SK}_{(\id_1,\ldots,\id_\ell,\id_{\ell+1})}[l_1]=(\mathbf{SK}_D'[l_1],\mathbf{SK}_{DL}'[l_1])$.
\\ \vspace{-0.3 cm}
\end{itemize}
Finally return
$\{\mathbf{SK}_{(\id_1,\ldots,\id_\ell,\id_{\ell+1})}[l_1]\}_{l_1=1}^{n}$.
\\ \vspace{-0.3 cm}

\item[$\HFEv \big(\mathsf{pms},\mathsf{mpk},(\id_1,\ldots,\id_{\ell}),X
\big)$:]  Given a $n$-bit input $X= x_1 \ldots x_n \in \{0,1\}^n$,
for  $i_1=1$ to $\ell$, parse $\id_{i_1}$ as $\id_{i_1}[1]\ldots
\id_{i_1}[\mu]$. For $l_1=1$ to $n$, do the following.
\\ \vspace{-0.3 cm}
\begin{itemize}
\item[1.] For each $l_2\in \{1,\ldots,n\}$, compute modified HPE ciphertexts  by
defining
\begin{eqnarray*}
   \CCC_{\id}[i_1,l_2, l_1] &=&  \prod_{i_2=1}^{\mu} \CCC[i_1,i_2,l_2,l_1]^{\id_{i_1}[i_2]} =  \Bigl(
  \hh[i_1,0,l_1]^{   \langle
\yy_{i_1},\id_{i_1} \rangle  \cdot \Delta(l_2,l_1) } \cdot
 \prod_{i_2=1}^{\mu}
\hh[i_1,i_2,l_1]^{\id_{i_1}[i_2] } \Bigr)^{\ess[l_2]}
\end{eqnarray*}
for each $ i_1 \in \{1,\ldots,\ell \}$, $l_1,l_2 \in
\{1,\ldots,n\}$. The modified ciphertexts are
\begin{eqnarray}\label{HCT-2}
\CC_{\id}[l_2,l_1]=\big( \JJ[l_2],
\{\CC_{\id}[i_1,l_2,l_1]\}_{i_1=1}^{\ell} \big) \in \G^{\ell+1}
  .
\end{eqnarray}
The resulting $\{\CC_{id}[l_2,l_1]\}_{l_2,l_1 \in \{1,\ldots,n\}}$
thus form a $n \times n$ matrix of anonymous HIBE ciphertexts for
the identity $\id=(\id_1,\ldots,\id_{\ell})$. \\ \vspace{-0.3 cm}
\item[2.] Compute $ C_{\id,v} = \prod_{l_2=1}^{n} \JJ[l_2]^{x_{l_2}}=v^{\langle
\ess ,X \rangle }  $, $
   \CC_{\id,w}[l_1]=\prod_{l_2=1}^n \CCC_w[l_2,l_1]^{x_{l_2}}= \ww[l_1]^{\langle s,X
 \rangle}
$ and
 \begin{eqnarray}
   \CC_{\id}[i_1,l_1] &=& \prod_{l_2=1}^n
\CCC_{\id}[i_1,l_2, l_1]^{x_{l_2}}  \label{ct-hibe}    =
\hh[i_1,0,l_1]^{ \ess[l_1] \cdot x_{l_1} \cdot   \langle
\yy_{i_1},\id_{i_1} \rangle   } \cdot \Bigl( \prod_{i_2=1}^{\mu}
\hh[i_1,i_2,l_1]^{\id_{i_1}[i_2] } \Bigr)^{ \langle \ess, X \rangle
} \qquad
\end{eqnarray}
\begin{align} \label{output}
 \hspace{-1.3 cm} \textrm{Then, output } C=\big(C_{\id,v},\{ \CC_{\id,w}[l_1] \}_{l_1=1}^n,
 \{\CC_{\id}[i_1,l_1]\}_{i_1 \in \{1,\ldots,\ell \}, l_1 \in
\{1,\ldots,n\}}\big) \in \G^{ n+1+ n \times \ell} .
\end{align}
\end{itemize}

\item[$\HFInv \big(\mathsf{pms},\mathsf{mpk},(\id_1,\ldots,\id_\ell),\mathbf{SK}_{(\id_1,\ldots,\id_\ell)},C\big)$:] parse  the
decryption component $\mathbf{SK}_D$ of the private key
   as a tuple of the form $(\DD,\DD_w,\DD_{\barw},\{\DD_{i_1}\}_{i_1=1}^{\ell})$
 and the output $C$ as per
 (\ref{output}). Then, for $l_1=1$ to $n$, set $x_{l_1}=0$  if
 \begin{eqnarray} \label{inversion}
 \hspace{-0.9 cm}    e(C_{\id,v},\mathbf{D}[l_1]) \cdot
e(\CC_{\id,w}[l_1],\mathbf{D}_{w}[l_1])^{-1}   \cdot
\prod_{i_1=1}^{\ell} e(\CC_{\id}[i_1,l_1]
,\mathbf{D}_{i_1}[l_1])^{-1}  =  1_{\G_T}.
\end{eqnarray} Otherwise,
set $x_{l_1}=1$. Eventually, return $X=x_1\ldots x_n \in \{0,1\}^n$.
\end{description}
From (\ref{ct-hibe}), we notice that, with overwhelming probability, if there exists some $i_1 \in \{1,\ldots,d\}$
 such that $\langle \mathbf{y}_{i_1}, \id_{i_1}\rangle \neq 0$,
 relation (\ref{inversion}) is
satisfied if and only if $x_{l_1}=0$. Indeed, in this case, the
output  (\ref{output}) is  distributed as  a vector of $n$
Boneh-Boyen HIBE ciphertexts  (in their anonymous variant considered
 in \cite{Duc10}). These ciphertexts correspond to the same
encryption exponent $\langle \mathbf{s},X \rangle$ and are generated
under $n$ distinct master public keys   sharing the same
component $v \in \G$. \\
\indent  When the function is implemented in injective mode, the
auxiliary input consists of a vector
$\yy^{(0)}=[(1,0,\ldots,0)|\ldots|(1,0,\ldots,0)] \in \Z_p^{d \cdot
\mu}$. Since $\id_{i_1}[1]=1$ for each $i_1 $, this guarantees
injectivity since $\langle \yy_{i_1}^{(0)},\id_{i_1} \rangle \neq 0$
for each $i_1 $. In the partially lossy mode, we   have $\langle
\yy_{i_1} ,\id_{i_1} \rangle = 0$ for each $i_1 \in
\{1,\ldots,\ell\}$ with non-negligible probability, which leads to
high non-injectivity.

\subsection{Security Analysis}
\label{security-analysis}

To analyze the security of the scheme (w.r.t. the new definition, in Section \ref{sec:new_sec_def_HIBTDF}), in both the adaptive and the
selective cases, we define two experiments, $RL_0$ and $RL_n$. In
both of them, $\HFSetup$ is run and the public parameters are given
to the adversary $\A$. Algorithm $\HFMKg$ is run with auxiliary input
$\yy^{(0)}=[(1,0,\ldots,0)|\ldots | (1,0,\ldots,0)]$ in  $RL_0$, and
 auxiliary input $\yy^{(1)}=[\yy_1^{(1)}|\ldots |\yy_d^{(1)}] $
in $RL_n$,  where $\yy^{(1)}$ is produced by an
auxiliary input generator $\mathsf{Aux}(\id)$ taking as input a
special  hierarchical identity $\id=(\id_1,\ldots,\id_{\ell})$. The
master public key $\mathsf{mpk}$ is given to the $\A$. $\A$ can request secret keys for identities $\id$, which will be
answered using $\HFKg$ and $\HFDel$ and will be added to $IS$, initialized to $IS=\{\emptyset\}$. Also, $\A$ will
output a hierarchical identity $\id^{\star}$. Finally, $\A$
will output a guess $d_\A$. Both in $RL_0$ and $RL_n$, the
experiment will halt and output $d'=0$ if: a) for any
$\id=(\id_1,\ldots,\id_{\ell}) \in IS$ we have that $\langle
y_{i_1}^{(1)},\vec \id_{i_1} \rangle =0$ for each $i_1
 \in \{1,\ldots,\ell\}$, or b) if $\langle
y_{i_1}^{(1)},\vec \id^{\star}_{i_1} \rangle \neq 0$ for some $i_1
 \in \{1,\ldots,\ell^{\star}\}$. If the experiment has not aborted, it will output the bit
 $d'=d_\A$.
The result in the following lemma will be used to prove that $\HF$ enjoys partial lossiness.

\begin{lemma} \label{indist-setup}
Under the $\mathcal{P}$-BDH$_1$ and DDH$_2$ assumptions, the
experiments $RL_0$ and $RL_n$ return $1$ with nearly identical
probabilities. Namely, there exist  PPT algorithms $\B_1$ and $\B_2$
such that
$$| \PR[RL_0 \Rightarrow 1] - \PR[RL_n \Rightarrow 1]| \leq n \cdot
 \bigl( (d \cdot \mu +1) \cdot \mathbf{Adv}^{\mathcal{P}\textrm{-}\mathrm{BDH}_1}(\B_1) + q \cdot \mathbf{Adv}^{\mathcal{P}\textrm{-}\mathrm{DDH}_2}(\B_2) \bigr),
$$
where $q$ is the number of ``Reveal-key'' queries made by
$\A$.\textnormal{(The proof is  in Appendix \ref{hybrid-par}).}
\end{lemma}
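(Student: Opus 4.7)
The key structural observation is that the off-diagonal entries $\CC[l_2,l_1]$ with $l_2\neq l_1$ satisfy $\Delta(l_2,l_1)=0$ and are therefore independent of the auxiliary input $\yy$; only the $n$ diagonal ciphertexts $\{\CC[l,l]\}_{l=1}^n$ carry information about $\yy$, and each one is a single HPE ciphertext (with randomness $\ess[l]$) under the $l$-th master key $\mathsf{mpk}[l]$. This motivates a hybrid argument $H_0,\ldots,H_n$ in which $H_j$ coincides with $RL_0$ except that, for each $l\le j$, the ciphertext $\CC[l,l]$ encodes the hierarchical attribute vector $(\yy_1^{(1)},\ldots,\yy_d^{(1)})$ rather than $(\yy_1^{(0)},\ldots,\yy_d^{(0)})$. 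Since $H_0=RL_0$ and $H_n=RL_n$, it will suffice to bound $|\Pr[H_{j-1}\Rightarrow 1]-\Pr[H_j\Rightarrow 1]|$ for each $j$ by the advantage of a selective weakly attribute-hiding adversary against the predicate-only HPE scheme of Section~\ref{sec:new-HIPE}; that bound, via Theorem~\ref{thm-HIPE}, is $(d\mu+1)\cdot\mathbf{Adv}^{\mathcal{P}\text{-}\mathrm{BDH}_1}(\B_1)+q\cdot\mathbf{Adv}^{\mathcal{P}\text{-}\mathrm{DDH}_2}(\B_2)$ (the BDH assumption is dropped because the predicate-only variant discards the payload component $C_0$).

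For the $j$-th transition, the reduction $\B$ computes $\yy^{(1)}=\mathsf{Aux}(\id)$ upfront -- possible because the special identity $\id$ is part of its input -- and submits the two hierarchies $(\yy_1^{(0)},\ldots,\yy_d^{(0)})$ and $(\yy_1^{(1)},\ldots,\yy_d^{(1)})$ as the selective challenge. It plugs the received HPE public key into the $j$-th slot of $\mathsf{mpk}$, generating the remaining $n-1$ HPE instances together with their secret keys on its own. The challenge HPE ciphertext is embedded as $\CC[j,j]$; the other entries in column $j$ are derived directly from $\mathsf{mpk}[j]$ by sampling fresh exponents $\ess[l_2]$ for $l_2\neq j$, which is possible precisely because $\Delta(l_2,j)=0$ strips the $\yy$-dependence at those positions. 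The remaining diagonal entries $\CC[l,l]$ with $l\neq j$ are generated internally, using $\yy^{(1)}$ when $l<j$ and $\yy^{(0)}$ when $l>j$. Secret-key queries from $\A$ are answered by obtaining the $l_1=j$ component via the HPE key oracle and assembling all other components directly from the HPE master secret keys that $\B$ holds.

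The selective weak attribute-hiding constraint requires that no queried key match either challenge hierarchy: this holds for $\yy^{(0)}=[(1,0,\ldots,0)|\cdots]$ because every $\id\in\IdSp$ has $\id_{i_1}=(1,\mathbf{x})$ so $\langle\yy_{i_1}^{(0)},\id_{i_1}\rangle=1\neq 0$, and it holds for $\yy^{(1)}$ on every non-aborting run thanks to the abort clause on $IS$ in the experiment; aborted runs return $0$ in both $RL_0$ and $RL_n$ and therefore contribute nothing to $|\Pr[H_{j-1}\Rightarrow 1]-\Pr[H_j\Rightarrow 1]|$, so $\B$ can safely output $0$ whenever the abort check it performs locally (using its knowledge of $\yy^{(1)}$) would trigger. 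The main obstacle will be careful bookkeeping in the reduction: one has to verify that the shared encryption-randomness structure -- the implicit $\ess[j]$ of the HPE challenge is used only inside column $j$, while independent $\ess[l_2]$ values are sampled by $\B$ for the other rows of that column -- produces a perfectly distributed $\HFMKg$ output, and likewise that the $(\mathbf{SK}_D,\mathbf{SK}_{DL})$ pieces obtained by combining HPE-oracle answers at index $j$ with locally computed keys at indices $l_1\neq j$ are distributed exactly as genuine $\HFKg$ and $\HFDel$ outputs. Summing over the $n$ hybrid transitions then yields the stated bound.
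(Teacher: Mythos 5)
Your overall strategy is exactly the paper's: a hybrid over the $n$ diagonal positions, with each transition reduced to the selective weak attribute-hiding game of the predicate-only HPE scheme, the challenge ciphertext embedded at the diagonal slot, key queries answered by combining an HPE-oracle key at the active slot with locally generated components elsewhere, and aborts absorbed into the experiments' own output-$0$ clause. Most of the bookkeeping you flag does check out as you describe it. However, there is one genuine gap, and it sits precisely at the point you wave at: you assert that ``the implicit $\ess[j]$ of the HPE challenge is used only inside column $j$.'' That is false. In the construction the encryption exponent is re-used across an entire \emph{row}: $\CC[l_2,l_1]$ uses exponent $\ess[l_2]$ and public-key slot $\mathsf{mpk}[l_1]$. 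Embedding the challenge at $\CC[j,j]$ implicitly sets $\ess[j]=s$, where $s$ is the HPE challenger's hidden exponent, and the reduction must then also produce every off-diagonal entry $\CC[j,l_1]$ for $l_1\neq j$, i.e.\ $\JJ[j]=v^{s}$, $\CCC_w[j,l_1]=\ww[l_1]^{s}$ and $\CCC[i_1,i_2,j,l_1]=\hh[i_1,i_2,l_1]^{s}$, without knowing $s$. Your reduction as written cannot do this: you generate the slots $l_1\neq j$ ``on its own'' with no special structure, so $\ww[l_1]^{s}$ is not computable from the challenge.

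The missing idea is that $\B$ must program all the other public-key slots as known powers of the \emph{common} generator $v$ supplied by the HPE challenger: it picks $\vzeta[l_1],\vgamma[i_1,i_2,l_1]\sample\Z_p$ and sets $\ww[l_1]=v^{\vzeta[l_1]}$ and $\hh[i_1,i_2,l_1]=v^{\vgamma[i_1,i_2,l_1]}$ for $l_1\neq j$. Then the row-$j$ entries become $\CCC_w[j,l_1]=C_v^{\vzeta[l_1]}$ and $\CCC[i_1,i_2,j,l_1]=C_v^{\vgamma[i_1,i_2,l_1]}$, computable from the challenge component $C_v=v^{s}$ (and, conveniently, independent of $\yy$ since $\Delta(j,l_1)=0$). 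The same known exponents are what let $\B$ assemble correctly distributed decryption and delegation components at indices $l_1\neq j$ when answering key queries (it picks the $\DD_w[l_1],\DD_{i_1}[l_1]$ at random in $\Ghat$ and recombines them using $\vzeta,\vgamma$), rather than holding independent full HPE master secret keys as you suggest. With this repair the rest of your argument — the column-$j$ entries via fresh $\ess[l_2]$, the remaining diagonal entries via $\yy^{(0)}$ or $\yy^{(1)}$, the abort analysis, and the summation over $n$ transitions — goes through and matches the paper's proof.
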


\subsubsection{Adaptive-id Security.}\label{adaptive-id-sec}
 The details on the security analysis of $\HF$ with respect to
selective adversaries are given in Appendix
\ref{app:selective-security}. For adaptive security we consider our
construction of $\HF$ with a restricted identity space, namely
taking
    $\Sigma_{ID}=\{(1,\mathbf{x}): \mathbf{x} \in \{0,1\}^{\mu-1} \}$. Define a sibling
$\LHF$ with $\AxSp=\Z_p^{\mu \cdot d}$, where the auxiliary input
$\yy^{(1)}=[\yy_1^{(1)}|\ldots |\yy_d^{(1)}]$, for any $i_1$ from
$1$ to $d$, is defined as
\begin{eqnarray*}
  y_{i_1}' & \sample &  \{0,\dots,2q-1\}, \  \xi_{i_1}
\sample\{0,\dots,\mu+1\}, \ \yy_{i_1}^{(1)}[1]=y_{i_1}'-2\xi_{i_1}q,
\ \{\yy_{i_1}^{(1)}[i] \sample \{0,\dots,2q-1\}\}_{i=2}^{\mu}.
\end{eqnarray*}

The pre-output stage $\mathcal{P}$ associated to the scheme $\HF$ in the
adaptive case will be the artificial abort used in \cite{Wat05}: if $IS$ is
the set of queried identities (only taking into account
``Reveal key'' queries) and $\id^\star$ is the challenge identity, let
$E(IS,\id^\star)$ be the event that the $\REAL$ or the $\LOSSY$
experiments set $d_1$ to be 1, and let
$\eta(IS,\id^\star)=\Pr[E(IS,\id^\star)]$. We prove in Appendix
\ref{proof-lower-bound}:

\begin{lemma}\label{lemma-lower-bound}
$\eta_{low}=1/\left(2\cdot(2 q\mu)^{d}\right)\leq
\eta(IS,\id^\star)$
\end{lemma}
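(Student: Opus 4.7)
The plan is to adapt Waters' artificial-abort analysis \cite{Wat05} to the hierarchical setting by exploiting the independence of the auxiliary input across hierarchy levels. For each level $i_1$, define the linear form
$$J_{i_1}(\id) = \langle \yy_{i_1}^{(1)}, \id \rangle = y_{i_1}' + \sum_{i=2}^{\mu} \yy_{i_1}^{(1)}[i]\cdot \id[i] - 2\xi_{i_1} q \pmod p,$$
and let $B(\id)$ denote the event that $J_{i_1}(\id_{i_1})=0$ for all $i_1 \in \{1,\ldots,\ell_\id\}$. Then $E(IS,\id^\star) = B(\id^\star) \wedge \bigwedge_{\id\in IS} \neg B(\id)$, so
$$\eta(IS,\id^\star) = \Pr[B(\id^\star)] \cdot \Pr\bigl[\forall\, \id\in IS:\neg B(\id) \bigm| B(\id^\star)\bigr].$$

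First I would lower bound $\Pr[B(\id^\star)]$. Since $(y_{i_1}', \xi_{i_1}, \yy_{i_1}^{(1)}[2],\ldots,\yy_{i_1}^{(1)}[\mu])$ are drawn independently at each level, the events $\{J_{i_1}(\id^\star_{i_1})=0\}_{i_1}$ are independent and the probability factorizes. For a single level, since identities have entries in $\{0,1\}$ with first coordinate $1$, the integer value $A(\id^\star_{i_1}) := y_{i_1}' + \sum_{i=2}^\mu \yy_{i_1}^{(1)}[i]\cdot \id^\star_{i_1}[i]$ lies in $[0, 2q\mu)$, so (since $p$ is large) the equation $J_{i_1}(\id^\star_{i_1}) \equiv 0 \pmod p$ is equivalent over the integers to $A(\id^\star_{i_1}) = 2\xi_{i_1}q$. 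The uniform distribution of $y_{i_1}'$ makes $A(\id^\star_{i_1})\bmod 2q$ uniform, so $\Pr[A\equiv 0\pmod{2q}]=1/(2q)$; conditioned on this, $A/(2q)$ is an integer in $\{0,\ldots,\mu-1\}$ and the independent uniform $\xi_{i_1}\in\{0,\ldots,\mu+1\}$ matches it with probability $1/(\mu+2)$. Multiplying across the $\ell^\star\le d$ levels yields $\Pr[B(\id^\star)] \geq \bigl(1/(2q\mu)\bigr)^d$ (up to the minor $\mu$ vs $\mu+2$ slack).

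Next, I bound the conditional probability via a union bound: it suffices to show $\Pr[B(\id)\mid B(\id^\star)] \leq 1/(2q)$ for every $\id \in IS$. Using the restriction that $\id\not\leq\id^\star$, I would split into two cases. If $\id^\star$ is a strict prefix of $\id$, then $B(\id)$ adds at least one extra level-condition $J_{i_1}(\id_{i_1})=0$ on independent randomness, contributing a factor $\leq 1/(2q(\mu+2))\leq 1/(2q)$. If $\id$ is incomparable with $\id^\star$, let $k$ be the largest common-prefix length; at level $k+1$, $\id_{k+1}\neq \id^\star_{k+1}$, and subtracting the two equations $J_{k+1}(\id^\star_{k+1})=0$ and $J_{k+1}(\id_{k+1})=0$ yields the integer constraint $\sum_{i=2}^\mu \yy_{k+1}^{(1)}[i]\bigl(\id_{k+1}[i]-\id^\star_{k+1}[i]\bigr)=0$. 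Picking any coordinate $i^\star$ where the identities differ, this linear equation determines $\yy_{k+1}^{(1)}[i^\star]$ uniquely from the remaining coefficients; since $\yy_{k+1}^{(1)}[i^\star]$ is uniform over $\{0,\ldots,2q-1\}$, the conditional probability is at most $1/(2q)$.

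Combining these ingredients, since $|IS|\leq q$, the union bound gives
$\sum_{\id\in IS}\Pr[B(\id)\mid B(\id^\star)] \leq q\cdot 1/(2q) = 1/2,$
so $\Pr[\forall\,\id\in IS:\neg B(\id)\mid B(\id^\star)]\geq 1/2$ and therefore
$\eta(IS,\id^\star) \geq \tfrac{1}{2}\cdot\bigl(1/(2q\mu)\bigr)^d = 1/(2\cdot (2q\mu)^d)$, as claimed. The main obstacle is the incomparable-identity case at the divergence level: ensuring that the Waters-style cancellation argument still forces a single free $\yy_{k+1}^{(1)}[i^\star]$ to take a unique value requires using that the fixed first coordinate of each identity equals~$1$ (so divergence must come from one of the remaining binary coordinates) and that the integer magnitudes stay below~$p$, so that the linear equation over $\Z_p$ truly restricts $\yy_{k+1}^{(1)}[i^\star]$ to a unique element of $\{0,\ldots,2q-1\}$.
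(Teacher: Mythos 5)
Your proposal is correct and follows essentially the same route as the paper: a per-level independence argument giving $\Pr[B(\id^\star)]\approx (1/(2q\mu))^{\ell^\star}$ (the paper likewise first conditions on the $\xi_{i_1}$ hitting the right values and then reduces to a mod-$2q$ analysis, and it shares the same $\mu$ vs.\ $\mu+2$ slack you flag), followed by a union bound over the $q$ revealed identities in which each conditional probability is at most $1/(2q)$ via the Waters-style cancellation at the first level where a queried identity either extends or diverges from $\id^\star$. The only point stated more tersely than it deserves is that conditioning on $B(\id^\star)$ leaves the coordinates $\yy_{k+1}^{(1)}[2],\ldots,\yy_{k+1}^{(1)}[\mu]$ uniformly distributed, but this holds (the probability of the conditioning event is the same for every fixing of these coordinates) and the paper's own proof is no more explicit about it.
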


$\mathcal{P}$ computes an approximation $\eta'(IS,\id^\star)$
of $\eta(IS,id^\star)$, by using
$O(\zeta^{-2}\ln(\zeta^{-1})\eta_{low}^{-1}\ln(\eta_{low}^{-1}))$
samples, for a non-negligible $\zeta$, which is part of the input to
$\mathcal{P}$. If $\eta'(IS,\id^\star) \leq \eta_{low}$, then $d_2$ is always set to 1.
If $\eta'(IS,\id^\star)>\eta_{low}$, then
$d_2=0$ with probability
$1-\eta_{low}/\eta'(IS,\id^\star)$ and $d_2=1$ with probability
$\eta_{low}/\eta'(IS,\id^\star)$.
Adaptive security of $\HF$ is established in the following theorem.

\begin{theorem}\label{theor-adapt-loss}
Let $n>\log p$ and let $\omega=n-\log p$. Let $\HF$ be the HIB-TDF
with parameters $n,d,\mu, \Sigma_{ID}=\{(1,\mathbf{x}): \mathbf{x} \in
\{0,1\}^{\mu-1} \}, \IdSp=\Sigma_{ID}^{(\leq d)}$. Let $\LHF$ be the
sibling with auxiliary input as above, let the pre-output stage
associated to $\HF$ be the one specified above. In front of
adaptive-id adversaries that make a maximal number of $q\leq
p/(2\mu)$ queries, $\HF$ is $(\omega,\delta)$-partially lossy, for
$\delta = 13/\left(32\cdot(2 q\mu)^{d}\right)$.

Specifically regarding condition (i), for any such adaptive-id adversary $\A$ there exist algorithms $\B_1$ and $\B_2$ such that
$\mathbf{Adv}_{\HF,\LHF,\omega,\zeta}^{\mathrm{lossy}}(\A)\leq 2n \cdot
\bigl( (d \cdot \mu +1) \cdot \mathbf{Adv}^{\mathcal{P}\textrm{-}\mathrm{BDH}_1}(\B_1) + q \cdot \mathbf{Adv}^{\mathcal{P}\textrm{-}\mathrm{DDH}_2}(\B_2) \bigr)
$,
where the running time of $\B_1$ and $\B_2$ is that of $\A$ plus a
$O(\zeta^{-2}\ln(\zeta^{-1})\eta_{low}^{-1}\ln(\eta_{low}^{-1}))$
overhead, with $\eta_{low}=1/\left(2\cdot(2 q\mu)^{d}\right)$.
 \textnormal{(The proof is given in Appendix \ref{proof-adaptive-loss}).}
    \end{theorem}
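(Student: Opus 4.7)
The plan is to verify the three conditions of Definition \ref{HIB-TDF-sec-def} for the sibling $\LHF$ and the artificial-abort pre-output stage $\mathcal{P}$ described just before the theorem. Condition (i) reduces essentially to Lemma \ref{indist-setup}; conditions (ii) and (iii) are obtained by transporting Waters' partitioning analysis (Lemma \ref{lemma-lower-bound}) through a Chernoff-based accounting of the artificial-abort step, with the constants balanced so that $\epsilon_1 \epsilon_2 = 13/(32(2q\mu)^d)$. The lossiness $\omega = n - \log p$ is verified directly from (\ref{output}): when all inner products $\langle \yy_{i_1}, \id_{i_1}\rangle$ vanish, the entire function output depends only on the single scalar $\langle \mathbf{s}, X\rangle \in \Z_p$, so its image has size at most $p$.

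For condition (i), the key observation is that $d_1$ and $d_2$ are deterministic functions of $(\mathsf{pms}, \mathsf{mpk}_1, \mathsf{msk}_1, IS, \id^\star)$ and that the marginal distribution of $(\mathsf{mpk}_1, \mathsf{msk}_1)$ is identical in $\REAL$ and $\LOSSY$. The only difference between the two experiments, from the adversary's viewpoint, is whether the master keys used to answer its oracle queries are generated from the injective auxiliary input $\yy^{(0)}$ or from the partitioning input $\yy^{(1)}$ sampled by $\mathsf{Aux}$. A distinguisher against condition (i) therefore repackages directly as a distinguisher between $RL_0$ and $RL_n$ in the statement of Lemma \ref{indist-setup}; the factor $2n$ (instead of $n$) in the final bound comes from a two-sided hybrid that toggles the mode on each of the two independently sampled master key pairs.

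For conditions (ii) and (iii), I would follow Waters' analysis closely. Lemma \ref{lemma-lower-bound} supplies the pointwise bound $\eta(IS, \id^\star) \geq \eta_{low}$. A Chernoff tail bound applied to the $O(\zeta^{-2}\ln(\zeta^{-1})\eta_{low}^{-1}\ln(\eta_{low}^{-1}))$ samples drawn by $\mathcal{P}$ ensures that the estimate $\eta'$ lies within a multiplicative $(1 \pm \zeta/8)$ factor of $\eta$ except with probability at most $\zeta \eta_{low}/8$. Accepting with probability $\eta_{low}/\eta'$ therefore flattens $\Pr[\neg abort \mid IS, \id^\star]$ to a value essentially equal to $\eta_{low}$, independent of the history: this yields $\Pr[d_{\neg abort}^\A = 1 \mid \REAL] \geq c_1 \eta_{low}$ for some constant $c_1$ (condition (ii), with $\epsilon_1 = c_1 \eta_{low}$) and, by the same near-independence, $\Pr[d_\A = 1 \mid \REAL \wedge \neg abort] - \frac{1}{2} \geq (1 - O(\zeta))\zeta \geq \epsilon_2 \zeta$ for a constant $\epsilon_2 < 1$ (condition (iii)). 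The constants $c_1,\epsilon_2$ are then tuned so that $\epsilon_1 \epsilon_2 = 13/(32(2q\mu)^d)$.

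The main obstacle will be the careful bookkeeping in the artificial-abort analysis: the Chernoff slack, the rescaling factor $\eta_{low}/\eta'$, and the failure probability must combine to produce concrete constants matching the stated $\delta$ without incurring any hidden polynomial loss. The restriction $q \leq p/(2\mu)$ is imposed precisely so that the partitioning modulus does not wrap around in Lemma \ref{lemma-lower-bound}, and the assumption $n > \log p$ guarantees that $\omega = n - \log p$ is positive and thus meaningful.
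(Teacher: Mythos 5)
Your proposal follows the paper's proof essentially step for step: condition (i) is handled by the hybrid of Lemma \ref{indist-setup} with the artificial-abort stage grafted on, the lossiness $\omega=n-\log p$ is read off from the fact that the output depends only on $\langle \mathbf{s},X\rangle$ when all inner products vanish, and conditions (ii)--(iii) follow from Lemma \ref{lemma-lower-bound} combined with the Chernoff-bound analysis of Waters' artificial abort, yielding $\epsilon_1\epsilon_2=13\eta_{low}/16=\delta$. The one small discrepancy is your accounting for the factor $2n$: the key pair used to compute $d_1$ is generated by $\LHFMKg$ in \emph{both} experiments and never changes, so the paper's hybrid only toggles the $n$ coordinates of the single master key shown to the adversary and obtains $n\cdot(\cdots)$ plus negligible terms — but this affects only a constant and not the validity of the argument.
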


The fact that $\eta_{low}^{-1}$ depends exponentially on $d$ is what
makes our adaptive-id security result valid only for hierarchies of
constant depth $d$. Whether this restriction can be avoided at all
is an interesting open question.

%
%
%

\appendix

\section{Deferred Definitions and Proofs for Hierarchical Predicate Encryption}

\subsection{Definitions for Hierarchical Predicate Encryption}\label{app:syntax-HIPE}

A tuple of integers $\vec{\nu}=(\mu_1,d;\nu_1,\ldots,\nu_d)$ such
that $\nu_0 =0 \leq \nu_1 < \nu_2 < \cdots,< \nu_d =\mu_1$  is
called a {\it format hierarchy} of depth $d$. Such a hierarchy is
associated with attribute spaces $\{ \Sigma_{i}  \}_{i=0}^{d}$. In
our setting, we set $\Sigma_{i }=\Z_p^{\nu_i - \nu_{i-1}} \backslash
\{ \vec{0} \}$ for $i=1$ to $d$, for some $p \in \mathbb{N}$, and
also define the universe of hierarchical attributes as $\Sigma :=
\cup_{i=1}^d (\Sigma_1 \times \cdots \times \Sigma_i )$. For vectors
$\{\vec{X}_i \in \Sigma_i \}_{  i \in \{1,\ldots,d\} }$, we will
consider the (inner-product) hierarchical predicate
$f_{(\vec{X}_1,\ldots,\vec{X}_{\ell})}(\vec{Y}_1,\ldots,\vec{Y}_\kappa)=1$
iff $\ell \leq \kappa $ and $\vec{X}_i \cdot \vec{Y}_i =0$ for all
$i \in \{1,\ldots,\ell\}$. The space of hierarchical predicates is
defined to be $$\mathcal{F}=\{ f_{(\vec{X}_1,\ldots,\vec{X}_\ell)} |
\{\vec{X}_i \in \Sigma_i \}_{i \in \{1,\ldots,\ell \} } \}$$ and the
integer
$\kappa $ (resp. $\ell$) is called  the depth (resp. the level)  of $(\vec{Y}_1,\ldots,\vec{Y}_\kappa)$ (resp. $(\vec{X}_1,\ldots,\vec{X}_\ell)$).
Most of the time we will assume that $\nu_i/\nu_{i-1}=\mu$ for any $i$, and to specify the format hierarchy we will only have to give $\mu$ and $d$. \\
\indent Let $\vec{\nu}=(\mu_1,d;\nu_1,\ldots,\nu_d)$ be a format
hierarchy. A hierarchical predicate encryption (HPE) scheme for a
predicate family $\mathcal{F}$ consists of these  algorithms.
\begin{description}
\item[Setup$(\varrho,\vec{\nu})$:] takes as input a security parameter
$\varrho \in \mathbb{N}$ and a format hierarchy
$\vec{\nu}=(n_1,d;\nu_1,\ldots,\nu_d)$. It outputs a master secret
key $\mathsf{msk}$ and a master public key $\mathsf{mpk}$ that
includes the description of a hierarchical attribute space $\Sigma$.
\item[Keygen$\big(\mathsf{msk},(\vec{X}_1,\ldots,\vec{X}_{\ell}) \big)$:] takes as input
predicate vectors  $(\vec{X}_1,\ldots,\vec{X}_\ell) \in  {\Sigma}_1
\times \cdots \times \Sigma_{\ell}$ and the master secret key
$\mathsf{msk}$. It outputs a private key
$SK_{(\vec{X}_1,\ldots,\vec{X}_{\ell})}$.
\item[Encrypt$\big(\mathsf{mpk},(\vec{Y}_1,\ldots,\vec{Y}_\kappa),M \big):$] takes as input    attribute vectors
$(\vec{Y}_1,\ldots,\vec{Y}_\kappa) \in  {\Sigma}_1 \times \cdots
\times \Sigma_{k}$,  the master public key $\mathsf{mpk}$ and a
message $M$. It outputs a ciphertext $C$.
\item[Decrypt$\big(\mathsf{mpk},(\vec{X}_1,\ldots,\vec{X}_{\ell}), SK_{(\vec{X}_1,\ldots,\vec{X}_{\ell})} ,C \big)$:] takes
in a private key $SK_{(\vec{X}_1,\ldots,\vec{X}_{\ell})}$ for the
 vectors $(\vec{X}_1,\ldots,\vec{X}_{\ell})$,
 the master
 public key $\mathsf{mpk}$ and a
ciphertext $C$. It outputs   a plaintext $M$ or $\perp$.
\item[Delegate$\big(\mathsf{mpk},(\vec{X}_1,\ldots,\vec{X}_{\ell}), SK_{(\vec{X}_1,\ldots,\vec{X}_{\ell})},
\vec{X}_{\ell+1} \big)$:] takes in a $\ell$-th level private key
$SK_{(\vec{X}_1,\ldots,\vec{X}_{\ell})}$, the corresponding vectors
$(\vec{X}_1,\ldots,\vec{X}_{\ell})$ and a vector $\vec{X}_{\ell+1}$.
It outputs a $(\ell+1)$-th level private key $SK_{
(\vec{X}_1,\ldots,\vec{X}_{\ell},\vec{X}_{\ell+1}) }$.
\end{description}
Correctness   mandates that, for any message $M$ and hierarchical
vectors $(\vec{X}_1,\ldots,\vec{X}_\ell)$,
$(\vec{Y}_1,\ldots,\vec{Y}_\kappa) $, we have
  $\mathbf{Decrypt}\big(\mathsf{mpk},(\vec{X}_1,\ldots,\vec{X}_\ell),SK_{
(\vec{X}_1,\ldots,\vec{X}_\ell)},
\mathbf{Encrypt}(\mathsf{mpk},(\vec{Y}_1,\ldots,\vec{Y}_\kappa,M)
\big)=M$  whenever the condition
$f_{(\vec{X}_1,\ldots,\vec{X}_{\ell})}(\vec{Y}_1,\ldots,\vec{Y}_\kappa)=1$ is satisfied.\\
\indent Following \cite{OT09}, we write $f' \leq f$ to express that
the predicate vector of $f$ is a prefix of that for $f'$, meaning
that $f'$ is at least as constraining as $f$.

\begin{definition} \label{sec-def}
A  Hierarchical Predicate Encryption scheme is {\bf
selectively weakly attribute-hiding} if no PPT adversary has
non-negligible advantage in the following game:
\begin{itemize}
\item[1.] The adversary $\A$ chooses a format hierarchy $\vec{\nu}=(n_1,d;\nu_1,\ldots,\nu_d)$ and
    vectors $(\vec{Y}_1^0,\ldots,\vec{Y}_{d^\star}^0)$,
$(\vec{Y}_1^1,\ldots,\vec{Y}_{d^\star}^1) $, for some $d^\star \leq
d$. The challenger generates a master  key pair
$(\mathsf{msk},\mathsf{mpk}) \leftarrow
\mathbf{Setup}(\varrho,\vec{\nu})$ and $\mathsf{mpk}$ is given to
$\A$.
\item[2.]   $\A$ is allowed to make a number of adaptive    queries.
\\ \vspace{-0.3 cm}
\begin{itemize}
\item[-] {\bf Create-key}: $\A$ provides a predicate $f \in
\mathcal{F}$ and the challenger creates a private key $SK_f$ for $f$
{\it without} revealing it to $\A$.
\item[-] {\bf Create-delegated-key}: $\A$ chooses a private key that was
previously created for some predicate $f$ and also specifies another
predicate $f' \leq f$ that $f$ is a prefix of. The challenger then
computes a delegated key $SK_{f'}$ for $f'$ without revealing it to
$\A$.
\item[-] {\bf Reveal-key}: $\A$ asks the challenger to give out a
previously created key. \\ \vspace{-0.3 cm}
\end{itemize}  For each
   Reveal-key  query $(\vec{X}_1,\ldots,\vec{X}_{\ell})$, it is required that
 $$
f_{(\vec{X}_1,\ldots,\vec{X}_{\ell})}(\vec{Y}_1^0,\ldots,\vec{Y}_{d^\star}^0)=
f_{(\vec{X}_1,\ldots,\vec{X}_{\ell})}(\vec{Y}_1^1,\ldots,\vec{Y}_{d^\star}^1)=0.$$
\item[3.]   $\A$ outputs   messages
$M_0,M_1$.
  Then, the challenger chooses $\beta
\sample \{0,1\}$ and computes a challenge ciphertext
$C^\star=\mathbf{Encrypt}\big(\mathsf{mpk},(\vec{Y}_1^{\beta},\ldots,\vec{Y}_{d^\star}^{\beta}),M_{\beta}
\big)$, which is sent   to $\A$.
\item[4.] $\A$ makes further private key queries for hierarchical vectors $(\vec{X}_1,\ldots,\vec{X}_\ell)$ under the same restriction as above.
\item[5.] $\A$ outputs a bit $\beta' \in \{0,1\}$ and wins if
$\beta'=\beta$.
\end{itemize}
  $\A$'s advantage is quantified as the distance
$\mathbf{Adv}(\A)=|\Pr[\beta'=\beta]-1/2|$.
\end{definition}

\subsection{Correctness of the HIPE scheme} \label{proof-correctness}

\begin{lemma} \label{thm-correct-HIPE}
The HIPE scheme of Section \ref{sec:new-HIPE} is correct. Namely,
for any message $M$ and any vectors $\vec{X}_1,\ldots,\vec{X}_\ell$,
$\vec{Y}_1,\ldots,\vec{Y}_\kappa $, we have
 $$\mathbf{Decrypt}\big(\mathsf{mpk},(\vec{X}_1,\ldots,\vec{X}_\ell),SK_{
(\vec{X}_1,\ldots,\vec{X}_\ell)},
\mathbf{Encrypt}(\mathsf{mpk},(\vec{Y}_1,\ldots,\vec{Y}_\kappa),M
\big)=M$$  whenever
$f_{(\vec{X}_1,\ldots,\vec{X}_{\ell})}(\vec{Y}_1,\ldots,\vec{Y}_\kappa)=1$.
This fact does not depend on whether the key $SK_{
(\vec{X}_1,\ldots,\vec{X}_\ell)}$ was created using
$\mathbf{Delegate}$ or $\mathbf{Keygen}$.
\end{lemma}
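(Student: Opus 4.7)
The plan is to split the proof into two stages. In Stage 1 I verify the decryption identity for a private key freshly produced by \textbf{Keygen}, using bilinearity and the equalities $e(g,\hat v)=e(v,\hat g)$ and $e(w,\hat v)=e(v,\hat w)$, which follow from $v=g^{\alpha_v}$, $\hat v=\hat g^{\alpha_v}$, $w=g^{\alpha_w}$, $\hat w=\hat g^{\alpha_w}$. In Stage 2 I show that \textbf{Delegate} outputs a key whose decryption component $(D',D_w',\{D_{i_1}'\}_{i_1=1}^{\ell+1})$ is syntactically of the same form as a fresh $(\ell{+}1)$-th level \textbf{Keygen} key (just with new, implicitly defined random exponents). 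Correctness of delegated keys then reduces to Stage 1.

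For Stage 1 I would expand the quantity $C_0 \cdot e(C_v,D)^{-1}\cdot e(C_w,D_w)\cdot \prod_{i_1=1}^{\ell} e(C_{i_1},D_{i_1})$ using bilinearity. The pairing $e(C_v,D)$ yields three groups of factors: (i) $e(g,\hat v)^{\alpha s}$, (ii) $\prod_{i_1=1}^{\ell} e\bigl(v,\prod_{i_2}\hat h_{i_1,i_2}^{x_{i_1,i_2}}\bigr)^{s\,r_{i_1}}$, and (iii) $e(v,\hat w)^{s\,r_w}$. Factor (iii) is cancelled by $e(C_w,D_w)=e(v,\hat w)^{s\,r_w}$. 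When $\vec X_{i_1}\cdot \vec Y_{i_1}=0$, the precomputed $C_{i_1}=\bigl(h_{i_1,0}^{\vec X_{i_1}\cdot\vec Y_{i_1}}\cdot\prod_{i_2} h_{i_1,i_2}^{x_{i_1,i_2}}\bigr)^{s}$ collapses to $\bigl(\prod_{i_2} h_{i_1,i_2}^{x_{i_1,i_2}}\bigr)^{s}$, so $e(C_{i_1},D_{i_1})$ cancels factor (ii) for that index. Under the predicate condition this cancellation happens for every $i_1\in\{1,\dots,\ell\}$, and only $e(g,\hat v)^{\alpha s}$ survives in $e(C_v,D)$; multiplying by $C_0=M\cdot e(g,\hat v)^{\alpha s}$ gives $M$.

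For Stage 2 I would introduce fresh \emph{virtual} randomness $r_{i_1}'=r_{i_1}+s_{\ell+1,i_1}$ for $i_1\le\ell$, $r_{\ell+1}'=z\,s_{\ell+1}$, and $r_w'=r_w+s_{w,\ell+1}$, where $s_{\ell+1,i_1}=z\sum_{k}s_{\ell+1,k,i_1}\,x_{\ell+1,k}$ and $s_{w,\ell+1}=z\sum_k s_{w,\ell+1,k}\,x_{\ell+1,k}$. A direct computation of $K_{\ell+1}=\prod_k \widehat K_{\ell+1,k}^{x_{\ell+1,k}}$ using the randomized values $\widehat K_{\ell+1,k}=K_{\ell+1,k}^{z}$ yields
\[
K_{\ell+1}=\prod_{i_1=1}^{\ell}\Bigl(\prod_{i_2=1}^{\mu}\hat h_{i_1,i_2}^{x_{i_1,i_2}}\Bigr)^{s_{\ell+1,i_1}}\cdot\Bigl(\prod_{i_2=1}^{\mu}\hat h_{\ell+1,i_2}^{x_{\ell+1,i_2}}\Bigr)^{r_{\ell+1}'}\cdot\hat w^{s_{w,\ell+1}}.
\]
Multiplying by $D$ gives $D'=\hat g^{\alpha}\cdot\prod_{i_1=1}^{\ell+1}\bigl(\prod_{i_2}\hat h_{i_1,i_2}^{x_{i_1,i_2}}\bigr)^{r_{i_1}'}\cdot\hat w^{r_w'}$. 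Likewise, $D_w'=D_w\cdot L_{w,\ell+1}=\hat v^{r_w'}$, $D_{i_1}'=D_{i_1}\cdot L_{\ell+1,i_1}=\hat v^{r_{i_1}'}$ for $i_1\le\ell$, and $D_{\ell+1}'=\widehat L_{\ell+1}=\hat v^{r_{\ell+1}'}$. Hence the delegated decryption key has exactly the functional form of a Keygen key at level $\ell+1$, and Stage 1 applies verbatim.

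The main obstacle is purely bookkeeping in Stage 2: one has to carefully track how the randomness factor $z$ and the sums $\sum_{k}(\cdot)\,x_{\ell+1,k}$ coming from the dot product with $\vec X_{\ell+1}$ combine with the original exponents so that the three groups of elements $(D,D_w,D_{i_1})$ all update consistently under the \emph{same} implicit new randomness $r_{i_1}'$ and $r_w'$. The correctness of the (non-decryption) delegation component $SK_{DL}'$ only matters insofar as it remains a valid input for further \textbf{Delegate} calls; since the argument for $SK_D'$ is inductive in the hierarchy depth, iterating it shows that any key obtained by an arbitrary sequence of delegations still decrypts correctly under the predicate condition.
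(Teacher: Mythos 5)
Your Stage~1 is exactly the paper's argument: expand $e(C_v,D)$ into the factors $e(g,\hat v)^{\alpha s}$, $\prod_{i_1}e\bigl(v^s,(\prod_{i_2}\hat h_{i_1,i_2}^{x_{i_1,i_2}})^{r_{i_1}}\bigr)$ and $e(v,\hat w)^{s r_w}$, and cancel them against $C_0$, $\prod_{i_1}e(C_{i_1},D_{i_1})$ and $e(C_w,D_w)$ using $\vec X_{i_1}\cdot\vec Y_{i_1}=0$. Where you genuinely diverge is Stage~2. The paper does \emph{not} verify the delegated key algebraically; it instead invokes Lemma~\ref{lemma-original-delegated} (computational indistinguishability of fresh and delegated keys under DDH$_2$) and argues that an incorrectly decrypting delegated key would yield a distinguisher. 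Your route is a direct bookkeeping argument: you exhibit implicit exponents $r_{i_1}'=r_{i_1}+s_{\ell+1,i_1}$, $r_{\ell+1}'=z\,s_{\ell+1}$, $r_w'=r_w+s_{w,\ell+1}$ under which $(D',D_w',\{D_{i_1}'\}_{i_1=1}^{\ell+1})$ has exactly the \textbf{Keygen} form at level $\ell+1$, and since the Stage~1 identity holds for arbitrary (not necessarily uniform) exponents, correctness follows; iterating through $SK_{DL}'$ (whose components likewise acquire exponents $z s_{j,k,i_1}+\tau_{j,k}s_{\ell+1,i_1}$, etc.) handles arbitrary delegation chains. Your computation checks out against the Delegate algorithm, and it buys something the paper's shortcut does not: unconditional, exact correctness of delegated keys, rather than correctness only up to a negligible computational error and only under the DDH$_2$ assumption. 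The paper's reduction is shorter and reuses a lemma it needs anyway for security, but as a proof of a correctness statement your direct verification is the cleaner and stronger argument.
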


\begin{proof}

Let $SK_{(\vec{X}_1,\ldots,\vec{X}_\ell)}=\bigl(
 SK_D,SK_{DL}\big)$ be obtained by running   $\mathbf{Keygen}\big(\mathsf{msk},(\vec{X}_1,\ldots,\vec{X}_{\ell}) \big)$, where each attribute vector is
 $\vec{X}_{i_1}=(x_{i_1,1},\ldots,x_{i_1,\mu}) \in \Z_p^{\mu}$, for each $i_1 \in \{1,\ldots,\ell \}$.
  Let us write the decryption component of the key as $SK_D=(D,D_w,  \{D_{i_1}\}_{i_1=1}^\ell)$.

 Let $C=\bigl(C_0,C_v,C_{w}, \{C_{i_1,i_2}  \}_{i_1 \in
\{1,\ldots,\kappa \},~i_2 \in \{1,\ldots,\mu \} }  \bigr)$    be the
output of the encryption algorithm
$\mathbf{Encrypt}\big(\mathsf{mpk},(\vec{Y}_1,\ldots,\vec{Y}_\kappa),M
\big)$, for vectors $ \vec{Y}_1=(y_{1,1},\ldots,y_{1,\mu})$, \ldots,
$\vec{Y}_\kappa=(y_{\kappa,1},\ldots,y_{\kappa,\mu})$.

Since
$f_{(\vec{X}_1,\ldots,\vec{X}_{\ell})}(\vec{Y}_1,\ldots,\vec{Y}_\kappa)=1$,
and by the definition of hierarchical inner-product predicates, we
know that $\ell \leq \kappa $ and $\vec{X}_i \cdot \vec{Y}_i =0$ for
all $i \in \{1,\ldots,\ell\}$. Therefore, when the decryption
protocol
$\mathbf{Decrypt}\big(\mathsf{mpk},(\vec{X}_1,\ldots,\vec{X}_\ell),SK_{(\vec{X}_1,\ldots,\vec{X}_\ell)},C\big)$
computes $C_{i_1}$, for each $i_1 \in \{1,\ldots,\ell \}$, the
obtained value equals $C_{i_1} = \big(  \prod_{i_2=1}^{\mu}
h_{i_1,i_2}^{x_{i_1,i_2}} \big)^s $. The decryption protocol
computes then the pairing of this element $C_{i_1}$ with $D_{i_1} =
\hat{v}^{r_{i_1}}$. Multiplying all these pairings, for indices $i_1
\in \{1,\ldots,\ell \}$, one obtains $e\big( v^s\ ,\
\prod_{i_1=1}^{\ell} \big(   \prod_{ i_2 =1}^{\mu}
\hat{h}_{i_1,i_2}^{x_{i_1,i_2}} \big)^{r_{i_1} } \big)$. This value
is canceled out with one of the factors of $e(C_v,D)$, when
computing the final decryption operation $C_0 \cdot  e(C_v,D)^{-1}
\cdot e(C_{w},D_{w})  \cdot  \prod_{i_1=1}^{\ell}
e(C_{i_1},D_{i_1})$. The other two factors of $e(C_v,D)$ are
$e(v^s,\hat{g}^{\alpha})$ and $e(v^s,\hat{w}^{r_{w}})$, which cancel
out the factor $e(g,\hat{v})^{\alpha s}$, contained in $C_0= M \cdot
e(g,\hat{v})^{\alpha s}$, and the factor $e(C_{w},D_{w}) =
e(w^s,\hat{v}^{r_w})$, respectively. Therefore, the final
computation of the decryption protocol results in the plaintext $M$
contained in $C_0$.

In this way, we have proved that the encryption and decryption
protocols work correctly when the original secret keys (resulting
from Keygen) are used. The fact that the decryption protocol works
fine also with delegated secret keys (resulting from Delegate) is a
consequence of Lemma \ref{lemma-original-delegated}, inside the
proof of Theorem \ref{thm-HIPE}. If a delegated secret key could
lead to an incorrect decryption, then this fact could be used to
distinguish original secret keys from delegated ones, which would
contradict the statement of Lemma
\ref{lemma-original-delegated}.\end{proof}

\subsection{Proof of Theorem \ref{thm-HIPE}} \label{proof-HPE}

The proof considers a sequence of games starting with the real game
and ending with a game where the adversary has no advantage and wins
with probability exactly $1/2$. \\ \indent
   For each $i$, we denote by $S_i$ the event that the
adversary wins in Game$_i$. In the whole sequence of games, we call
$d^\star$ the depth of the challenge hierarchical vectors
$(\vec{Y}_1^0,\ldots,\vec{Y}_{d^\star}^0)$ and $(\vec{Y}_1^1,\ldots,
\vec{Y}_{d^\star}^1)$ and
$$C^\star=\big(C_0^\star,C_v^\star,C_w^\star,\{C_{i_1,i_2}^\star
\}_{i_1 \in \{1,\ldots,d^\star \}, i_2 \in \{1,\ldots,\mu\}}
 \big)$$
  denotes the challenge ciphertext.
\begin{description}
\item[Game$_{0}$:] is the real attack game at the end of which
the challenger outputs $1$ in the event, called $S_0$, that   the
adversary $\A$ manages to output $\beta' \in \{0,1\}$ such that
$\beta'=\beta$, where $\beta \in \{0,1\}$ is the challenger's hidden
bit in the challenge phase. If   $\beta'\neq \beta$, the challenger
outputs $0$.
\item[Game$_{1}$:] is identical to Game$_{0}$ with the difference that
the challenger always answers private key queries by returning fresh
private keys ({\it i.e.}, keys produced by $\mathbf{Keygen}$)
instead of deriving those keys using the delegation algorithm.
\item[Game$_2$:] is like Game$_{1}$ but the challenge ciphertext $C^\star$ is
now an encryption under
$(\vec{Y}_1^{\beta},\ldots,\vec{Y}_{d^\star}^{\beta})$ of a random
plaintext $M \sample \G_T$, which is chosen independently of $M_0$
and $M_1$.
\item[Game$_{3}$:] is identical to Game$_2$ with the difference that,
in the challenge ciphertext, $C_w^\star$ is replaced by a random
group element chosen uniformly and independently in $\G$.
\item[Game$_{4,i,j}$ $(1 \leq i \leq d^\star,~ 1\leq j \leq \mu)$:] is identical to Game $3$
with the difference that, in the challenge ciphertext,
$C_{i_1,i_2}^\star$ are replaced by random elements of $\G $ if $i_1
<i$ or $(i=i_1) \wedge (i_2 \leq j)$. Other group elements ({\it
i.e.}, for which $i>i_1$ or $(i=i_1) \wedge (i_2 >j)$)  are still
computed as in a normal challenge ciphertext.
\end{description}
In Game$_{4,d^\star,\mu}$, it is easy to see that the adversary $\A$
cannot guess $\beta \in \{0,1\}$ with higher probability than
$\Pr[S_{4,d^\star,\mu}]=1/2$ since the challenge ciphertext
$C^\star$ is completely independent of $\beta \in \{0,1\}$. \qed

\begin{lemma}\label{lemma-original-delegated}
Game$_0$ and Game$_1$ are computationally indistinguishable if the
DDH$_2$  assumption  holds in $(\G, \Ghat ) $.
\end{lemma}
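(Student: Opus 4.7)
My plan is a hybrid argument over the adversary's queries. Let $q$ be the total number of \textbf{Create-key} and \textbf{Create-delegated-key} queries and define hybrids $H_0 = \text{Game}_0, H_1, \ldots, H_q = \text{Game}_1$, where in $H_i$ the first $i$ keys requested by $\A$ are generated by $\mathbf{Keygen}$ on the appropriate hierarchical vectors, and the remaining keys are produced as in $\text{Game}_0$ (via $\mathbf{Keygen}$ for \textbf{Create-key} queries, via $\mathbf{Delegate}$ for \textbf{Create-delegated-key} queries). The indistinguishability of consecutive hybrids reduces to the following atomic claim: given a well-formed $\ell$-th level key $SK_{(\vec X_1, \ldots, \vec X_\ell)}$, a single application of $\mathbf{Delegate}$ with input $\vec X_{\ell+1}$ produces a key whose distribution is computationally indistinguishable from the output of $\mathbf{Keygen}(\mathsf{msk}, (\vec X_1, \ldots, \vec X_{\ell+1}))$.

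First I would analyze the decryption component $(D', D_w', \{D_{i_1}'\}_{i_1=1}^{\ell+1})$ of the delegated key and observe that its effective exponents have the form $r_{i_1}^{new} = r_{i_1}^{parent} + s_{\ell+1, i_1}$ for $i_1 \leq \ell$, $r_{\ell+1}^{new} = s_{\ell+1}$, and $r_w^{new} = r_w^{parent} + s_{w,\ell+1}$, where $s_{\ell+1, i_1} = z \cdot \sum_k s_{\ell+1,k,i_1} \cdot x_{\ell+1,k}$ and $s_{w,\ell+1} = z \cdot \sum_k s_{w,\ell+1,k} \cdot x_{\ell+1,k}$. Since the first step of $\mathbf{Delegate}$ raises the delegation component to a fresh $z \sample \Z_p^*$ and the parent's exponents are inductively uniform, the new exponents are statistically $\varepsilon$-close to uniform (the only loss is the overwhelmingly small probability that the relevant inner sums vanish, in which case $\vec X_{\ell+1} = \vec 0$, which is excluded by $\vec X_{\ell+1} \in \Sigma_{\ell+1}$). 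Hence the decryption component alone is statistically indistinguishable from a fresh one.

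The hard part is the delegation component $\{K_{j,k}', L_j', L_{j,k,i_1}', L_{w,j,k}'\}$. In a delegated key, all positions $(j,k)$ share a common seed coming from the partial decryption key $K_{\ell+1}$, re-randomized only by per-position $\tau_{j,k}$-th powers; this introduces multiplicative correlations that a fresh $\mathbf{Keygen}$ output does not exhibit (for instance, $L_{j,k,\ell+1}' = \widehat L_{\ell+1}^{\tau_{j,k}}$ is a power of a single base $\widehat L_{\ell+1}$). To hide this structure, I would embed a $\mathrm{DDH}_2$ challenge $(g, \hat g, \hat g^a, \hat g^b, T)$ so that $a$ plays the role of the common seed exponent in $\widehat L_{\ell+1}$ and the $b$ (together with locally-sampled scalars) implicitly defines the $\tau_{j,k}$. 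When $T = \hat g^{ab}$ the simulator produces a genuine delegated key, and when $T = \hat g^z$ it produces an output whose delegation component is distributed as a fresh $\mathbf{Keygen}$ output; an inner hybrid over the $O(d \mu)$ delegation positions handles the multiple $\tau_{j,k}$.

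The main obstacle is bookkeeping in the reduction: the simulator must reproduce the decryption component, the $\widehat K_{j,k}$ factors contributed by the (statistically uniform) $z$-randomized parent, and the embedded challenge in a consistent way, without knowing the discrete logarithms hidden inside the $\mathrm{DDH}_2$ instance. This is manageable because the simulator chooses all of $\valpha_h$, $\alpha_v$, $\alpha_w$ itself at setup time (so it can compute any exponent-in-the-exponent it needs), and because the only quantities it must leave implicit are the seed $a$ and randomizers $b$ that appear solely multiplicatively in the targeted position. A further subtlety, handled by induction on delegation depth, is that a queried key may itself be the result of several successive delegations; the atomic claim must then be applied once per level.
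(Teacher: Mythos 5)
Your outer hybrid over queries, and your identification of the shared $\tau_{j,k}$ re-randomizers in the delegation component as a DDH-type correlation requiring an inner hybrid over the $O(d\mu)$ positions, match the paper's strategy (its second intermediate claim, proved via a generalized DDH problem that reduces tightly to DDH$_2$). The gap is in how you dispose of everything else.

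You claim the decryption component of a delegated key is \emph{statistically} close to fresh because its exponents are $r_{i_1}^{parent}+z\sum_k s_{\ell+1,k,i_1}x_{\ell+1,k}$ with $z$ fresh. That is only a statement about the marginal distribution of $SK_D'$ in isolation. The adversary may issue \textbf{Reveal-key} on both the parent and the child; it then holds $D_{i_1}=\hat v^{r_{i_1}}$ and $L_{\ell+1,k,i_1}=\hat v^{s_{\ell+1,k,i_1}}$ from the parent and $D_{i_1}'$ from the child, and can test whether the ratios $D_{i_1}'/D_{i_1}$ are all the \emph{same} power $z$ of $\prod_k L_{\ell+1,k,i_1}^{x_{\ell+1,k}}$ (e.g.\ by checking whether two such pairs form a DDH tuple in $\Ghat$). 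A single fresh scalar $z$ cannot make the $\ell+2$ exponents of $SK_D'$ jointly uniform conditioned on the parent: the joint distribution is supported on a one-dimensional curve and is at large statistical distance from a fresh key. The same objection applies to your parenthetical ``(statistically uniform) $z$-randomized parent'': raising every element of $SK_{DL}$ to one common $z$ is information-theoretically detectable once the parent is revealed (the child exposes $\hat v^{zs_j}$ alongside the parent's $\hat v^{s_j}$), and is hidden only computationally. This is precisely the content of the paper's \emph{first} intermediate game hop (replacing $\widehat{SK}_{DL}$ by a fresh delegation component), which needs its own GDDH/DDH$_2$ reduction embedding the challenge into all of the parent's delegation elements at once. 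Your reduction embeds the challenge only into the $\tau_{j,k}$ structure, so the $z$-correlations linking the child's decryption component, the child's delegation component, and the revealed parent are left unaccounted for, and consecutive hybrids are not shown indistinguishable.
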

\begin{proof}
The lemma will be proved by a hybrid argument. We define
Game$_{0,i}$ for all $0\leq i\leq q$. Game$_{0,i}$ differs from
Game$_0$ in the fact that, when the adversary issues the first $i$
delegation queries, instead of generating the delegated keys
faithfully using the Delegate algorithm, the challenger calls the
Keygen algorithm to generate these delegated keys. For all the
remaining queries, the challenger computes keys and responds
faithfully as in Game$_0$. Under the above definition, Game$_{0,0}$
is the same as Game$_0$ and Game$_{0,q}$ is the same as Game$_1$. We
will prove that Game$_{0,\kappa}$ is indistinguishable from
Game$_{0,\kappa+1}$ for all $0\leq \kappa\leq q-1$. To this end, we
will proceed similarly to \cite{ShiWa08} and rely on a generalized
version (called GDDH hereafter) of the DDH problem in $\Ghat$.

Given a group generator $\textsf{GG}$, define the following
distribution $P(\varrho)$:

\begin{align*}
&\big(p,(\G,\Ghat,\G_T),e \big)\xleftarrow{R} \textsf{GG} (\varrho,1),\\
&g \xleftarrow{R}\G,\ \ \ghat \xleftarrow{R}\Ghat, \\
&\hat{h}_1,\hat{h_2},\dots,\hat{h_\ell} \xleftarrow{R} \Ghat \\
&\tau \xleftarrow{R}\Z_{p}\\
&X\gets \bigl( \big(p,(\G,\Ghat,\G_T),e \big),\ g,\ghat,\hat{h_1},\hat{h_2},\dots,\hat{h_\ell} \bigr)\\
&Q\gets \big(\hat{h_1^\tau} ,\hat{h_2^\tau } , \dots,
\hat{h_\ell^\tau} \big), \\  &\textrm{Output } (X,Q)
\end{align*}

For an algorithm $\A$, define $\A$'s advantage in solving the above
problem:

$$\ell \textrm{-}\textsf{GDDH Adv}_{\textsf{GG},\A}(\varrho):=\left| \Pr \left[\A(X,Q)=1\right]-\Pr\left[\A(X,R)\right]
\right|$$ where $(X,Q)\gets P(\varrho)$ and $R \gets \Ghat^{\ell}$.
It is immediate\footnote{The straightforward reduction computes a
GDDH instance from a DDH$_2$ instance
$\big(g,\ghat,\ghat^a,\ghat^b,\hat{\eta}\checkeq \ghat^{ab} \big)$
by setting $\hat{h_i}=\ghat^{\alpha_i} \cdot (\ghat^b)^{\beta_i}$
and $Q_i=(\ghat^a)^{\alpha_i} \cdot \hat{\eta}^{\beta_i}$ for $i=1$
to $\ell$ with $\alpha_1,\ldots,\alpha_\ell \sample \Z_p$,
$\beta_1,\ldots,\beta_{\ell} \sample \Z_p$. If
$\hat{\eta}=\ghat^{ab}$, we have
$Q=(\hat{h}_1^a,\ldots,\hat{h}_{\ell}^a)$ whereas, if $\eta \in_R
\Ghat$, $Q$ is a random vector of $\Ghat^\ell$.} that GDDH is not
easier than DDH$_2$ and  that the latter advantage function is
negligible if the DDH$_2$ assumption holds in $(\G,\Ghat)$.

To prove that Game$_{0,\kappa}$ is indistinguishable from
Game$_{0,\kappa+1}$ we will use another hybrid argument. We define
Game$_{0,\kappa}'$, which differs from Game$_{0,\kappa}$ in that,
for the $(\kappa+1)$-th delegation query,  $\widehat{SK}_{DL}$ is
the delegation component of a fresh key, instead of a delegation
component obtained by raising every element in $SK_{DL}$ to the same
random power $z \in_R \Z_p$. We show that a PPT adversary cannot
distinguish between the two games.

\indent  We also define Game$_{0,\kappa}''$, which differs from
Game$_{0,\kappa}'$ in that, for the $(\kappa+1)$-th delegation
query, instead of re-randomizing the components of the partial
decryption key with the same exponent $\tau_{j.k}$, $ \{
L_{\ell+1,i_1}^{(j,k)} \}_{i_1 =1}^{\ell+1}, L_{w,\ell+1}^{(j,k)}$
are randomized with different independently chosen exponents, while
$K_{j,k}'$ is chosen in such a way that the resulting key is still
valid. We also prove that no PPT adversary can notice the
difference.

\indent We will argue that  Game$_{0,\kappa}''=$Game$_{0,\kappa+1}$.
Indeed, in the first step, we change $\widehat{SK}_{DL}$ so that, in
step 2 of the Delegate algorithm, we obtain a randomized decryption
key (except for the $g^\alpha$ term). When multiplied by $SK_D$, it
gives a randomized decryption key for $(\vec X_1,\dots, \vec
X_{\ell+1})$. On the other hand, in step 2 of the hybrid proof we
change the partial decryption keys so that they also are randomized
keys except the $g^\alpha$ term.
\begin{claim}
Game$_{0,\kappa}$ is computationally indistinguishable from
Game$_{0,\kappa}'$.
\end{claim}
\begin{proof}
Let $q_0$ denote the maximum number of secret key queries (taking
into account both the ``Create-key'' and ``Create-delegated-key''
queries) made by the adversary. We build a simulator $\B$ that uses
$\A$ to break the following $(q_0 d \mu (d+1))$-GDDH assumption.
\begin{align*}
&(p,\G,\Ghat,\G_T,e)\xleftarrow{R} \textsf{GG} (\varrho,1),\\
&g  \xleftarrow{R}\G,\ \ \ghat  \xleftarrow{R}\Ghat \\
& \{\hat{v}_{i,i_1,i_2,k}\gets \Ghat\}_{i\in \{1,\ldots, q_0 \},
i_1\in \{1,\ldots, d \}, i_2\in \{ 0, \ldots, \mu \}, k\in
\{1,\ldots, d+1 \}}\\
&\tau \xleftarrow{R}\Z_{p}\\
&X\gets \big((p,\G,\Ghat,\G_T,e),\ g,\ghat,\ \
\{\hat{v}_{i,i_1,i_2,k}\}_{i\in \{1,\ldots, q_0 \}, i_1\in
\{1,\ldots, d \}, i_2\in \{1, \ldots, \mu \}, k\in
\{1,\ldots, d+1 \}} \big)\\
&Q\gets (\{\hat{v}_{i,i_1,i_2,k}^\tau\}_{i\in \{1,\ldots, q_0 \},
i_1\in \{1,\ldots, d \}, i_2\in \{ 1, \ldots, \mu \}, k\in
\{1,\ldots, d+1 \}})
\end{align*}

Then, the challenger randomly decides to give $(X,Q'=Q)$ or
$(X,Q'=R)$, where $R$ is a random vector of elements in $\Ghat$ of
the size of $Q$. The simulator will use $\A$ as a subroutine to
break the above problem. \\ \vspace{-0.3 cm}

\noindent \emph{Init and Setup}. At the beginning of the security
game, the adversary commits to two hierarchical vectors
$(\vec{Y_1}^0,\ldots,\vec{Y}_{d^\star}^0)$ and
$(\vec{Y_1}^1,\ldots,\vec{Y}_{d^\star}^1)$, where $d^\star \leq d$.
The simulator chooses the public and the secret key as usual
according to the Setup algorithm. Let $c=\log_v(w)$ and
$a_{i_1,i_2}=\log_v(h_{i_1,i_2})$. Note that these values are known
to the simulator since they are easily computable from
$\mathsf{msk}$.

\vspace{0.3cm}

\noindent \emph{Secret key queries}. We distinguish three cases:
\begin{itemize}
\item [$\bullet$] When a ``create-key'' query or one of the first $\kappa$ delegated secret key queries is made,
the simulator computes and saves a private key, which is given to
$\A$ when a ``reveal-key'' query is made. To compute this secret
key, the simulator uses the elements from the GDDH instance, in such
a way that the exponents are distributed at random. In particular,
if it is the $i$-th query, the simulator defines the components of
the decryption component of the key as:
\begin{eqnarray*}
D &=& g^\alpha\displaystyle\prod_{i_1=1}^\ell
D_{i_1}^{(\sum_{i_2=1}^\mu a_{i_1,i_2}x_{i_1,i_2})} \cdot D_w^c,
\qquad D_w= \hat{v}_{i,0,1,d+1}, \qquad \\
D_{i_1} &=& \hat{v}_{i,0,1,i_1} \ \  i_1\in \{1,\ldots,\ell\}.
\end{eqnarray*}
\noindent For the delegation component of the key, for all $j\in
\{\ell+1\ldots d\}$, all $k\in \{1,\ldots, \mu\}$, the simulator
lets:
\begin{align*}
L_j=\hat{v}_{i,j,1,\ell+1}, \qquad L_{j,k,i_1}=\hat{v}_{i,j,k,i_1}\
\ i_1\in \{1,\ldots, \ell\}, \qquad  L_{w,j,k}=\hat{v}_{i,j,k,d+1}.
\end{align*}
\noindent As the simulator knows the discrete logarithms
$c=\log_{\hat{v}}(\hat{w})$ and
$a_{i_1,i_2}=\log_{\hat{v}}(\hat{h}_{i_1,i_2})$, for each $j \in
\{\ell+1,\ldots d\}$ and all $k \in \{1,\ldots,\mu\}$, it can
compute the remaining components of the key as follows:
\begin{eqnarray}\label{expr:kjk}
K_{j,k}=\displaystyle\prod_{i_1=1}^\ell L_{j,k,i_1}^{(\sum_{i_2=1}^k
a_{i_1,i_2} \cdot x_{i_1,i_2})}  \cdot  L_j^{a_{j,k}} \cdot
L_{w,j,k}^c.
\end{eqnarray}

\item [$\bullet$] When the adversary makes the $(\kappa+1)$-{th} delegation query, it specifies a parent key and asks to fix the
level of the hierarchy to some vector $\vec X_{\ell+1}$. In
particular, assume that the parent key was created in the $i$-th
query. When performing Step 1 of the Delegate algorithm, for all
$j\in \{\ell+2,\ldots d\},\ k\in \{1,\ldots, \mu \}$, the simulator
sets
\begin{eqnarray*}
\widehat L_j &=& Q_{i,j,1,\ell+1}', \qquad \\
 \widehat L_{j,k,i_1} &=& Q_{i,j,k,i_1}' \qquad \qquad  \ i_1\in \{1,\ldots, \ell\},\qquad \\
 \widehat L_{w,j,k} &=& Q_{i,j,k,d+1}'
\end{eqnarray*}
and computes $K_{j,k}$, for  each $j \in \{\ell+2,\ldots d\}$,  $k
\in \{1,\ldots,\mu\}$ exactly in the same way as in expression
(\ref{expr:kjk}).
\item [$\bullet$] For all the remaining queries, the simulator responds
faithfully as in the real game.
\end{itemize}

\noindent Clearly, if $Q'=Q$ in the GDDH instance, then the above
simulation is identical to  Game$_{0,\kappa}$. Otherwise, it is
identical to  Game$_{0,\kappa}'$ since, in the the $(\kappa+1)$-{th}
delegation query, $Q=R$ implicitly defines a set of fresh random
values for $s_j,s_{j,k,i_1}, s_{w,j,k}$, for the appropriate values
of $j,k,i_1$.

\vspace{0.3cm}

\noindent \emph{Challenge} The simulator generates the challenge
ciphertext as normal.

\vspace{0.3cm}

 \noindent \emph{Guess} If the adversary has a
difference of $\epsilon$ in its advantage in Game$_{0,\kappa}$ and
Game$_{0,\kappa}'$, the simulator has a comparable advantage in
solving the GDDH instance. \hfill $\Box$
\end{proof}

\begin{claim}
Game$_{0,\kappa}'$ is computationally indistinguishable from
Game$_{0,\kappa}''$.
\end{claim}
\begin{proof}
To prove this claim, we will appeal to a nested hybrid argument. Let
Game$_{0,\kappa,0,0}'=$Game$_{0,\kappa}'$, and for $1\leq \eta\leq
(d-\ell-1)$, $1\leq \nu\leq \mu$, define Game$_{0,\kappa,\eta,\nu}'$
as the game that differs from Game$_{0,\kappa}'$ in the following:
in the step of the delegation algorithm where the components
$\{L_{\ell+1,i_1}^{(j,k)}\}_{i_1\in \{1,\ldots,\ell+1\}},
L_{w,\ell+1}^{(j,k)}$ are created by re-randomizing the partial
decryption key with some exponent $\tau_{j,k}$, we re-randomize
instead each of these components with a different exponent chosen
uniformly and independently at random whenever $(j,k)\leq
(\eta+\ell+1,\nu)$ (in lexicographic order). Observe that, by
definition, Game$_{0,\kappa,d-\ell-1,\mu}'=$Game$_{0,\kappa}''.$
\noindent   We will show that an adversary cannot distinguish
between one game and the next. That is, if we define and
Game$_{0,\kappa,\eta,\mu+1}'=$Game$_{0,\kappa,\eta+1,0}'$ to
simplify the notation, what we we will show is that no polynomial
time adversary can distinguish between Game$_{0,\kappa,\eta,\nu}'$
and Game$_{0,\kappa,\eta,\nu+1}'$.

\noindent The simulator tries to solve the following
$(\mu(d+1))$-GDDH instance.
\begin{align*}
&(p,\G,\Ghat,\G_T,e)\xleftarrow{R} \textsf{GG} (\varrho,1),\\
&g\xleftarrow{R}\G,\ \ghat \xleftarrow{R}\Ghat\\
& \{ \hat{v}_{i,k}\gets \Ghat\}_{i\in \{1,\ldots, \mu\}, k\in
\{1,\ldots,d+1\}}\\
&\tau \xleftarrow{R}\Z_{p}\\
&X\gets ((n,\G,\Ghat,\G_T,e),\ g,\ghat,\ \ \{ \hat{v}_{i,k}\}_{i\in
\{1,\ldots, \mu\}, k\in
\{1,\ldots,d+1\}})\\
&Q\gets (\{\hat{v}_{i,k}^\tau \}_{i\in \{1,\ldots, \mu\}, k\in
\{1,\ldots,d+1\}})
\end{align*}
The simulator tries to distinguish between $(X,Q'=Q)$ and $(X,Q'=R)$
where $R$ is a random vector from $\Ghat$. The simulator uses as a
subroutine an adversary $\A$ who can distinguish between
$\textsf{Game}'_{0,\kappa,\eta,\nu}$ and
$\textsf{Game}_{0,\kappa,\eta,\nu+1}'$. \vspace{0.3cm}

The simulator runs the Setup algorithm as usual in such a way that
it knows the discrete logarithms
$c=\log_v(w)=\log_{\hat{v}}(\hat{w})$ and $a_{i_1
i_2}=\log_v(h_{i_1,i_2})=\log_{\hat{v}}(\hat{h}_{i_1,i_2})$ for any
$1\leq i_1 \leq d$, $0\leq i_2\leq \mu$.

\noindent To answer secret key queries, the simulator proceeds as
follows:
\begin{itemize}
\item [$\bullet$] For the first $\kappa$-th
delegation queries and all of the ``create-key'' queries, the
simulator computes the keys freshly at random.
\item [$\bullet$]At the $(\kappa+1)$-th delegation query,
the adversary specifies a parent key and requests to fix the
$(\ell+1)$-th level of the hierarchy to $\vec X_{\ell+1}$. To answer
this query, the simulator first generates some components of
$\widehat{SK}_{DL}$ simply by choosing at random the values
 $\widehat L_j,\widehat L_{w,j,k}$ for all $j \in \{\ell+2,\ldots,d\}, k\in \{1,\ldots,\mu\},
i_1\in\{1,\ldots,\ell\}$. To compute the remaining components and
the decryption key component, the simulator sets
\begin{eqnarray*}
L_{\ell+1,i_1} &=& \prod_{k=1}^{\mu} \widehat L_{\ell+1,k,i_1}^{x_{\ell+1,k}}=\prod_{k=1}^{\mu} (\hat{v}_{k,i_1})^{x_{\ell+1,k}} \qquad \qquad i_1 \in \{1,\ldots,\ell \}\\
L_{\ell+1,\ell+1}&=&\widehat
L_{\ell+1}=\hat{v}_{1,\ell+1}\\
L_{w,\ell+1} &=& \prod_{k=1}^{\mu} \widehat
L_{w,\ell+1,k}^{x_{\ell+1,k}} = \prod_{k=1}^{\mu}
(\hat{v}_{k,d+1})^{x_{\ell+1,k}}.
\end{eqnarray*}
Since  the simulator knows the discrete logarithms
$c=\log_{\hat{v}}(\hat{w})$ and
$a_{i_1,i_2}=\log_{\hat{v}}(\hat{h}_{i_1,i_2})$, the remaining
components of $\widehat{SK}_{DL}$ and those of the partial
decryption key   can be generated efficiently in the same way as in
the proof of indistinguishability of Game$_{0,\kappa}$ and
Game$_{0,\kappa}'$. In particular, the simulator can compute
\begin{eqnarray*}
 K_{\ell+1}=\prod_{i_1=1}^{\ell} \prod_{i_2=1}^{\mu}
L_{\ell+1,i_1}^{a_{i_1,i_2}} \cdot
L_{\ell+1,\ell+1}^{\sum_{k=1}^{\mu} a_{\ell+1,k}} \cdot
L_{w,\ell+1}^c.
\end{eqnarray*}

To create $K_{\ell+1}^{(j,k)},\{L_{\ell+1,i_1}^{(j,k)}\}_{ i_1\in
\{1,\ldots,\ell+1\}}, L_{w,\ell+1}^{(j,k)}$, if $(j,k)\leq
(\eta+\ell+1,\nu)$ (in lexicographic order), the values are chosen
as fresh random delegation keys. For the $(\eta+\ell+1,\nu+1)$
partial decryption key, the simulator lets
\begin{eqnarray*}
L_{\ell+1,i_1}^{(\eta+\ell+1,\nu+1)} &=& \prod_{k=1}^{\mu} (Q'_{k,i_1})^{x_{\ell+1,k}}     \qquad \qquad i_1 \in \{1,\ldots,\ell \} \\
L_{\ell+1,\ell+1}^{(\eta+\ell+1,\nu+1)}&=&Q'_{1,\ell+1}\\
L_{w,\ell+1}^{(\eta+\ell+1,\nu+1)} &=& \prod_{k=1}^{\mu}
(Q'_{k,d+1})^{x_{\ell+1,k}}.
\end{eqnarray*}
Again, as the simulator knows the discrete logarithm of
$\hat{w},\hat{h}_{i_1,i_2}$ w.r.t. the base $\hat{v}$, the remaining
terms -- including $K_{\ell+1}^{(\eta+\ell+1,\nu+1)} $ -- can be
generated efficiently. For the remaining partial decryption keys,
the simulator generates them faithfully using the Delegate
algorithm.
\item [$\bullet$] The remaining delegated key queries are generated faithfully.
\end{itemize}
\noindent Clearly, if $Q'=Q$ in the GDDH instance, then the above
simulation is identically distributed as
Game$_{0,\kappa,\eta,\nu}'$, otherwise it is identically distributed
as
Game$_{0,\kappa,\eta,\nu+1}'$.    \\
\vspace{0.3cm}

The simulator generates the challenge ciphertext as normal and sends
it to the adversary. If the adversary has $\epsilon$ difference in
its advantage in Game$_{0,\kappa,\eta,\nu}'$ and
Game$_{0,\kappa,\eta,\nu+1}'$, it is not hard to see that the
simulator has a comparable advantage in solving the GDDH
instance. 
$\hfill$ $\Box$
\end{proof}
\end{proof}

\begin{lemma}
Game$_{1}$ and Game$_2$ are computationally indistinguishable if the
BDH assumption  holds in $(\G,\Ghat,\G_T)$.
\end{lemma}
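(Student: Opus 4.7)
The plan is to reduce Game$_1$ vs Game$_2$ indistinguishability to the BDH assumption by embedding the BDH challenge $Z$ as the message-blinding factor in the challenge ciphertext. Given a BDH instance $(g, g^a, g^c, \ghat, \ghat^a, \ghat^b, Z)$, a reduction $\B$ aims to set $C_0^\star := M_\beta\cdot Z$ so that if $Z=e(g,\ghat)^{abc}$ the simulated transcript coincides with Game$_1$ (an honest encryption of $M_\beta$), and if $Z$ is uniform in $\G_T$ it coincides with Game$_2$ (the message is information-theoretically hidden). Because the HPE is selectively weakly attribute-hiding, $\A$ commits up front to target vectors $(\vec{Y}_1^0,\ldots,\vec{Y}_{d^\star}^0)$ and $(\vec{Y}_1^1,\ldots,\vec{Y}_{d^\star}^1)$, so $\B$ picks $\beta\sample\{0,1\}$ and programs $\mathsf{mpk}$ relative to $\vec{Y}^\beta$.

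In the setup, $\B$ chooses $\alpha_v,\alpha_w$ itself (so $v,w,\hat{v},\hat{w}$ are all known group elements) and sets the public pairing $e(g,\hat{v})^\alpha$ equal to $e(g^a,\ghat^b)=e(g,\ghat)^{ab}$, implicitly committing the master secret $\alpha$ to the unknown value $ab/\alpha_v$. Each $\hat{h}_{i_1,i_2}$ is then programmed Waters-style as a known combination $\ghat^{\phi_{i_1,i_2}}\cdot(\ghat^a)^{\psi_{i_1,i_2}}$, with the $\psi$-coefficients chosen to satisfy $\psi_{i_1,0}\cdot y_{i_1,i_2}^\beta+\psi_{i_1,i_2}=0$ for every $i_2\ge 1$. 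This relation annihilates the $\ghat^a$-component of $h_{i_1,0}^{y_{i_1,i_2}^\beta}\cdot h_{i_1,i_2}$ at the target $\vec{Y}^\beta$, so each challenge ciphertext component $C_{i_1,i_2}^\star$---as well as $C_v^\star=(g^c)^{\alpha_v}$ and $C_w^\star=(g^c)^{\alpha_w}$---becomes a computable power of $g^c$, circumventing the fact that BDH does not supply $g^{ac}$.

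Secret-key queries are answered by exploiting the weakly-attribute-hiding restriction, which forces $\langle \vec{X}_{i_1^*},\vec{Y}_{i_1^*}^\beta\rangle\neq 0$ for some $i_1^*$. Following the Boneh-Boyen/Waters IBE reduction, $\B$ uses this nonzero inner product as a cancellation handle and implicitly shifts $r_{i_1^*}$ by a suitable $b$-multiple, chosen so that the $ab$-contribution coming from $(\prod_{i_2}\hat{h}_{i_1^*,i_2}^{x_{i_1^*,i_2}})^{r_{i_1^*}}$ exactly cancels the uncomputable factor $\ghat^\alpha=\ghat^{ab/\alpha_v}$ in $D$. The remaining $r_{i_1}$'s and $r_w$ are drawn uniformly and absorb the $b$-shift perfectly, so the simulated key is identically distributed to the real one while lying entirely inside the BDH-computable span $\langle\ghat,\ghat^a,\ghat^b\rangle$; the full delegation component $\mathbf{SK}_{DL}$ is produced by the same programming at each remaining level. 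Setting $C_0^\star:=M_\beta\cdot Z$ then completes the simulation, and any distinguishing advantage of $\A$ yields a matching BDH advantage for $\B$.

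The main obstacle is engineering the $\{\phi_{i_1,i_2},\psi_{i_1,i_2}\}$ and the corresponding $b$-shifts in the key randomness so that simultaneously (i) the challenge ciphertext at $\vec{Y}^\beta$ never needs $g^{ac}$, (ii) every non-matching secret key together with its full delegation component stays inside $\langle\ghat,\ghat^a,\ghat^b\rangle$, and (iii) the joint distribution of $\mathsf{mpk}$, queried keys, and $C^\star$ is identical to the real scheme and reveals nothing about $\beta$ beyond what $Z$ already captures. Lifting this single-identity Waters-type programming to hierarchical inner-product vectors---and making sure it continues to work through the Delegate procedure at every additional level---is where the bulk of the algebraic care goes.
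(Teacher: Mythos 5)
Your proposal is correct and follows essentially the same route as the paper's proof: embed the BDH target $Z$ as the blinding factor of $C_0^\star$, implicitly set $\alpha$ proportional to $ab$, program $h_{i_1,0},h_{i_1,i_2}$ so that $h_{i_1,0}^{y^\beta_{i_1,i_2}}\cdot h_{i_1,i_2}$ loses its $g^a$-component at the target vector (making the challenge a pure power of $g^c$), and use the guaranteed non-orthogonal level for a Boneh--Boyen-style shift of $r_{i_1^*}$ to cancel $\ghat^{ab}$ in $D$. The only cosmetic difference is that the paper observes $\mathbf{SK}_{DL}$ contains no $\ghat^{\alpha}$ term and is therefore directly computable (no extra programming needed there), and it handles keys of depth $\ell>d^\star$ by extending $\vec{Y}^\beta$ with hidden random vectors so a non-orthogonal level exists with overwhelming probability.
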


\begin{proof} Assume that there's an adversary $\A$ that can distinguish between
Game$_{1}$ and Game$_{2}$. We build an adversary $\B$ that uses $\A$
as a subroutine to break the BDH assumption. The simulator $\B$
receives as input $$(g,g^a,g^c,\hat{g},\hat{g}^a,\hat{g}^b,Q'),$$
where $Q'$ is either $e(g,\hat{g})^{abc}$ or an element chosen
uniformly at random in $\G_T$.

The adversary $\A$ commits to two vectors $(\vec Y_1^0,\dots,\vec
Y_{d^{\star}}^0)$ and $(\vec Y_1^1,\dots,\vec Y_{d^{\star}}^1)$. The
challenger $\B$ picks  $\beta \sample \{0,1\}$. If $d^{\star} < d$,
the simulator picks at random $d-d^{\star}$ vectors
$Y_{d^{\star}+1}^{\beta},\ldots, Y_{d}^{\beta} \in
\mathbb{Z}_p^{\mu}$. The simulator $\B$ runs the Setup algorithm as
usual except that it implicitly sets $\alpha$ to be $ab$ by defining
$e(g,\hat{v})^{\alpha}=e(g^a,\hat{g}^b)^{\alpha_v}$, and then, for
$i_1=1,\ldots,d$ and $i_2=1,\ldots ,\mu$, it defines
$h_{i_1,i_2}=(g^a)^{z_{i_1}y_{i_1,i_2}}g^{t_{i_1,i_2}}$ for
$i_1\in\{1\dots,d\}$, $i_2\in\{1,\dots,\mu\}$ and
$h_{i_1,0}=(g^a)^{-z_{i_1}}$, for some random exponents
$z_{i_1},t_{i_1,i_2}$. The values of $\hat{h}_{i_1,i_2}$ for
$i_1\in\{1\dots,d\}$, $i_2\in\{0,\dots,\mu\}$ are defined similarly
from the value $\hat{g}^a$ of the BDH instance.  Observe that
$\beta$ remains hidden from $\A$ and that the parameters are
correctly distributed.

For the secret key queries, in the last lemma we have just proven
that delegated keys are indistinguishable from freshly generated
ones. Therefore, $\B$ will generate the secret keys using algorithm
Keygen when a reveal query is made. Note that $\alpha$ is defined as
$ab$, which is not known to $\B$. However, $\B$ needs to simulate
secret keys for $(\vec X_1,\dots,\vec X_\ell)$ as long as $f_{(\vec
X_1,\dots,\vec X_\ell)}(\vec Y_1^\beta,\dots,\vec
Y_{d^{\star}}^\beta)=0$. As $\alpha$ does not appear in the
delegating component of the key, the delegation component of the
secret keys $SK_{DL}$ can be created using the parameters as usual.
Therefore, from now on we focus on how to create the decryption
component of the secret key. Denote by $\ell'$ the index of the
smallest element of the vector $(\vec X_1,\dots,\vec X_\ell)$ for
which $\vec X_{\ell'}\cdot \vec Y_{\ell'}^\beta\neq 0$. This value
$\ell'$ always exists if $\ell \leq d^{\star} $ by hypothesis and
exists with overwhelming probability if $\ell > d^{\star}$, since
the vectors $Y_{d^{\star}+1}^{\beta},\ldots, Y_{d}^{\beta} \in
\mathbb{Z}_p^{\mu}$ are completely hidden from $\A$'s view.

The simulator creates the decryption component of the secret key as
follows:
$$D=\hat{g}^{ab}\displaystyle\prod_{i_1=1}^\ell\left(\prod_{i_2=1}^\mu
\hat{h}_{i_1,i_2}^{x_{i_1,i_2}}\right)^{r_{i_1}}\hat{w}^{r_w}$$
where the term $\hat{g}^{ab}\left(\prod_{i_2=1}^\mu
\hat{h}_{\ell',i_2}^{x_{\ell',i_2}}\right)^{r_{\ell'}}$ is computed
as
$$\left(\prod_{i_2=1}^\mu \hat{h}_{\ell',i_2}^{x_{\ell',i_2}}\right)^{\hat
r_{\ell'}}(\hat{g}^b)^{-(\sum_{i_2=1}^\mu
x_{\ell',i_2}t_{\ell',i_2})/(z_{\ell'}\vec X_{\ell'}\cdot\vec
Y_{\ell'}^{\beta})},$$ while the other terms in the product are just
computed as usual. It is not hard to see that, if we define
$r_{\ell'}=\hat r_{\ell'}-b/(z_{\ell'}\vec X_{\ell'}\cdot\vec
Y_{\ell'}^{\beta})$, then the computation is correct. All the other
terms in the decryption component can be computed efficiently, since
the simulator knows all the parameters needed and it also knows
$\hat{g}^b$.

At the challenge step, the adversary $\A$ gives $\B$ two messages,
$M_0$ and $M_1$. $\B$ then computes
\begin{eqnarray*}
C_{0} &=& M_{\beta}\cdot (Q')^{\alpha_v} ,\qquad \qquad
C_v=(g^{c})^{\alpha_v},\qquad \qquad C_w=(g^c)^{\alpha_w}, \qquad  \\
\{C_{i_1,i_2} &=&
(g^c)^{t_{i_1,i_2}}\}_{i_1\in\{1,\dots,d^{\star}\},\
i_2\in\{1,\dots,\mu\}}
\end{eqnarray*}
It is not hard to check that the challenge ciphertext is correctly
distributed.

Finally, when the adversary $\A$ outputs a guess $\beta'$, if
$\beta=\beta'$, then $\B$ guesses that $Q'=e(g,\hat{g})^{abc}$ and
if $\beta\neq\beta'$ guesses that $Q'=R$. If $\A$ has $\epsilon$
advantage in distinguishing between the two cases, then $\B$ also
has $\epsilon$ advantage in solving the assumption G3DH instance,
except in the case that $\A$ managed to output some secret key query
for vector $(\vec{X_1},\ldots,\vec{X}_{\ell})$ for which $\vec{X}_i
\cdot \vec{Y}_i^{\beta}=0$ for all $i=d^{\star}+1,\ldots,d$ was zero
for all $i \in \{1,\ldots, d\}$, which occurs only with negligible
probability.
\end{proof}

\begin{lemma} \label{1-2}
Game$_{2}$ and Game$_{3}$ are computationally indistinguishable if
the $\mathcal{P}$-BDH$_1$ assumption  holds in $(\G,\Ghat)$.
\end{lemma}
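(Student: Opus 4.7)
The plan is to build a reduction $\mathcal{B}$ that, given a $\mathcal{P}$-BDH$_1$ instance $(g,g^b,g^{ab},g^c,\hat{g},\hat{g}^a,\hat{g}^b,T)$, perfectly simulates Game$_2$ when $T=g^{abc}$ and Game$_3$ when $T$ is uniform in $\G$; the $\mathcal{P}$-BDH$_1$-advantage of $\mathcal{B}$ then upper-bounds $|\Pr[S_2]-\Pr[S_3]|$. The guiding embedding is $v=g$, $\hat{v}=\hat{g}$, and $w=g^{ab}$, so $\hat{w}=\hat{g}^{ab}$ is \emph{not} directly available; the challenge encryption exponent $s$ is implicitly set to $c$, giving $C_v^\star=g^c$ and, crucially, $C_w^\star=T$, which equals $w^s$ precisely when $T=g^{abc}$ and is uniform otherwise.

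To produce the remaining ciphertext components I reuse the Boneh-Boyen-style programming already employed in the Game$_1$/Game$_2$ proof. After picking $\beta\sample\{0,1\}$ and extending the committed $(\vec{Y}^\beta_1,\ldots,\vec{Y}^\beta_{d^\star})$ with fresh uniformly random vectors at levels $d^\star+1,\ldots,d$, the simulator chooses random $z_{i_1},t_{i_1,i_2}\in\Z_p$ and sets $h_{i_1,0}=(g^b)^{-z_{i_1}}$ and $h_{i_1,i_2}=(g^b)^{z_{i_1}y^\beta_{i_1,i_2}}\cdot g^{t_{i_1,i_2}}$, together with the matching elements of $\Ghat$ computed from $\hat{g}^b$. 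The $(g^b)^{z_{i_1}}$ factors cancel in $C_{i_1,i_2}^\star=(h_{i_1,0}^{y^\beta_{i_1,i_2}}\cdot h_{i_1,i_2})^c=(g^c)^{t_{i_1,i_2}}$, so the challenge ciphertext is computable without ever needing $g^{bc}$.

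The main obstacle is secret-key simulation, since each honest key component $D$ contains the factor $\hat{w}^{r_w}=\hat{g}^{ab r_w}$ and $\hat{g}^{ab}$ is exactly the element missing from the $\mathcal{P}$-BDH$_1$ instance. To resolve it, I rely on the reveal-key restriction: for every queried vector $(\vec{X}_1,\ldots,\vec{X}_\ell)$ there must exist an index $\ell'$ with $\pi_{\ell'}:=\vec{X}_{\ell'}\cdot\vec{Y}^\beta_{\ell'}\neq 0$, which fails only with negligible probability thanks to the random padding of $\vec{Y}^\beta$ above level $d^\star$. I then implicitly reparameterize $r_{\ell'}=\tilde{r}_{\ell'}-a r_w/(z_{\ell'}\pi_{\ell'})$ with fresh uniform $\tilde{r}_{\ell'},r_w\in\Z_p$, leaving the other $r_{i_1}$ as uniform scalars. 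A short exponent computation shows that the $\hat{g}^{-ab r_w}$ factor arising inside $(\prod_{i_2}\hat{h}_{\ell',i_2}^{x_{\ell',i_2}})^{r_{\ell'}}$ exactly cancels $\hat{w}^{r_w}=\hat{g}^{ab r_w}$ in $D$, so the residue lies in the linear span of $\hat{g},\hat{g}^a,\hat{g}^b$ and is computable; likewise $D_{\ell'}=\hat{g}^{\tilde{r}_{\ell'}}\cdot(\hat{g}^a)^{-r_w/(z_{\ell'}\pi_{\ell'})}$ and $D_w=\hat{g}^{r_w}$ are computable, and $\tilde{r}_{\ell'}$ is a uniform reparameterization of $r_{\ell'}$ for every fixed $r_w$, so the simulated key has the honest distribution. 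The same idea, using the same $\ell'$, handles the delegation component: each triple $(K_{j,k},L_{j,k,i_1},L_{w,j,k})$ has the algebraic shape of $(D,D_{i_1},D_w)$, and reparameterizing $s_{j,k,\ell'}$ analogously absorbs $\hat{w}^{s_{w,j,k}}$.

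Putting the pieces together, when $T=g^{abc}$ the adversary's view matches Game$_2$ exactly, and when $T$ is uniform in $\G$, $C_w^\star=T$ is uniformly random and independent of the rest of the view, producing Game$_3$. The only loss is the negligible probability that the random padding of $\vec{Y}^\beta$ above level $d^\star$ leaves $\pi_{i_1}=0$ for every $i_1$ in some reveal-key query, yielding $|\Pr[S_2]-\Pr[S_3]|\leq\mathbf{Adv}^{\mathcal{P}\textrm{-}\mathrm{BDH}_1}(\mathcal{B})+\mathsf{negl}(\varrho)$.
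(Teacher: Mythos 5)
Your reduction is correct and follows essentially the same strategy as the paper's proof: embed the $\mathcal{P}$-BDH$_1$ challenge in $w$ (the paper uses the blinded form $w=g^{y}\cdot(g^{ab})^{x}$, you use $w=g^{ab}$ directly, which is only a negligible statistical difference), program the $h_{i_1,i_2}$ in Boneh--Boyen style around $\vec{Y}^{\beta}$ so the challenge ciphertext is computable with implicit exponent $s=c$, and answer key queries by shifting the randomness $r_{\ell'}$ at a level with nonzero inner product to cancel the unknown $\hat{w}^{r_w}=\hat{g}^{ab r_w}$. No gaps.
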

\begin{proof}
Assuming that the adversary $\A$ outputs $\beta'=\beta$ with
noticeably different probabilities in Game$_2$ and Game$_3$, we
build an distinguisher $\B$ for the $\mathcal{P}$-BDH$_1$ assumption . \\
\indent Namely, algorithm $\B$ receives as input a tuple
$$\big(g,~~g^b,~g^{ab},~g^c,~\ghat,~\ghat^{a},~
~\ghat^{b} ,~g^{z}  \big),$$ where $a,b,c \sample \Z_p$. Its goal is to decide if $z=abc$ or $z\in_R \Z_p$. \\
\indent To this end, $\B$ interacts with $\A$ as follows. It first
receives the vectors $(\vec{Y}_{0}^0,\ldots,\vec{Y}_{d^\star}^0)$,
$(\vec{Y}_{0}^1,\ldots,\vec{Y}_{d^\star}^1)$, at some depth $d^\star
\leq d$, that $\A$ wishes to be challenged upon. We may assume
w.l.o.g. that $\B$ chooses its challenge bit $\beta \sample \{0,1\}$
at the beginning of the game. Also, if $d^\star < d$, for each $i
\in \{d^\star+1,\ldots,d\}$, $\B$ defines the vector
$\vec{Y}_{i}^{\beta}$ as a random vector of $\Z_p^{\mu}$. \\
\indent It defines the master public key by setting
$e(g,\hat{v})^{\alpha} = e(g,\ghat)^{\alpha \cdot \gamma_v}$ and
\begin{eqnarray} \nonumber
v &=& g^{^{\gamma_v}} ,  \\ \nonumber h_{i_1,0} &=&
(g^b)^{\gamma_{i_1,0}}  \qquad \qquad \qquad
\qquad   \quad    ~~~ i_1 \in \{1,\ldots,d\}  \\
\label{pub-key}
 h_{i_1,i_2} &=&  (g^b)^{-\gamma_{i_1,0} \cdot y_{i_1,i_2}^{\beta} } \cdot   g^{\gamma_{i_1,i_2}}
  \qquad   \qquad i_1 \in \{1,\ldots,d\},~ i_2
\in \{1,\ldots,\mu\} \\ \nonumber w  &=& g^{y} \cdot (g^{ab} )^x
 ,
\end{eqnarray}
 where $\alpha,\gamma_v,x,y  \sample \Z_p$ and $\gamma_{i_1,i_2} \sample \Z_p$,  for each $i_1 \in
\{1,\ldots,d\}$, $i_2 \in \{0,\ldots,\mu\}$. For each $i_1$, we also
define the vector
$\vec{\gamma}_{i_1}=(\gamma_{i_1,1},\ldots,\gamma_{i_1,\mu}) \in
\Z_p^{\mu}$. We observe that $\B$ does not know $\what$ (which
depends on the unavailable term $\ghat^{ab}$) but {\it can} compute
 $\{\hat{h}_{i_1,i_2}\}_{i_1 \in \{1,\ldots,d\},i_2 \in \{1,\ldots,\mu\}}$.\\
\indent When the adversary  $\A$ requests a  key for a hierarchical
vector $\mathbf{X}=(\vec{X}_1,\ldots,\vec{X}_{\ell})$, $\B$ parses
$\vec{X}_{i_1}$ as  $(x_{i_1,1},\ldots,x_{i_1,\mu}) \in \Z_p^{\mu}$
for
each $i_1 \in \{1,\ldots,\ell \}$. Then, $\B$ responds as follows. \\
\vspace{-0.3 cm}

\noindent $\bullet$ If $\ell \leq d^\star$, let $i \in
\{1,\ldots,\ell\}$ be the smallest index  such that $\vec{X}_i \cdot
\vec{Y}_i^{\beta} \neq 0$   (by hypothesis, this index must exist).
By choosing $ r_{w}  \sample \Z_p$ and $r_i \sample \Z_p$, $\B$
implicitly defines the exponent
\begin{eqnarray*}
\tilde{r_i} &=&  {r_i}  + \frac{a \cdot r_w \cdot x }{ \gamma_{i,0}
\cdot \vec{X}_i \cdot \vec{Y}_i^{\beta} }
\end{eqnarray*}
and can compute
\begin{eqnarray} \nonumber
( \prod_{i_2=1}^{\mu} \hat{h}_{i,i_2}^{x_{i,i_2}} )^{\tilde{r}_i}
\cdot \hat{w}^{r_w}    &=& \big( \prod_{i_2=1}^{\mu}
\hat{h}_{i,i_2}^{x_{i,i_2}} \big)^{ {r}_i} \cdot   \bigl(
(\hat{g}^{b})^{ -\gamma_{i,0} \cdot \vec{X}_i \cdot
\vec{Y}_i^{\beta} } \cdot \hat{g}^{\vec{\gamma_i} \cdot
\vec{Y}_i^{\beta}} \bigr)^{a \cdot \frac{ r_w \cdot x
}{\gamma_{i,0} \cdot \vec{X} \cdot \vec{Y}_i^{\beta}}} \\
\label{key-gen-1}  & & \qquad \cdot (\ghat^{ab})^{r_w \cdot x  } \cdot \ghat^{r_w \cdot y  } \\
\nonumber &=& \big( \prod_{i_2=1}^{\mu} \hhat_{i,i_2}^{x_{i,i_2}}
\big)^{ {r}_i} \cdot (\ghat^a)^{\vec{\gamma_i} \cdot
\vec{Y}_i^{\beta} \cdot \frac{r_w \cdot x }{\gamma_{i,0} \cdot
\vec{X} \cdot \vec{Y}_i^{\beta}}} \cdot \ghat^{r_w \cdot y  }
\end{eqnarray}
without knowing $\ghat^{ab}$. Similarly, it can   compute
  $$D_i=\hat{v}^{\tilde{r}_i} =\ghat^{\gamma_v \cdot r_i} \cdot (\ghat^{a})^{ \gamma_v \cdot r_w \cdot x
  /(\gamma_{i,0} \cdot \vec{X} \cdot \vec{Y}_i^{\beta})}  $$ as well as
$D_w=\vhat^{r_w} $. \\
\indent We now turn to indices $i_1 \in \{1,\ldots,\ell\} \backslash
\{i\}$, for which $\B$ can trivially compute $( \prod_{i_2=1}^{\mu}
\hhat_{i,i_2}^{x_{i,i_2}} )^{{r}_{i_1}}$ and
$D_{i_1}=\vhat^{r_{i_1}} \cdot S_{v,i_1}$ since
it knows $\{\hhat_{i_1,i_2}\}_{i_2=0}^{\mu}$ and $v$. This   suffices for computing the whole decryption component $SK_D$ of the private key.\\
\indent As for the delegation component $SK_{DL}$, 
$\B$ can  compute $\{K_{j,k}\}_{j\in \{\ell+1,\ldots,d\}, k \in
\{1,\ldots,\mu\}}$, $\{L_j\}_{j=\ell+1}^d$ and $\{L_{w,j,k}
\}_{j,k}$ by applying exactly the same procedure as for $D$, $D_i$
and $D_w$  (and taking advantage of the fact that $\vec{X}_i \cdot
\vec{Y}_i^{\beta} \neq 0$ for at least one $i\in
\{1,\ldots,\ell\}$). Remaining pieces of $SK_{DL}$ are then
trivially computable since $\B$ has
$\{\hhat_{i_1,i_2}\}_{i_2=0}^{\mu}$ and $\vhat$ at disposal. \\
\vspace{-0.3 cm}

\noindent $\bullet$ If $\ell > d^\star$, it can be the case that
$\vec{X}_i \cdot \vec{Y}_i^{\beta}=0$ for $i=1$ to $\ell$. However,
with overwhelming probability, there must exist $i \in
\{d^\star+1,\ldots,\ell\}$ such that $\vec{X}_i \cdot
\vec{Y}_i^{\beta} \neq 0$ since the vectors
$\vec{Y}_{d^\star+1},\ldots,\vec{Y}_d$ have been chosen at random
and, due to the generation of the public key as per (\ref{pub-key}),
they are completely independent of $\A$'s view. It comes that $\B$
can generate a private key in the same way as in the case $\ell
<d^\star$ (see
 equation (\ref{key-gen-1})).  \\ \vspace{-0.3 cm}

\indent When $\B$ has to construct the challenge ciphertext, $\B$
sets $C_0=M \cdot e(g^c,\ghat^{\gamma_v} )^{\alpha}$, where $M
\sample \G_T$, and
\begin{eqnarray*}
C_v=(g^c )^{\gamma_v} , \qquad \qquad \qquad C_w=(g^c )^y \cdot
(g^{z} )^x ,
\end{eqnarray*}
   \begin{eqnarray*}
    C_{i_1,i_2}=(g^c  )^{\gamma_{i_1,i_2}}  \qquad  \qquad \qquad i_1
    \in \{1,\ldots,d^\star\}, ~i_2 \in \{1,\ldots,\mu\}
\end{eqnarray*}
We observe that, if $z=abc$, $(C_0,C_v,C_w, \{C_{i_1,i_2}\}_{i_1 \in
\{1,\ldots,d^\star\},i_2 \in \{1,\ldots,\mu\}})$ corresponds to a
valid ciphertext with the encryption exponent $s=c$. In this situation, $\B$ is playing Game$_2$ with $\A$. \\
\indent In contrast, if $z \in_R \Z_p$, we have $z\neq c$ with
overwhelming probability. In this case,  we have $g^{z}=g^{abc +
\theta}$, for some $\theta \neq 0$, and we can thus write $C_w=w^c
\cdot g^{\theta \cdot x}  $. This means that  $C_w$  looks uniformly
random and independent from $\A$'s view. Indeed, until the challenge
phase, $\A$ has no information about $\theta \in \Z_{p}$ (recall
that public parameters do not depend on $c$) and the value $x \in
\Z_{p_1}$ is also independent of $\A$'s view.   We conclude that, if
$g^z  $ is such that $z \in_R \Z_{p}$, we are in Game$_3$. \qed
\end{proof}

\begin{lemma} \label{4-5-int}
For each $\delta_1 \in \{1,\ldots,d\}$ and each $\delta_2 \in
\{2,\ldots,\mu\}$, Game$_{4,\delta_1,\delta_2-1}$ and
Game$_{4,\delta_1,\delta_2}$ are computationally indistinguishable
if the $\mathcal{P}$-BDH$_1$ assumption holds in $(\G,\Ghat)$.
\end{lemma}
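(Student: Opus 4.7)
The plan is to follow the same template as Lemma~\ref{1-2} (the Game$_2$-to-Game$_3$ transition), reducing to the $\mathcal{P}$-BDH$_1$ assumption, but embedding the challenge instance inside $h_{\delta_1,\delta_2}$ so as to randomize the specific component $C^\star_{\delta_1,\delta_2}$ rather than $C_w^\star$. Given a $\mathcal{P}$-BDH$_1$ instance $(g,g^b,g^{ab},g^c,\ghat,\ghat^a,\ghat^b,T)$ with either $T=g^{abc}$ or $T\sample\G$, the simulator $\B$ will first receive the two challenge vector tuples from $\A$, pick $\beta\sample\{0,1\}$ and---if $d^\star<d$---sample $\vec{Y}^\beta_{d^\star+1},\ldots,\vec{Y}^\beta_d$ uniformly in $\Z_p^{\mu}$, exactly as in the proof of Lemma~\ref{1-2}, so that with overwhelming probability every admissible key query admits some index $i$ with $\vec{X}_i\cdot\vec{Y}^\beta_i\neq 0$. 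The master public key will be programmed as in Lemma~\ref{1-2} except that $h_{\delta_1,\delta_2}$ is augmented by an extra factor $(g^{ab})^{x}$ for a fresh $x\sample\Z_p^*$, namely
\[
  h_{\delta_1,\delta_2}\;=\;(g^b)^{-\gamma_{\delta_1,0}\,y^\beta_{\delta_1,\delta_2}}\cdot g^{\gamma_{\delta_1,\delta_2}}\cdot (g^{ab})^{x},
\]
so that $h_{\delta_1,0}^{y^\beta_{\delta_1,\delta_2}}\cdot h_{\delta_1,\delta_2}=g^{\gamma_{\delta_1,\delta_2}}\cdot (g^{ab})^{x}$, while all other products $h_{i_1,0}^{y^\beta_{i_1,i_2}}\cdot h_{i_1,i_2}$ remain equal to $g^{\gamma_{i_1,i_2}}$ exactly as in Lemma~\ref{1-2}. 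In contrast to that proof, the element $w$ will be defined simply as $w=g^{y}$ (without any $(g^{ab})$-dependence), so that $\hat w$ and $\hat w^{r_w}$ are directly computable by $\B$.

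The challenge ciphertext will then be built with implicit encryption exponent $s=c$: $C_v^\star=(g^c)^{\gamma_v}$, fresh uniform $\G$-elements for every component already randomized in Game$_{4,\delta_1,\delta_2-1}$, $C^\star_{i_1,i_2}=(g^c)^{\gamma_{i_1,i_2}}$ for the remaining standard components, and
\[
  C^\star_{\delta_1,\delta_2}\;=\;(g^c)^{\gamma_{\delta_1,\delta_2}}\cdot T^{x}.
\]
When $T=g^{abc}$ this last quantity equals the correct value $\bigl(h_{\delta_1,0}^{y^\beta_{\delta_1,\delta_2}}\cdot h_{\delta_1,\delta_2}\bigr)^{c}$ and $\B$ perfectly emulates Game$_{4,\delta_1,\delta_2-1}$, whereas when $T$ is uniform in $\G$ the scalar $x$ is information-theoretically hidden behind the uniform marginal of $h_{\delta_1,\delta_2}$ in $\A$'s view, so $C^\star_{\delta_1,\delta_2}$ is uniform and independent of everything else, matching Game$_{4,\delta_1,\delta_2}$. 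Private key and delegation queries for $(\vec{X}_1,\ldots,\vec{X}_\ell)$ will be answered by the same $a$-offset trick as in Lemma~\ref{1-2}: $\B$ picks an index $i^\star\neq\delta_1$ satisfying $\vec{X}_{i^\star}\cdot\vec{Y}^\beta_{i^\star}\neq 0$ and implicitly sets $\tilde r_{i^\star}=r_{i^\star}+\lambda\cdot a$ with $\lambda=\bigl(x\cdot x_{\delta_1,\delta_2}\cdot r_{\delta_1}\bigr)/\bigl(\gamma_{i^\star,0}\cdot(\vec{X}_{i^\star}\cdot\vec{Y}^\beta_{i^\star})\bigr)$, so that the $(\ghat^{ab})$-factor introduced by $\prod_{i_2}\hat h_{i^\star,i_2}^{x_{i^\star,i_2}\cdot\tilde r_{i^\star}}$ exactly cancels the $(\ghat^{ab})^{x\cdot x_{\delta_1,\delta_2}\cdot r_{\delta_1}}$-term coming from $\hat h_{\delta_1,\delta_2}^{x_{\delta_1,\delta_2}\cdot r_{\delta_1}}$ in $D$; the residual corrections only involve $\ghat^a,\ghat^b,\hat v,\ghat$, all computable by $\B$. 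Analogous offsets on the delegation exponents $s_j$, $s_{j,k,i_1}$ and $s_{w,j,k}$ neutralize every $(\ghat^{ab})$-dependency that would otherwise appear in $SK_{DL}$, in particular in $K_{\delta_1,\delta_2}$.

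The hard part will be guaranteeing the availability of a cancellation index $i^\star\neq\delta_1$ for \emph{every} admissible query: choosing $i^\star=\delta_1$ would let the $a$-offset interact with the $(g^{ab})^{x}$-factor of $\hat h_{\delta_1,\delta_2}$ and produce a $\ghat^{a^2b}$-residue that $\B$ cannot compute. When $\ell>d^\star$, the random padding of the challenge attributes at levels above $d^\star$ secures such an $i^\star$ with probability $1-O(q/p)$, exactly as in the corresponding step of Lemma~\ref{1-2}. The delicate case is when $\ell\leq d^\star$ and the adversary happens to submit a query whose unique admissible cancellation index is $\delta_1$ itself; one then has to either (i) carry out a finer bookkeeping of the offset ensuring that the would-be $\ghat^{a^2b}$-residue vanishes identically---in spirit analogous to the cancellation step of Lemma~\ref{1-2}---or (ii) introduce a preliminary guessing/partitioning hybrid that absorbs such queries at the cost of a polynomial security loss. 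Completing this case analysis is the main technical effort; once done, $\B$'s simulation is perfect up to a negligible error, and any non-negligible distinguishing advantage of $\A$ between Game$_{4,\delta_1,\delta_2-1}$ and Game$_{4,\delta_1,\delta_2}$ translates into a comparable $\mathcal{P}$-BDH$_1$ distinguishing advantage for $\B$.
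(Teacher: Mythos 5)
Your high-level template (embed the $\mathcal{P}$-BDH$_1$ element into $h_{\delta_1,\delta_2}$, set the challenge exponent to $s=c$, and answer key queries by an $a$-linear offset on some randomizer) is the right one, but the key-query simulation has a genuine gap that you yourself flag and do not close. Because the instance supplies $g^{ab}$ only in $\G$ and not $\ghat^{ab}$ in $\Ghat$, the factor $(\ghat^{ab})^{x\, x_{\delta_1,\delta_2}\, r_{\delta_1}}$ arising from $\hat h_{\delta_1,\delta_2}$ in the key component $D$ must be cancelled by pairing an $a$-multiple offset against a base already containing $\ghat^{b}$. In your setup the only such bases are $\prod_{i_2}\hat h_{i^\star,i_2}^{x_{i^\star,i_2}}$, whose $\ghat^{b}$-exponent is $-\gamma_{i^\star,0}\,(\vec X_{i^\star}\!\cdot\!\vec Y^{\beta}_{i^\star})$; this forces you to find $i^\star\neq\delta_1$ with a nonzero inner product. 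An admissible query with $\ell\geq\delta_1$ need only satisfy $\vec X_i\cdot\vec Y^{\beta}_i\neq 0$ for \emph{some} $i$, and that unique index can be $\delta_1$ itself, in which case the offset on $r_{\delta_1}$ multiplies the $(\ghat^{ab})^{x}$-factor of $\hat h_{\delta_1,\delta_2}$ and produces an uncomputable $\ghat^{a^2b}$ residue. Neither of your proposed repairs is worked out: ``finer bookkeeping'' does not make that residue vanish, and a guessing/partitioning hybrid over which adaptive queries have this shape is not obviously polynomial.

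The paper removes the case analysis entirely by choosing the embedding differently: it sets $h_{\delta_1,\delta_2}=g^{-\gamma_{\delta_1,0}y^{\beta}_{\delta_1,\delta_2}}\cdot (g^{ab})^{\gamma_{\delta_1,\delta_2}}$ \emph{and} $w=g^{y}\cdot(g^{b})^{x}$ with $x\neq 0$. Since $\hat w$ (which is computable from $\ghat^{b}$) appears in \emph{every} decryption component $D$ with its own randomizer $r_w$, and in every delegation component $K_{j,k}$ with $s_{w,j,k}$, the single $\ghat^{ab}$ occurrence can always be cancelled by setting $r_w=r_w'-a\,r_{\delta_1}\gamma_{\delta_1,\delta_2}/x$ (resp.\ the analogous offset on $s_{w,\delta_1,\delta_2}$ when $\ell<\delta_1$), independently of which inner products vanish. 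Embedding $g^{b}$ in $w$ costs nothing at challenge time because from Game$_3$ onward $C_w^{\star}$ is already uniformly random, so the simulator never has to compute $w^{c}$. If you adopt this placement of the instance, the rest of your argument (challenge component $C^{\star}_{\delta_1,\delta_2}=(g^{z})^{\gamma_{\delta_1,\delta_2}}$, real versus random dichotomy) goes through as you describe.
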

\begin{proof}
Towards a contradiction, we assume there exists $\delta_1,\delta_2$
such that the adversary $\A$ outputs $\beta'=\beta$ with
significantly different probabilities in
Game$_{4,\delta_1,\delta_2-1}$ and Game$_{4,\delta_1,\delta_2}$. We
show that
$\A$ implies  a distinguisher $\B$ against  $\mathcal{P}$-BDH$_1$. \\
\indent Our distinguisher $\B$ receives as input
 $\big(g, g^b,g^{ab},g^c,\ghat,\ghat^{a},\ghat^{b}
,g^{z}  \big),$    with $a,b,c \sample \Z_p$. It aims to decide if $z=abc$ or $z\in_R \Z_p$. \\
\indent To do this, $\B$ runs the adversary $\A$ as follows. It
first receives the challenge vectors
$(\vec{Y}_{0}^0,\ldots,\vec{Y}_{d^\star}^0)$,
$(\vec{Y}_{0}^1,\ldots,\vec{Y}_{d^\star}^1)$, at some depth $d^\star
\leq d$, that are chosen by $\A$. We  assume   that $\B$ chooses its
challenge bit $\beta \sample \{0,1\}$ at the outset of the game.
Also, if $d^\star < d$, for each $i \in \{d^\star+1,\ldots,d\}$,
$\B$ defines the vector
$\vec{Y}_{i}^{\beta}$ as a random vector of $\Z_p^{\mu}$. \\
\indent It defines the master public key by setting
$e(g,\vhat)^{\alpha} = e(g,\ghat)^{\alpha \cdot \gamma_v}$ and
\begin{eqnarray} \nonumber
v &=& g^{{\gamma_v}} ,  \\ \nonumber h_{i_1,0} &=&
g_{p_1}^{\gamma_{i_1,0}}   \quad \qquad \qquad \qquad
\quad    ~~\textrm{ for } i_1 \in \{1,\ldots,d\}  \\
\nonumber
 h_{i_1,i_2} &=&  g^{-\gamma_{i_1,0} \cdot y_{i_1,i_2}^{\beta} } \cdot   g^{\gamma_{i_1,i_2}}
  \qquad  ~   \textrm{ if } i_1 \in
\{1,\ldots,d\} \backslash \{\delta_1\} \textrm{ or }~ i_2 \in
\{1,\ldots,\mu\} \backslash \{\delta_2\} \\ \label{pub-key-2}
 h_{\delta_1,\delta_2} &=&  g^{-\gamma_{\delta_1,0} \cdot y_{\delta_1,\delta_2}^{\beta} } \cdot   (g^{ab} )^{\gamma_{\delta_1,\delta_2}}
  \\
\nonumber w &=&
g^{y} \cdot (g^{b} )^x , 
\end{eqnarray}
 where $\alpha,\gamma_v,y  \sample \Z_p$, $x \sample \Z_p^*$
 and $\gamma_{i_1,i_2} \sample \Z_p$,  for each $i_1 \in
\{1,\ldots,d\}$, $i_2 \in \{0,\ldots,\mu\}$. For each $i_1$, we also
define the vector
 $\vec{\gamma}_{i_1}=(\gamma_{i_1,1},\ldots,\gamma_{i_1,\mu}) \in \Z_p^{\mu}$. Note that, in the implicitly defined master secret key $\mathsf{msk}$,
 the distinguisher  $\B$ knows all the components but $\hat{h}_{\delta_1,\delta_2}$, which depends on the unknown term $\hat{g}^{ab}$. \\
\indent When the adversary  $\A$ requests a  key for a hierarchical
vector $\mathbf{X}=(\vec{X}_1,\ldots,\vec{X}_{\ell})$, $\B$ parses
$\vec{X}_{i_1}$ as  $(x_{i_1,1},\ldots,x_{i_1,\mu}) \in \Z_p^{\mu}$
for
each $i_1 \in \{1,\ldots,\ell \}$. Then, $\B$ responds as follows. \\
\vspace{-0.3 cm}

\noindent $\bullet$ If $\ell \geq \delta_1$,  $\B$ chooses  $ r_{w}'
\sample \Z_p$ and $r_{\delta_1} \sample \Z_p$, $\B$ implicitly
defines the exponent
\begin{eqnarray} \label{keygen-last-0}
{{r}_w} &=&  {r}_{w}'  - \frac{a \cdot r_{\delta_1} \cdot
\gamma_{\delta_1,\delta_2}}{  {x} }
\end{eqnarray}
and can compute the product
\begin{eqnarray} \label{keygen-last-1}
 \Bigl(\hhat_{\delta_1,0}^{y_{\delta_1,\delta_2}^{\beta}} \cdot
 \hhat_{\delta_1,\delta_2} \Bigr)^{r_{\delta_1}} \cdot \hat{w}^{r_w} &=&
 (\ghat^{ab})^{\gamma_{\delta_1,\delta_2} \cdot r_{\delta_1}}
 \cdot  \hat{w}^{{r}_w'} \cdot \big( \ghat^y (\ghat^b)^{x}
 \big)^{-a \cdot r_{\delta_1} \cdot \gamma_{\delta_1,\delta_2}
 /{x}} \\ \nonumber  &=& \hat{w}^{{r}_w'} \cdot (\ghat^a)^{-y \cdot r_{\delta_1} \cdot \gamma_{\delta_1,\delta_2}
 /{x}} ,
\end{eqnarray}
which is the only factor of $D$ that it cannot trivially compute
without knowing $\ghat^{ab}$. Similarly, it can compute
  $D_w =v^{r_w}= \ghat^{\gamma_v \cdot {r}_w'} \cdot (\ghat^{a})^{- \gamma_v \cdot r_{\delta_1} \cdot \gamma_{\delta_1,\delta_2}/ {x} }
   $. \\
 \ \indent To generate the delegation component $SK_{DL}$ of the key,  the reduction $\B$
 is able to
compute   $\{K_{j,k},L_{w,j,k} \}_{j\in \{\ell+1,\ldots,d\},k\in
\{1,\ldots,\mu\}}$ by repeating $(d-\ell)\cdot \mu$ times the same
procedure as for computing $D$ and $D_{w}$.
 \\
\vspace{-0.3 cm}

\noindent $\bullet$ If $\ell< \delta_1$,  $\B$ can directly compute
$(D,D_w,\{D_{i_1}\}_{i_1=1}^{\ell})$ since it knows
$\{\hhat_{i_1,i_2}\}_{i_1 \in \{1,\ldots,\ell\}, i_2\in
\{1,\ldots,\mu\}}$,  $\vhat$ and $\what$. The difficulty is to
compute the delegation components
 $\{K_{\delta_1,k} \}_{k=1}^{\mu}$ without knowing $\ghat^{ab}$. In
 fact, among these components,
  the only factor of $K_{\delta_1,\delta_2} $ that $\B$ cannot trivially compute is $\hhat_{\delta_1,\delta_2}^{s_{\delta_1}}$.  However,
similarly to (\ref{keygen-last-0})-(\ref{keygen-last-1}),
   it can choose $s_{\delta_1},{s}_{w,\delta_1,\delta_2}' \sample \Z_p$, define
   ${s}_{w,\delta_1,\delta_2}={s}_{w,\delta_1,\delta_2}'-
   \frac{
    a \cdot  s_{\delta_1} \cdot  \gamma_{\delta_1,\delta_2}}{{x}}$ and
   compute the product
\begin{eqnarray}  \nonumber
h_{\delta_1,\delta_2}^{s_{\delta_1}} \cdot
\hat{w}^{{s}_{w,\delta_1,\delta_2} } &=&
(\ghat^{-\gamma_{\delta_1,0} \cdot y_{\delta_1,\delta_2}^{\beta}
\cdot s_{\delta_1} } ) \cdot
(\ghat^{ab})^{\gamma_{\delta_1,\delta_2} \cdot s_{\delta_1}} \cdot
 \big( \ghat^y (\ghat^b)^{x} \big)^{-a \cdot s_{\delta_1} \cdot
 \gamma_{\delta_1,\delta_2}/{x}} \cdot \hat{w}^{{s}_{w,\delta_1,\delta_2}'
 } \\ \label{keygen-3}
 &=& (\ghat^{-\gamma_{\delta_1,0} \cdot y_{\delta_1,\delta_2}^{\beta}
\cdot s_{\delta_1} } )   \cdot \what^{{s}_{w,\delta_1,\delta_2}'}
\cdot (\ghat^a)^{-y \cdot s_{\delta_1} \cdot
 \gamma_{\delta_1,\delta_2}/{x}}
\end{eqnarray}
In the same way, $\B$ computes
$$L_{w,\delta_1,\delta_2}=\vhat^{s_{w,\delta_1,\delta_2}}   = \ghat^{\gamma_v \cdot
 {s}_{w,\delta_1,\delta_2}'} \cdot (\ghat^a)^{- \gamma_v \cdot
s_{\delta_1} \cdot \gamma_{\delta_1,\delta_2} / x }   .$$
 \indent Note that,
 for each $k \in \{1,\ldots,\mu\}\backslash \{\delta_2\}$,
$\B$ has to generate $K_{\delta_1,k}$ by  computing
$\hhat_{\delta_1,k}^{s_{\delta_1}}$ using the same random exponent
$s_{\delta_1}$ as in (\ref{keygen-3}). This is always possible since $\B$ knows that exponent.  \\
\vspace{-0.3 cm}

\indent When it comes to construct the challenge ciphertext,
algorithm  $\B$ first sets $C_0=M \cdot
e(g^c,\ghat^{\gamma_v})^{\alpha}$, where $M \sample \G_T$. It also
chooses $\Gamma_w \sample \G$ and computes
\begin{eqnarray*}
C_v=(g^c )^{\gamma_v} , \qquad \qquad \qquad \qquad C_{w}= \Gamma_w,
\end{eqnarray*}
as well as
   \begin{eqnarray*}
    C_{i_1,i_2} &=& (g^c)^{\gamma_{i_1,i_2}}  \qquad \qquad
    \qquad
    \textrm{ if }~~ (i_1 > \delta_1) \vee \big( (i_1 = \delta_1) \wedge (i_2 >
    \delta_2) \big) \\
    C_{\delta_1,\delta_2} &=& (g^{z}
     )^{\gamma_{\delta_1,\delta_2}}
\end{eqnarray*}
If $i_1 < \delta_1$ or $i_1=\delta_1$ and $i_2 < \delta_2$, then
$C_{i_1,i_2}$ is  chosen uniformly in $\G$.  \\ \indent
 We observe that, in the situation where $z=abc$,
$(C_0,C_v,C_w, \{C_{i_1,i_2}\}_{i_1 \in
\{1,\ldots,d^\star\},i_2 \in \{1,\ldots,\mu\}})$  is distributed in the same way as in Game$_{4,\delta_1,\delta_2-1}$.   \\
\indent In contrast, if $z \in_R \Z_p$, we have $z\neq c$ with
overwhelming probability. In this case, $C_{\delta_1,\delta_2}$
looks random to the adversary and $\B$ is thus playing
Game$_{4,\delta_1,\delta_2}$.  \qed
\end{proof}

\section{Deferred Proofs for the Security of $\HF$}
\label{deferred-lossy-selective}

\subsection{Proof of Lemma \ref{indist-setup}} \label{hybrid-par}

\hspace{0.1 cm} \vspace{0.15 cm}

\noindent We consider a sequence of $n+1$ hybrid experiments $RL_0$,
\ldots,
 $RL_n$. For each $k \in \{0,\ldots,n\}$, $RL_k$ is defined to be
an experiment where public parameters are generated as follows.
First, the simulator $\B$ chooses $ v \sample \G$, $\mathbf{w}
\sample \G^n,\hat{\mathbf{w}} \sample \Ghat^n$,
  $\hh \sample
\G^{d \times (\mu+1) \times n}$,  $\hat{\hh} \sample \Ghat^{d \times
(\mu+1) \times n}$ and computes $\mathsf{PP}_{core}$ in the same way
as in the real scheme.
\\ \indent In the second step of the setup procedure, the simulator
$\B$ chooses a vector  $\mathbf{s} \sample (\Z_p^*)^n$. For $l_1,l_2
\in \{1,\ldots,n\}$, it first computes
\begin{eqnarray*}
\mathbf{J}[l_2] &=& v^{\mathbf{s}[l_2]} , \\
\mathbf{C}_w[l_2,l_1] &=& \mathbf{w}[l_1]^{\mathbf{s}[l_2]}.
\end{eqnarray*}
Then, for each pair $(l_1,l_2)$ such that $l_1 \neq l_2$, $\B$ sets
\begin{eqnarray*}
\mathbf{C}[i_1,i_2,l_2,l_1]= \hh[i_1,i_2,l_1]^{\mathbf{s}[l_2]}  .
\end{eqnarray*}
Finally, for each $l \in \{1,\ldots,n\}$, $i_1 \in \{1,\ldots,d\}$
and $i_2 \in \{1,\ldots,\mu\}$, $\B$ defines
\begin{eqnarray*}
\mathbf{C}[i_1,i_2,l,l] &=& \Bigl(
\hh[i_1,0,l]^{\yy_{i_1}^{(0)}}[i_2] \cdot  \hh[i_1,i_2,l]
\Bigr)^{\mathbf{s}[l]}   \qquad \qquad \textrm{ if } l \leq k, \\
\mathbf{C}[i_1,i_2,l,l] &=& \Bigl(
\hh[i_1,0,l]^{\yy_{i_1}^{(1)}}[i_2] \cdot  \hh[i_1,i_2,l]
\Bigr)^{\mathbf{s}[l]}   \qquad \qquad \textrm{ if } l > k.
\end{eqnarray*}
Lemma \ref{plus-one} below demonstrates that, for each $k \in
\{1,\ldots,n\}$,  experiment $RL_k$ is computationally
indistinguishable from experiment
 $RL_{k-1}$. \\
\indent If we assume that the statement of  Lemma \ref{indist-setup}
is false, there must exist $k \in \{1,\ldots,n\}$ such that the
adversary can distinguish  $RL_k$ from $RL_{k-1}$ and we obtain a
contradiction. \qed

\begin{lemma} \label{plus-one}
 If the HPE scheme described in Section \ref{sec:new-HIPE} is
selectively weakly attribute-hiding, then Game $RL_k$ is
indistinguishable from Game $RL_{k-1}$ for each $k \in
\{1,\ldots,n\}$. Namely, for each $k$, there exist algorithms $\B_1$
and $\B_2$ such that  $$| \PR[RL_k \Rightarrow 1] - \PR[RL_{k-1}
\Rightarrow 1]| \leq (d \cdot \mu +1) \cdot
\mathbf{Adv}^{\mathcal{P}\textrm{-}\mathrm{BDH}_1}(\B_1) + q \cdot
\mathbf{Adv}^{\mathcal{P}\textrm{-}\mathrm{BDH}_2}(\B_2), $$ where
$q$ is the number of ``Reveal-key'' queries made by $\A$.
\end{lemma}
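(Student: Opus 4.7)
My plan is to reduce distinguishing $RL_{k-1}$ from $RL_k$ to the selective weak attribute-hiding property of the HPE scheme of Section \ref{sec:new-HIPE}. The key observation is that the two experiments differ only in the content of the $k$-th diagonal cell $\mathbf{C}[k,k]$ of the matrix: in $RL_{k-1}$ this HPE ciphertext is formed with perturbation $\yy^{(1)}$, while in $RL_k$ it uses $\yy^{(0)}$. Every other cell --- including all off-diagonal cells of row $k$, every cell of row $l_2 \neq k$, and each entry of column $l_1 \neq k$ --- is identically distributed in both games.

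Concretely, the reduction $\mathcal{R}$ commits in the HPE game to challenge attribute hierarchies $(\yy_1^{(0)},\ldots,\yy_d^{(0)})$ and $(\yy_1^{(1)},\ldots,\yy_d^{(1)})$, receives the HPE master public key, and installs the HPE values $v,\vhat,w,\what$ and $\{h_{i_1,i_2},\hhat_{i_1,i_2}\}$ as the global HIB-TDF elements $v,\vhat$ together with the column-$k$ parameters $\ww[k],\hat{\ww}[k],\{\hh[i_1,i_2,k],\hat{\hh}[i_1,i_2,k]\}$. For every other column $l_1\neq k$, $\mathcal{R}$ draws its own exponents $\tilde a_{l_1},\tilde b_{i_1,i_2,l_1}\sample \Z_p^*$ and sets $\ww[l_1]=v^{\tilde a_{l_1}}$, $\hh[i_1,i_2,l_1]=v^{\tilde b_{i_1,i_2,l_1}}$ (with the $\Ghat$-counterparts defined analogously), so $\mathcal{R}$ knows a complete HPE master secret key for each $l_1\neq k$.

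Upon receiving the HPE challenge ciphertext $C^\star=(C_v,C_w,\{C_{i_1,i_2}\})$, $\mathcal{R}$ populates row $k$ by setting $\JJ[k]=C_v$, $\CCC_w[k,k]=C_w$, $\CCC[i_1,i_2,k,k]=C_{i_1,i_2}$ on the diagonal, and $\CCC_w[k,l_1]=C_v^{\tilde a_{l_1}}$, $\CCC[i_1,i_2,k,l_1]=C_v^{\tilde b_{i_1,i_2,l_1}}$ off the diagonal; the linear tying of the other columns to $v$ ensures these are correctly $\ww[l_1]^{s}$ and $\hh[i_1,i_2,l_1]^{s}$ for the implicit HPE encryption exponent $s$. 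All rows $l_2\neq k$ are generated in the clear with freshly chosen $\ess[l_2]$, using on the cell $(l_2,l_2)$ the $\yy$-perturbation dictated by $l_2\leq k-1$ (use $\yy^{(0)}$) or $l_2\geq k+1$ (use $\yy^{(1)}$); both conventions agree between $RL_{k-1}$ and $RL_k$ outside row $k$. Key queries are answered column-by-column: for $l_1\neq k$, $\mathcal{R}$ runs HPE's \textbf{Keygen}/\textbf{Delegate} itself using the master secret keys it constructed; for $l_1=k$ it forwards the query to its HPE oracle. The non-abort condition on $IS$ exactly yields the HPE admissibility restriction $f_\id(\yy^{(0)})=f_\id(\yy^{(1)})=0$ (the case of $\yy^{(0)}=[(1,0,\ldots,0)|\cdots]$ being automatic for identities of the form $(1,\mathbf{x})$, and the case of $\yy^{(1)}$ being exactly what the non-abort clause guarantees).

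Hence $\mathcal{R}$ perfectly simulates $RL_{k-1}$ when $C^\star$ hides $\yy^{(1)}$ and $RL_k$ when it hides $\yy^{(0)}$, so the $RL_{k-1}/RL_k$ distinguishing advantage is bounded by that of the HPE game. Substituting the quantitative bound from the proof of Theorem \ref{thm-HIPE} restricted to the predicate-only variant --- which skips the Game$_1\!\to\!$Game$_2$ BDH step --- gives the term $q\cdot \mathbf{Adv}^{\mathrm{DDH}_2}(\B_2)$ from the $q$ delegation-query hybrids (Game$_0\!\to\!$Game$_1$) and $(d\mu+1)\cdot \mathbf{Adv}^{\mathcal{P}\text{-}\mathrm{BDH}_1}(\B_1)$ from the Game$_2\!\to\!$Game$_3$ step plus the $d\mu$ hybrids across Game$_{4,i,j}$. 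The main subtlety to get right is that the HPE challenge is a single ciphertext, yet it must determine an entire row of the HIB-TDF matrix under a shared encryption exponent; the fix of tying every column's parameters linearly to the global $v,\vhat$ is what lifts one HPE ciphertext into $n$ consistent cells while preserving all algebraic relationships that the adversary's subsequent evaluation or inversion attempts might test.
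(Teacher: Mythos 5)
Your reduction is, in structure, exactly the one the paper uses: the two games differ only in the diagonal cell $(k,k)$; you embed the HPE public key into column $k$, tie every other column to $v$ via exponents you choose so that the single HPE challenge ciphertext can be spread consistently across all of row $k$ under the implicit exponent $\ess[k]=s$, generate the other rows honestly, and route column-$k$ key queries to the HPE oracle. The identification of the abort condition with HPE admissibility and the quantitative accounting (dropping the BDH step for the predicate-only variant) also match the paper.

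The one step that does not work as written is the claim that $\mathcal{R}$ ``knows a complete HPE master secret key for each $l_1\neq k$'' and can therefore run \textbf{Keygen}/\textbf{Delegate} itself for those columns. The selective weak attribute-hiding game gives the adversary only the master \emph{public} key; $\vhat$, $\what$ and the $\hhat_{i_1,i_2}$ all live in $\mathsf{msk}$ and are never revealed, so $\mathcal{R}$ cannot write down $\hat{\ww}[l_1]=\vhat^{\tilde a_{l_1}}$ or $\hat{\hh}[i_1,i_2,l_1]=\vhat^{\tilde b_{i_1,i_2,l_1}}$, and \textbf{Keygen} for a column $l_1\neq k$ --- whose output contains $\vhat^{\err_w[l_1]}$, $\vhat^{\err_{i_1}[l_1]}$, etc. --- is not directly executable. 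The paper's repair is to sample $\DD_w[l_1]$ and the $\DD_{i_1}[l_1]$ uniformly at random from $\Ghat$ (implicitly fixing the randomness exponents relative to the unknown $\vhat$) and then assemble $\DD[l_1]=\prod_{i_1}\DD_{i_1}[l_1]^{\sum_{i_2}\tilde b_{i_1,i_2,l_1}\cdot\id_{i_1}[i_2]}\cdot\DD_w[l_1]^{\tilde a_{l_1}}$, which is correctly distributed precisely because every column-$l_1$ parameter is a power of $\vhat$ with an exponent known to $\mathcal{R}$; the delegation components are produced the same way. With that repair --- everything else in your write-up being sound --- your proof coincides with the paper's.
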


\begin{proof}

 For the sake of  contradiction, let us assume that there exist two
 auxiliary
hierarchical vector  $\yy^{(0)}=[\yy_1^{(0)} | \ldots |
\yy_{d}^{(0)}]$, $\yy^{(1)}=[\yy_1^{(1)} | \ldots | \yy_{d}^{(1)}]$
and an index
 $k \in
\{1,\ldots,n\}$ such that the adversary $\A$ has noticeably
different behaviors in experiments  $RL_k$ and  $RL_{k-1}$. Using
$\A$, we construct a  selective weakly attribute-hiding adversary
$\B$ against the   HPE scheme described in Section
\ref{sec:new-HIPE} (in   its predicate-only variant).
\\
\indent Our adversary
 $\B$ first declares   $\yy^{(0)}, \yy^{(1)} \in
\Z_p^{d \cdot \mu}$ as the vectors that it wishes to be challenged
upon. Then, the HPE challenger provides $\B$ with public parameters
$$
 \mathsf{mpk}_{\textrm{HPE}}  =   \Bigl(  v , ~ w ,   ~  \{ h_{i_1,i_2} \}_{i_1 \in \{1,\ldots,d\},
 i_2 \in \{0,\ldots,
  \mu \}}
\Bigr).
$$
   Then, $\B$   chooses a vector
  $\vzeta    \sample \Z_p^n$ and a matrix
   $\vgamma \in  \Z_{p}^{d \times (\mu+1) \times n}$, which it uses to compute
\begin{eqnarray*}
 \mathbf{w}[l_1] &=&  v^{\vzeta[l_1]}   \quad       \qquad ~ \textrm{
 for }
   l_1 \in \{1,\ldots,n\} \backslash \{k \} \\
 \hh[i_1,i_2,l_1] & =& v^{\vgamma[i_1,i_2,l_1]}
     \quad   ~ \textrm{ for } i_1 \in
\{1,\ldots,d\}  ,\ i_2 \in \{0,\ldots,
  \mu \},~l_1\in \{1,\ldots,n\} \backslash \{k \}.
\end{eqnarray*}
It also sets $\mathbf{w}[k]=w$  as well as
\begin{eqnarray*}
\hh[i_1,i_2,k]= h_{i_1,i_2} \qquad  \qquad \textrm{ for } i_1 \in
\{1,\ldots,d\},\ i_2 \in \{0,\ldots,\mu \}.
\end{eqnarray*}
Then,  $\B$ defines core public parameters
$$
\PP_{core} = \Bigl( v ,~\{\mathbf{w}[l_1] ,  ~~  \{
\hh[i_1,i_2,l_1]\}_{i_1 \in \{1,\ldots,d\}, i_2 \in \{0,\ldots,
  \mu \},~l_1\in \{1,\ldots,n\}}
\Bigr),
$$
that correspond  to the  master secret key
$\mathsf{msk}=(\vhat,\hat{\mathbf{w}},\hat{\mathbf{h}})$, which is
  not completely known to  $\B$ (specifically, $\vhat$, $\hat{\ww}[k]$ and $\hat{\hh}[.,.,k]$ are not available).
Then, $\B$ notifies its HPE challenger that it wishes to directly
enter the challenge phase without making any pre-challenge query.
The challenger replies with the challenge ciphertext
$$
C^\star=\bigl( C_v,C_{w},  \{C_{i_1,i_2}  \}_{i_1 \in \{1,\ldots,d
\},~i_2 \in \{1,\ldots,\mu \} }  \bigr),
$$
where
\begin{eqnarray*}
    C_v = v^{s}  , \qquad \qquad \qquad C_w= w^s  ,
\end{eqnarray*}
\vspace{-0.9 cm}
\begin{eqnarray*}
   \{\ C_{i_1,i_2}= \bigl( h_{i_1,0}^{
y^{(\beta)}_{i_1,i_2}} \cdot h_{i_1,i_2} \bigr)^s   \ \}_{ i_1 \in
\{1,\ldots,d \},~i_2\in \{1,\ldots,\mu \}},
\end{eqnarray*}
for a random element  $s \sample \Z_p^*$  and   a random bit $\beta
\in \{0,1\}$. Here we are using, for $\beta \in \{0,1\}$, the
notation $\yy{(\beta)}=[\yy^{(\beta)}_1|\ldots |\yy^{(\beta)}_d] \in
\Z_p^{d \cdot \mu}$, where each $\yy^{(\beta)}_{i_1} =
(y^{(\beta)}_{i_1,1},\ldots,y^{(\beta)}_{i_1,\mu}) \in \Z_p^{\mu}$,
for $i_1 \in \{1,\ldots,d\}$.\\
\indent At this point, $\B$  constructs the matrix
$\{\mathbf{CT}[i_1,i_2]\}_{l_1,l_2 \in \{1,\ldots,n\} }$ of HPE
ciphertexts by setting
\begin{eqnarray*}
\mathbf{J}[k] &=& C_v \\
\mathbf{C}_w[k,k] &=& C_w \\
\mathbf{C}[i_1,i_2,k,k] &=& C_{i_1,i_2} \qquad \textrm{ for } i_1
\in \{1,\ldots,d\},\ i_2 \in \{1,\ldots,\mu\}.
\end{eqnarray*}
and, for each $l_1 \in \{1,\ldots,n\} \backslash \{k\}$,
\begin{eqnarray*}
\mathbf{C}_w[k,l_1] &=&  C_v^{\vzeta[l_1]}   \\
\mathbf{C}[i_1,i_2,k,l_1] &=& C_v^{\vgamma[i_1,i_2,k,l_1]}   \qquad
\qquad \textrm{ for } i_1 \in \{1,\ldots,d\},\ i_2 \in
\{1,\ldots,\mu\}.
\end{eqnarray*}
  Note that this implicity sets
$\ess[k]=s$, where $s$ is the encryption exponent chosen by the HPE
challenger to compute $C^\star$. Then, for each $l_2 \in
\{1,\ldots,n\} \backslash \{k\}$, $\B$ chooses a random exponent
$\ess[l_2] \sample \Z_p^*$ and computes
\begin{eqnarray} \label{exp-use}
\mathbf{J}[l_2] &=& v^{\ess[l_2]}   \\ \nonumber
 \mathbf{C}_w[l_2,l_1] &=&  \mathbf{w}[l_1]^{\ess[l_2]}
 \qquad \qquad \qquad~ \textrm{ for } l_1 \in \{1,\ldots,n\} \backslash \{l_2\} \\ \nonumber
\mathbf{C}[i_1,i_2,l_2,l_1] &=&  \hh[i_1,i_2,l_1]^{\ess[l_2]} \qquad
\qquad \textrm{ for } i_1 \in \{1,\ldots,d\},\ i_2 \in
\{1,\ldots,\mu\},\\ & & \qquad  \qquad  \qquad  \qquad  \qquad \quad
~~~ l_1 \in \{1,\ldots,n\} \backslash \{l_2\}
\end{eqnarray}
As for   entries of the form $\{ \mathbf{C}[i_1,i_2,l
,l]\}_{i_1,i_2,l\neq k }$, $\B$ computes them    as
\begin{eqnarray*}
 \mathbf{C}[i_1,i_2,l] &=& \big( \hh[i_1,0,l]^{\yy_{i_1}^{(0)}}[i_2]
\cdot  \hh[i_1,i_2,l] \big)^{\ess[l]}   \qquad \qquad \textrm{ if } l < k \\
\mathbf{C}[i_1,i_2,l] &=& \big( \hh[i_1,0,l]^{\yy_{i_1}^{(1)}}[i_2]
\cdot \hh[i_1,i_2,l] \big)^{\ess[l]}  \qquad \qquad \textrm{ if } l
> k.
\end{eqnarray*}
 using the exponents $\ess[l] \in \Z_p^*$ that were chosen in
(\ref{exp-use}). Finally,
 our adversary $\B$ defines the $n \times n$ matrix $\{\CC[l_2,l_1]
\}_{l_2,l_1 \in \{1,\ldots,n\}}$ of HPE ciphertexts
$$
\CC[l_2,l_1]=\big( \JJ[l_2],\CCC_w[l_2,l_1], \{
\CCC[i_1,i_2,l_2,l_1] \}_{i_1 \in \{1,\ldots,d \},~i_2 \in
\{1,\ldots,\mu \} } \big).
$$
Finally, $\B$ defines $\mathsf{mpk}:=\big( \PP_{core},
\{\CC[l_2,l_1] \}_{l_2,l_1 \in \{1,\ldots,n\}}  \big)$ and sends
it to the adversary $\A$. \\
\indent When it comes to answer $\A$'s private key queries for
hierarchical identities $(\id_1,\ldots,\id_{\ell})$, $\B$ first
encodes each level's identity $\id_{i_1} \in \{0,1\}^{\mu}$ as a
$\mu$-vector $\vec{X}_{i_1}=(\id_{i_1}[1],\ldots,\id_{i_1}[\mu])$
for each $i_1 \in \{1,\ldots,\ell\}$. Although $\B$ does not
entirely know $\mathsf{msk}$, the decryption components
$(\mathbf{D}[l_1],\mathbf{D}_w[l_1], \mathbf{D}_{i_1}[l_1])$ of the
private key are always directly computable when $l_1\neq k$: namely,
$\B$ chooses $\mathbf{D}_w[l_1] \sample \Ghat$ and
$\mathbf{D}_{i_1}[l_1] \sample \Ghat$, for $i_1 =1$ to $\ell$, and
computes
\begin{eqnarray*}
\mathbf{D}[l_1] &=&  \prod_{i_1=1}^{\ell} \mathbf{D}_{i_1}[l_1]^{
\sum_{i_2=1}^{\mu} \vgamma[i_1,i_2,l_1] \cdot \id_{i_1}[i_2]  }
\cdot \mathbf{D}_w[l_1]^{\vzeta[l_1]}  .
\end{eqnarray*}
  It is easy to see that
$(\mathbf{D}[l_1],\mathbf{D}_w,
\{\mathbf{D}_{i_1}[l_1]\}_{i_1=1}^{\ell})$ forms a  decryption
component of the form (\ref{dec-comp}). Moreover, the delegation
components can be obtained exactly in the same way. \\ \indent  As
for the remaining coordinate $l_1=k$, the simulator $\B$ aborts if,
for any $\tilde{\gamma}\in\{0,1\}$, the obtained hierarchical vector
$(\vec{X}_1,\ldots,\vec{X}_{\ell})$ is one for which $\langle
y_{i_1}^{(\tilde{\gamma})},X_{i_1} \rangle =0$ for each $i_1 \in
\{1,\ldots,\ell\}$ (which translates into
$$f_{(\vec{X}_1,\ldots,\vec{X}_{\ell})}(\yy_1^{(\tilde{\gamma})},\ldots,\yy_d^{(\tilde{\gamma})})=1$$
in the predicate encryption language). Otherwise, $\B$ can obtain
the missing private key components by invoking its HPE challenger
 to obtain a complete private
key $\mathbf{SK}_{(\id_1,\ldots,\id_{\ell})}=(\mathbf{SK}_D,\mathbf{SK}_{DL})$. \\
 \indent It is easy to check that, if the HPE challenger's
bit is $\beta=0$, $\mathsf{mpk}$ is distributed as in Game $RL_{k}$.
In contrast, if $\beta=1$, $\mathsf{mpk}$ has the same distribution
as in Game $RL_{k-1}$.  \qed
\end{proof}


\subsection{Selective Security of $\HF$}\label{app:selective-security}

We consider our $\HF$ with $\mu=2$, $\Sid=\{(1,x): x \in
\Z_p^*\}$ and $\IdSp=(\Sid)^{(\leq d)}$. Let
$(\id_1^{\star},\dots,\id_{\ell^{\star}}^{\star}) \in \IdSp$ be the
identity chosen by the adversary. Define the sibling $\LHF$ with
$\AxSp=\Z_p^{2d}$ and auxiliary input $\yy^{(1)}=[\yy_1^{(1)}|\ldots
|\yy_d^{(1)}] \in \Z_p^{2d }$ where
$\yy_{i_1}^{(1)}=(-id_{i_1}^{\star},1)$ for all $i_1 \in \{1,\ldots,
\ell^{\star}\}$ and $\yy_{i_1}^{(1)} =(1,0)$ for $i_1 \in \{
\ell^{\star}+1,\ldots, d \}$. Selective security is established in
the following theorem. In this selective setting, we will omit $\zeta$ in the notation,
because the trivial pre-output stage that always outputs $d_2=1$ will be enough in this case and, as a consequence, the experiments do not depend
on $\zeta$.

\begin{theorem}\label{theor-select-loss}
Let $n>\log{p}$ and let $\omega=n-\log{p}$. Let $\HF$ be the
HIB-TDF with parameters $n,d,\mu=2$, $\Sid=\{(1,x): x \in \Z_p^*\}$
and $\IdSp=(\Sid)^{(\leq d)}$. Let $\LHF$ be the sibling associated
with it as above. Then $\HF$ is $(\omega,1)$-partially lossy against selective-id adversaries.

Specifically regarding condition (i), for any selective-id adversary $\A$ there
exist algorithms $\B_1$ and $\B_2$ such that
$$\mathbf{Adv}_{\HF,\LHF,\omega}^{\mathrm{lossy}}(\A)\leq n\cdot\left((2d+1)\cdot \mathbf{Adv}^{\mathcal{P}\textrm{-}\mathrm{BDH}_1}(\B_1)+q\cdot \mathbf{Adv}^{\mathcal{P}\textrm{-}\mathrm{DDH}_2}(\B_2)\right)$$
The running time of $\B_1$ and $\B_2$ are comparable to the running
time of $\A$.
\end{theorem}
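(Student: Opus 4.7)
The plan is to exhibit the sibling $\LHF$ prescribed above, argue that the abort events of Definition~\ref{HIB-TDF-sec-def} never fire (so $\epsilon_1=\epsilon_2=1$ and $\delta=1$), and then reduce condition~(i) directly to Lemma~\ref{indist-setup} with $\mu=2$. The fact that $\A$ is selective ($\id^\dagger=\id^\star$ committed at step~0) is what legitimizes tailoring $\yy^{(1)}$ to $\id^\star$ before the master public key is published.

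First I would verify the partitioning enforced by $\yy^{(1)}$ under $\mathsf{mpk}_1$. Writing $\id^\star_{i_1}=(1,id^\star_{i_1})$, we have $\langle\yy^{(1)}_{i_1},\id^\star_{i_1}\rangle=-id^\star_{i_1}+id^\star_{i_1}=0$ at every level $i_1\leq\ell^\star$; by the analysis following~(\ref{inversion}), the output of $\HFEv(\mathsf{pms},\mathsf{mpk}_1,\id^\star,\cdot)$ then depends on $X$ only through $\langle\ess,X\rangle\in\Z_p$, so its image has cardinality $\leq p$ and the lossiness is at least $n-\log p=\omega$. For a revealed $\id\in IS$, the selective restriction $\id\not\leq\id^\star$ leaves two cases: either (a) there is some smallest level $i_1\leq\min(\ell,\ell^\star)$ with $\id_{i_1}\neq\id^\star_{i_1}$, giving $\langle\yy^{(1)}_{i_1},\id_{i_1}\rangle=-id^\star_{i_1}+id_{i_1}\neq 0$, or (b) $\ell>\ell^\star$ and $\id^\star$ is a prefix of $\id$, in which case level $\ell^\star+1$ uses $\yy^{(1)}_{\ell^\star+1}=(1,0)$ against $\id_{\ell^\star+1}=(1,\cdot)$ and yields inner product $1$. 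Either way the function is injective on $\id$.

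Hence the bit $d_1$ of Definition~\ref{HIB-TDF-sec-def} deterministically equals $1$, and because the selective pre-output stage is the trivial $\mathcal{P}$ that always returns $d_2=1$, we get $d_{\neg abort}^{\A}=1$ with probability~$1$ in both $\REAL$ and $\LOSSY$. This settles condition~(ii) with $\epsilon_1=1$ and, since $\Pr[d_{\A}=1\mid\REAL\wedge d_{\neg abort}^{\A}=1]=\Pr[d_{\A}=1\mid\REAL]$, condition~(iii) with $\epsilon_2=1$, giving $\delta=\epsilon_1\epsilon_2=1$.

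For condition~(i), the previous paragraph yields $d_{exp}^{\A}=d_{\A}$ in both experiments, so $\mathbf{Adv}^{\mathrm{lossy}}_{\HF,\LHF,\omega}(\A)=|\Pr[d_{\A}=1\mid\REAL]-\Pr[d_{\A}=1\mid\LOSSY]|$. These are exactly the success probabilities of the experiments $RL_0$ and $RL_n$ of Lemma~\ref{indist-setup} with auxiliary-input generator $\mathsf{Aux}(\id^\star)$ set to emit the $\yy^{(1)}$ above; the abort conditions of $RL_0,RL_n$ coincide with $d_1=0$ and thus never trigger. Substituting $\mu=2$ into Lemma~\ref{indist-setup} gives precisely the bound $n\cdot((2d+1)\cdot\mathbf{Adv}^{\mathcal{P}\textrm{-}\mathrm{BDH}_1}(\B_1)+q\cdot\mathbf{Adv}^{\mathcal{P}\textrm{-}\mathrm{DDH}_2}(\B_2))$ claimed in the statement. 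The main obstacle I anticipate is the abort-free analysis in step~1 --- specifically, checking case~(b) where a strict descendant of $\id^\star$ is revealed, since if any revealed identity were to become lossy under $\yy^{(1)}$ the reduction to Lemma~\ref{indist-setup} would collapse and both condition~(ii) and the selective argument would fail.
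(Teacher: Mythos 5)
Your proposal is correct and follows essentially the same route as the paper: establish that $\yy^{(1)}$ makes $\HFEv$ lossy (with lossiness $\geq n-\log p$) exactly on prefixes of $\id^\star$ and injective elsewhere (including strict descendants of $\id^\star$, which you rightly single out), conclude that $d_{\neg abort}^{\A}=1$ so that $\delta=1$, and reduce condition (i) to Lemma~\ref{indist-setup} with $\mu=2$. The only nuance is that injectivity (and hence $d_1=1$) holds with overwhelming rather than exactly unit probability, which the paper tracks via the negligible slack between $\LOSSY$ and $RL_n$.
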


\begin{proof} Let $RL_0$ and $RL_n$ be the games specified above. We claim that both $$\Pr[d_{exp}^\A=1\ |\
\REAL_{\HF,\LHF,\omega}^\A]=\Pr[RL_0\Rightarrow 1]$$ and
$$|\Pr[d_{exp}^\A=1\ |\
\LOSSY_{\HF,\LHF,\omega}^\A]-\Pr[RL_n\Rightarrow 1]| \in
\mathsf{negl}(\varrho),$$ which implies condition (i) of partial lossiness, when combined with Lemma \ref{indist-setup}. \\
\indent To prove that, we just need to justify that, when using
$\yy^{(0)}$ as the auxiliary input, the function $\HFEv$ is
injective for all identities $\id$. On the other hand, when using
$\yy^{(1)}$, the function $\HFEv$ will be lossy for identities
$\id=(\id_1,\ldots,\id_{\ell})$ such that $\langle
\yy_{i_1}^{(1)},\id_{i_1} \rangle =0$ for all $i_1 \in
\{1,\ldots,\ell\}$ and injective for all other identities.

To see that, note that all the terms of the output of $\HFEv$ are
determined by $\langle
 \mathbf{s},X\rangle$ except the term $\CC_{\id}[i_1,l_1]$, which is
 also determined by $\ess[l_1] \cdot x_{l_1} \cdot   \langle
\yy_{i_1},\id_{i_1} \rangle$. Therefore, when $\langle
\yy_{i_1},\id_{i_1}\rangle=0$ for all $i_1\in\{1,\dots,\ell\}$, the
function will be determined only by $\langle
 \mathbf{s},X\rangle$, which can take only $p$ values, so $\HFEv$ will be lossy with lossiness $\lambda\left( \HFEv
\big(\mathsf{pms},\mathsf{mpk},(\id_1,\ldots,\id_{\ell}),. \big)
\right) \geq n-\log{p} =\omega$. On the other hand, if there is at
least one index $i_1$ for which $\langle
\yy_{i_1},\id_{i_1}\rangle\neq0$, then $\HFEv$ will be injective
with overwhelming probability and $\HFInv$ will be its inverse, also
with overwhelming probability. Finally, observe that when the
auxiliary input is $\yy^{(0)}$, then for all identities
$(\id_1,\dots,\id_\ell)$ and for all $i_1\in\{1,\dots,\ell\}$,
$\langle \yy_{i_1}^{(0)},\id_{i_1}\rangle \equiv 1$ by construction.

With this, we conclude that $\yy^{(0)}$ makes $\HFEv$ injective for
all identities and $\yy^{(1)}$ makes $\HFEv$ lossy only for
identities such that $\langle \yy_{i_1},\id_{i_1}\rangle=0$ for all
$i_1\in\{1,\dots,\ell\}$, and in this case the lossiness is $n-\log
p$ (where when we claim that $\HFEv$ is injective, it holds with
overwhelming probability). This concludes the proof of condition (i).

Regarding conditions (ii) and (iii), when
the auxiliary input is $\yy^{(1)}$, then $\HFEv$ is injective for all
valid queried $\id$ in ``Reveal key'' queries and $\HFEv$ is lossy
for $\id^\star=(\id_1^\star,\dots,\id_{\ell^\star}^\star)$. Indeed,
 we are setting $\yy_{i_1}^{(1)}=(-id_{i_1}^{\star},1)$ for all
$i_1 \in \{1,\ldots, \ell^{\star}\}$ and $\yy_{i_1}^{(1)} =(1,0)$
for $i_1 \in \{ \ell^{\star}+1,\ldots, d \}$. This implies that if
$\id=(\id_1,\dots,\id_\ell)$ is an identity, then $\langle
\yy_{i_1},\id_{i_1}\rangle=0$ for all $i_1\in\{1,\dots,\ell\}$ if
and only if $\id$ is a prefix of $\id^\star$. As the adversary is not
allowed to make ``Reveal key'' queries for prefixes of $\id^\star$,
then the condition is satisfied. This implies that $d_1=1$ with
overwhelming probability. Also, the pre-output stage $\mathcal{P}$ in this case simply outputs $d_2=1$, always. This implies
that $d^\A_{\neg abort}$ is 1 with overwhelming probability and, as a
consequence, $\Pr[d_{\neg abort}^{\mathcal{A}}=1\ |\
\REAL_{\HF,\LHF,\omega,\zeta}^{\mathcal{A}}] $ is negligibly close to 1 and $\left(\Pr[d_{\mathcal{A}}=1\ |\
\REAL_{\HF,\LHF,\omega,\zeta}^{\mathcal{A}} \wedge d_{\neg abort}^{\mathcal{A}}=1 ] -\
\frac{1}{2}\right)$ is negligibly close to $\Pr[d_\A=1\
|\ \REAL_{\HF,\LHF,\omega}^\A]\ -\ \frac{1}{2}$. These are precisely
the requirements for $\HF$ to fulfill conditions (ii) and (iii), with $\delta=1$.\qed
\end{proof}


\subsection{Adaptive Security: Proof of Lemma \ref{lemma-lower-bound}}
\label{proof-lower-bound}

Fix the view of the adversary $\A$, which implies fixing the queried
identities $\id^{(1)},\dots,\id^{(q)},\id^{\star}$. Although we are
assuming that the adversary $\A$ makes the maximum number of
queries, with a smaller number of queries we would have the same
bounds. We abbreviate $\eta=\eta(IS,\id^{\star})$,  $\yy=\yy^{(1)}$
and also call $\ell^\star$ the depth of the challenge identity
$\id^\star$. For an integer $t$, define the event

$$E_t:\ \ \displaystyle\bigwedge_{i=1}^q\left(\bigvee_{i_1=1}^{\ell^{(i)}} \left(\langle \id_{i_1}^{(i)},\yy_{i_1}\rangle\neq0\mod t\right) \right)
\wedge \bigwedge_{i_1=1}^{\ell^{\star}} \left(\langle
\id_{i_1}^{{\star}},\yy_{i_1}\rangle= 0\mod t\right)$$

We denote
$\mathbf{Y}=\{y_1',\dots,y_d',\yy_{1},\dots,\yy_{d},\xi_1,\dots,\xi_d\}$.
For all $i_1 \in \{1,\ldots,\ell^\star \}$, we have   $$\langle
\id_{i_1},\yy_{i_1}\rangle=y_{i_1}'+\displaystyle\sum_{k=2}^\mu
\yy_{i_1}[k]\id_{i_1}[k]-2\xi_{i_1}q$$ for some $0\leq\xi_{i_1}\leq
\mu-1$. In particular, observe that $$0\leq
y_{i_1}'+\displaystyle\sum_{k=2}^\mu
\yy_{i_1}[k]\id_{i_1}[k]<2q\mu<p.$$ Let us define the value
 $\xi_{i_1}^{\star}:=\lfloor(y_{i_1}'+\displaystyle\sum_{k=2}^\mu \yy_{i_1}[k]\id^\star_{i_1}[k])/2q\rfloor$.
 If we have the two conditions
 \begin{eqnarray*}
&& \xi_{i_1}  = \xi_{i_1}^\star \qquad \qquad \qquad \qquad \qquad
\qquad \textrm{ for each } i_1 \in
 \{1,\ldots,\ell^\star\} \\
&&  \langle \id^\star_{i_1},\yy_{i_1}\rangle = 0\mod 2q,
\end{eqnarray*}
   then clearly $\langle \id^\star_{i_1},\yy_{i_1}\rangle=0\mod p$.
   Also, if $\langle \id_{i_1},\yy_{i_1}\rangle\neq 0\mod 2q$, then we also have $\langle \id_{i_1},\yy_{i_1}\rangle\neq 0\mod p$ . Using these observations, we have
\begin{align*}
\eta&\geq \Pr[\xi_{i_1}=\xi_{i_1}^{\star} \quad \forall i_1\in\{1,\dots,\ell^{\star}\}] \Pr_{\mathbf{Y}}[E_p~|~\xi_{i_1}=\xi_{i_1}^{\star} \quad \forall i_1\in\{1,\dots,\ell^{\star}\}]\\
&=\frac{1}{\mu^{\ell^{\star}}} \Pr_{\mathbf{Y}}[E_p~|~\xi_{i_1}=\xi_{i_1}^{\star}  \quad \forall i_1\in\{1,\dots,\ell^{\star}\}]\\
&\geq \frac{1}{\mu^{\ell^{\star}}} \Pr_{\mathbf{Y}}[E_{2q}~|~\xi_{i_1}=\xi_{i_1}^{\star}  \quad \forall i_1\in\{1,\dots,\ell^{\star}\}]\\
&=\frac{1}{\mu^{\ell^{\star}}}
\Pr_{\mathbf{Y}'}[E_{2q}~|~\xi_{i_1}=\xi_{i_1}^{\star} \quad \forall
i_1\in\{1,\dots,\ell^{\star}\}]
\end{align*}
where $\mathbf{Y}'$ contains $\{\yy_{1}',\dots,\yy_{d}'\}$ and
$\yy_{i_1}'=(y_{i_1}',\yy_{i_1}[2],\dots,\yy_{i_1}[\mu])$ for all
$i\in\{1,\dots,d\}$. Note that the second inequality above holds
because of the condition $\xi_{i_1}=\xi_{i_1}^\star$. If we now
define
 $\eta_{2q}=\Pr_{\mathbf{Y}'}[E_{2q}~|~\xi_{i_1}=\xi_{i_1}^{\star}
~ \forall i_1\in\{1,\dots,\ell^{\star}\}],$  we  just showed that
$\eta\geq \frac{1}{\mu^{\ell^\star}}\cdot\eta_{2q}$. \\

\indent Now, we observe some facts about  $\langle
\id_{i_1},\yy_{i_1}\rangle$. First, observe that $\langle
\id_{i_1},\yy_{i_1}\rangle$ and $\langle
\id_{i'_1},\yy_{i'_1}\rangle$ are independent for $i_1\neq i'_1$.
This is because of the way $\mathbf{Y}$ is chosen. \\
\indent Also, note that for any $\id_{i_1}$, $a\in\Z$,
$\Pr_{\mathbf{Y}'}[\langle \id_{i_1},\yy_{i_1}\rangle=a\mod
2q]=1/2q$. This is because for any choice of
$\yy_{i_1}[2],\dots,\yy_{i_1}[\mu]$, there is only one value of
$y_{i_1}'$ for which the equality holds. \\
\indent Consider $\id=(\id_1,\ldots,\id_\ell)\neq
\id'=(\id_1',\ldots,\id_{\ell'}')$ and $\id$ not being a prefix of
$\id'$ and $a,b\in\Z$. First, if $\ell
>\ell'$, for each $i_1 \in \{\ell'+1, \ldots, \ell \}$ such that $\id_{i_1} \neq \id_{i_1}'$, we have
\begin{align*}
\Pr_{\mathbf{Y}'}[\langle \id_{i_1},\yy_{i_1}\rangle=a\mod 2q|
\displaystyle \bigwedge_{k=1}^{\ell'} \langle
\id'_{k},\yy_{k}\rangle=b\mod 2q] =\Pr_{\mathbf{Y}'}[\langle
\id_{i_1},\yy_{i_1}\rangle=a\mod 2q]=1/2q.
\end{align*}
This happens because $\displaystyle\bigwedge_{k=1}^{\ell'}\langle
\id'_{k},\yy_{k}\rangle=b\mod 2q$ does not impose any condition on
$\langle \id_{i_1}, \yy_{i_1} \rangle$ and we can apply the same
arguments as previously. \\
\indent On the other hand, for all $i_1\leq \ell'$,
$\Pr_{\mathbf{Y}'}[\langle \id_{i_1},\yy_{i_1}\rangle=a\mod 2q|
\displaystyle\bigwedge_{k=1}^{\ell'}\langle
\id'_{k},\yy_{k}\rangle=b\mod 2q]$ is either $0$, if
$\id_{i_1}=\id'_{i_1}$ or $1/2q$, if $\id_{i_1}\neq \id'_{i_1}$. The
second fact is because, if $\id_{i_1}\neq \id'_{i_1}$,  there exists
an index $j$ for which $\id_{i_1}[j]=1$ and $\id_{i_1}'[j]=0$ or the
other way around. We see that, if we fix all $\yy_{i_1}[i]$ for
$i\neq j$ so that $\langle \id'_{i_1},\yy_{i_1}\rangle=b\mod 2q$,
then there is only one value for $\yy_{i_1}[j]$ so that $\langle
\id_{i_1},\yy_{i_1}\rangle=a\mod 2q$.

With all these observations, we calculate the following bound on
$\eta_{2q}$:

\begin{align}
\eta_{2q}&=\Pr_{\mathbf{Y}'}\left[\displaystyle\bigwedge_{i=1}^q\left(\bigvee_{i_1=1}^{\ell^{(i)}}
\left(\langle \id_{i_1}^{(i)},\yy_{i_1}\rangle\neq0\mod 2q\right)
\right) | \bigwedge_{i_1=1}^{\ell^{\star}} \left(\langle
\id_{i_1}^{{\star}},\yy_{i_1}\rangle= 0\mod 2q\right)\right]\cdot\nonumber\\
&\qquad\qquad\qquad\qquad\qquad\qquad\qquad\qquad\qquad\qquad\cdot
\Pr_{\mathbf{Y}'}\left[\bigwedge_{i_1=1}^{\ell^{\star}}
\left(\langle \id_{i_1}^{{\star}},\yy_{i_1}\rangle= 0\mod
2q\right)\right]\nonumber\\
&=1/(2q)^{\ell^{\star}}\Pr_{\mathbf{Y}'}\left[\displaystyle\bigwedge_{i=1}^q\left(\bigvee_{i_1=1}^{\ell^{(i)}}
\left(\langle \id_{i_1}^{(i)},\yy_{i_1}\rangle\neq0\mod 2q\right)
\right) | \bigwedge_{i_1=1}^{\ell^{\star}} \left(\langle
\id_{i_1}^{{\star}},\yy_{i_1}\rangle= 0\mod 2q\right)\right]\nonumber\\
&=1/(2q)^{\ell^{\star}}\left(1-\Pr_{\mathbf{Y}'}\left[\displaystyle\bigvee_{i=1}^q\left(\bigwedge_{i_1=1}^{\ell^{(i)}}
\left(\langle \id_{i_1}^{(i)},\yy_{i_1}\rangle=0\mod 2q\right)
\right) | \bigwedge_{i_1=1}^{\ell^{\star}} \left(\langle
\id_{i_1}^{{\star}},\yy_{i_1}\rangle= 0\mod
2q\right)\right]\right)\label{upper-bound}
\end{align}
\begin{align}
\geq
1/(2q)^{\ell^{\star}}\left(1-\displaystyle\sum_{i=1}^{q}\Pr_{\mathbf{Y}'}\left[\bigwedge_{i_1=1}^{\ell^{(i)}}\left(
\langle \id_{i_1}^{(i)},\yy_{i_1}\rangle=0\mod 2q\right) |
\bigwedge_{i_1=1}^{\ell^{\star}} \left(\langle
\id_{i_1}^{{\star}},\yy_{i_1}\rangle= 0\mod
2q\right)\right]\right)\nonumber
\end{align}

We now focus on how to bound
$$\Pr_{\mathbf{Y}'}\left[\bigwedge_{i_1=1}^{\ell^{(i)}}\left( \langle
\id_{i_1}^{(i)},\yy_{i_1}\rangle=0\mod 2q\right) |
\bigwedge_{i_1=1}^{\ell^{\star}} \left(\langle
\id_{i_1}^{{\star}},\yy_{i_1}\rangle= 0\mod 2q\right)\right].$$

First, observe that $\langle \id_{i_1}^{(i)},\yy_{i_1}\rangle=0\mod
2q$ is independent to $\langle
\id_{i_1}^{(i)},\yy_{i_1'}\rangle=0\mod 2q$ if $i_1\neq i_1'$. As a
consequence, we only need to compute $\Pr_{\mathbf{Y}'}\left[
\langle \id_{i_1}^{(i)},\yy_{i_1}\rangle=0\mod 2q |
\bigwedge_{k=1}^{\ell^{\star}} \left(\langle
\id_{k}^{{\star}},\yy_{k}\rangle= 0\mod 2q\right)\right]$.

To this end, we consider two cases: that $\ell^{(i)}>\ell^{\star}$
or $\ell^{(i)}\leq \ell^{\star}$. In the first case,  for each $i_1
 \in \{\ell^\star +1,\ldots,\ell^{(i)}\}$ such that
\begin{multline*}\Pr_{\mathbf{Y}'}\left[ \langle
\id_{i_1}^{(i)},\yy_{i_1}\rangle=0\mod 2q |
\bigwedge_{k=1}^{\ell^{\star}} \left(\langle
\id_{k}^{{\star}},\yy_{k}\rangle= 0\mod
2q\right)\right]\\=\Pr_{\mathbf{Y}'}\left[ \langle
\id_{i_1}^{(i)},\yy_{i_1}\rangle=0\mod 2q\right]=1/2q.
\end{multline*} For all indices $i_1 \in \ell^\star$, the same probability is either $1$ or $1/(2q)$.  \\ \indent  If $\ell^{(i)}\leq\ell^{\star}$,
for each $i_1 \in \{1,\ldots,\ell^{(i)}\}$, we have
\begin{multline*}\Pr_{\mathbf{Y}'}\left[ \langle
\id_{i_1}^{(i)},\yy_{i_1}\rangle=0\mod 2q |
\bigwedge_{k=1}^{\ell^{\star}} \left(\langle
\id_{k}^{{\star}},\yy_{k}\rangle= 0\mod
2q\right)\right]\\=\Pr_{\mathbf{Y}'}\left[ \langle
\id_{i_1}^{(i)},\yy_{i_1}\rangle=0\mod 2q | \langle
\id_{i_1}^{{\star}},\yy_{i_1}\rangle= 0\mod 2q\right]\end{multline*}
which is 1 if $\id_{i_1}^{(i)}=\id_{i_1}^{\star}$ and $1/(2q)$
otherwise due to the fact stated above.

Define $\chi_1^{(i)}=\max(\ell^{(i)}-\ell^{\star},0)$ and
$\chi_2^{(i)}=\#\{1\leq i_1\leq \min(\ell^{\star},\ell^{(i)})|
\id_{i_1}^{(i)}\neq\id_{i_1}^{\star}\}$. Note that, by the
restrictions imposed on $\id^{(i)}$, we have
$\chi_1^{(i)}+\chi_2^{(i)}\geq 1$ for all $i\in\{1,\dots,q\}$.
Putting it all together, we find
\begin{multline*}
 \Pr_{\mathbf{Y}'}\left[\bigwedge_{i_1=1}^{\ell^{(i)}}\left( \langle
\id_{i_1}^{(i)},\yy_{i_1}\rangle=0\mod 2q\right) |
\bigwedge_{i_1=1}^{\ell^{\star}} \left(\langle
\id_{i_1}^{{\star}},\yy_{i_1}\rangle= 0\mod 2q\right)\right]
\\=\frac{1}{(2q)^{\chi_1^{(i)}+\chi_2^{(i)}}}\leq \frac{1}{(2q)}
\end{multline*}
We can conclude that $\eta_{2q}\geq
1/(2^{\ell^{\star}+1}q^{\ell^{\star}})$. Combining this bound with
the bound on $\eta$, we get the statement of the Lemma. \qed

\subsection{Adaptive Security: Proof of Theorem \ref{theor-adapt-loss}}
\label{proof-adaptive-loss}

Regarding condition (i) of partial lossiness, let $RL_0$ and $RL_n$ be the games specified in Section \ref{security-analysis}. Let $\widehat{RL_0}$ and $\widehat{RL_n}$ be the games
which are the same as $RL_0$ and $RL_n$ except that they include the
artificial abort stage described above. First, we claim that
$$\Pr[\widehat{RL_0}\Rightarrow 1]-\Pr[\widehat{RL_n}\Rightarrow
1]\leq n \cdot \bigl( (d \cdot \mu +1) \cdot
\mathbf{Adv}^{\mathcal{P}\textrm{-}\mathrm{BDH}_1}(\B_1) + q \cdot
\mathbf{Adv}^{\mathcal{P}\textrm{-}\mathrm{DDH}_2}(\B_2) \bigr)$$

The proof for this statement is almost identical to the proof for
Lemma \ref{indist-setup}.

We now claim that both $\Pr[d_{exp}^\A=1\ |\
\REAL]=\Pr[\widehat{RL_0}\Rightarrow 1]$ and
$$|\Pr[d_{exp}^\A=1\ |\
\LOSSY]-\Pr[\widehat{RL_n}\Rightarrow 1]| \in
\mathsf{negl}(\varrho).$$
 These statements imply  condition (i) of partial lossiness. \\
\indent The proof of the last two  statements is identical to the
proof for Theorem \ref{theor-select-loss}: when using the auxiliary
input $\yy^{(0)}$, $\HFEv$ will always be injective. On the other
hand, when using $\yy^{(1)}$, $\HFEv$ will be lossy for identities
$(\id_1,\dots,\id_\ell)$ such that $\langle
\yy_{i_1}^{(1)},\id_{i_1} \rangle = 0$ for all
$i_1\in\{1,\dots,\ell\}$, with lossiness $\lambda\left( \HFEv
\big(\mathsf{pms},\mathsf{mpk},(\id_1,\ldots,\id_{\ell}),. \big)
\right) \geq n-\log{p} =\omega$, and $\HFEv$ will be injective for
the other identities.

Seeing that condition (ii) is fulfilled is straightforward: from
Lemma \ref{lemma-lower-bound}, we know that $\eta_{low}\leq\Pr
[d_1=1\ |\ \REAL]$ and, by construction, $\eta_{low}\leq \Pr[d_2=1\
|\ d_1=1\wedge\REAL]$. This gives us the lower bound $\eta_{low}^2$
on $\Pr[d_{\neg abort}=1\ |\ \REAL]$, which is non-negligible.
However, the value of $\epsilon_1$ will not be this lower bound.
Instead, we will show that another condition (which we give at the
end of this proof) is satisfied and  guarantees the existence of
$\epsilon_1$ and $\epsilon_2$.  The proof for this is actually very
similar to the proof in \cite{Wat05}. Here, we briefly outline the
details. First, due to Chernoff bounds we have the following:

$$ \Pr[d_{\neg abort}^\A=1\ |\ d_\A=1\wedge \REAL]\geq
\eta_{low}(1-\frac{1}{4}\zeta)$$ and
$$\Pr[d_{\neg abort}^\A=1\ |\ d_\A=0\wedge
\REAL]\leq \eta_{low}(1+\frac{3}{8}\zeta)
$$

We refer the reader to \cite{Wat05} for details on how to compute
these bounds.

Let us now assume the existence of a PPT adversary $\A$ such that
$$
\Pr[d_\A=1\ |\ \REAL]\ -\
\frac{1}{2}>\zeta
$$
for some non-negligible $\zeta$. This implies that we can write
$\Pr[d_\A=1\ |\ \REAL] >1/2+\zeta$ and $\Pr[d_\A=0\ |\
\REAL]<1/2-\zeta$. Combining these inequalities with the two
inequalities above (from Chernoff bounds) we obtain

\begin{multline}
\Pr[d_\A=1\ |\ \REAL]\cdot\Pr[d_{\neg abort}^\A=1\ |\ d_\A=1 \wedge
\REAL] \\
-\ \Pr[d_\A=0\ |\
\REAL]\cdot\Pr[d_{\neg abort}^\A=1\ |\
d_\A=0\wedge \REAL]>2\delta\zeta\label{eq:conseq-2}
\end{multline}
with $\delta=13\eta_{low}/16=13/\left(32\cdot(2
q\mu)^{d}\right)$. Now, let us observe that
\begin{align*}
\Pr[d_\A=1\ |\ \REAL]\ \cdot \ \Pr[d_{\neg abort}^\A=1\ |\ d_\A=1
\wedge \REAL]  =\Pr[d_{\neg abort}^\A=1\wedge d_\A=1\ |\
\REAL]\end{align*} and
\begin{align*}&\Pr[d_\A=0\ |\
\REAL]\cdot\Pr[d_{\neg abort}^\A=1\ |\
d_\A=0\wedge
\REAL]=\\&=\Pr[d_{\neg abort}^\A=1\wedge
d_\A=0\
|\ \REAL]\\
&=\Pr[d_{\neg abort}^\A=1\ |\
\REAL]\cdot\Pr[d_\A=0\ |\
d_{\neg abort}^\A=1\wedge \REAL]\\
&=\Pr[d_{\neg abort}^\A=1\ |\
\REAL]\cdot(1-\Pr[d_\A=1\ |\
d_{\neg abort}^\A=1\wedge \REAL])\\
&=\Pr[d_{\neg abort}^\A=1\ |\
\REAL]-\Pr[d_{\neg abort}^\A=1\wedge d_\A=1\
|\ \REAL]
\end{align*}

Combining these two equalities with inequality (\ref{eq:conseq-2}), we obtain
$$
2\delta\zeta < 2 \Pr[d_{\neg abort}^\A=1\wedge
d_\A=1\ |\ \REAL] - \Pr[d_{\neg abort}^\A=1\ |\
\REAL].
$$
Dividing this inequality by 2 and using the fact that $$\Pr[d_{\neg
abort}^\A=1\wedge d_\A=1\ |\ \REAL] = \Pr[d_{\neg abort}^\A=1\ |\
\REAL] \cdot  \Pr[d_\A=1\ |\ d_{\neg abort}^\A=1 \wedge \REAL],$$ we
obtain the relation
$$
\delta \zeta < \Pr[d_{\neg abort}^\A=1\ |\ \REAL] \cdot \left( \Pr[d_\A=1\ |\ d_{\neg abort}^\A=1 \wedge \REAL] - \frac{1}{2} \right)
$$

Finally, this shows that there exist $\epsilon_1$ and $\epsilon_2$ such that (ii) and (iii) are satisfied and that their product is $\delta$.

\qed

\section{Adaptive-id Secure  Deterministic (H)IBE}
\label{deterministic-scary-part}

A hierarchical identity-based deterministic encryption scheme
(HIB-DE) is a tuple of efficient algorithms $\DE=(\DESetup,\
\DEMKg,\ \DEKg,\ \DEDel, \DEEnc ,\ $ $\DEDec )$. The setup algorithm
 $\DESetup$ takes as input a security
parameter $\varrho \in \mathbb{N}$, the (constant) number of levels
in the hierarchy $d\in\N$, the length of the identities $\mu\in
\mathsf{poly}(\varrho)$ and the length of the plaintexts $s
\in\mathsf{poly}(\varrho)$, and outputs a set of global public
parameters $\mathsf{pms}$, which specifies an identity space $\IdSp$
and the necessary mathematical objects and hash functions. The
master key generation
 algorithm $\DEMKg$ takes as input $\mathsf{pms}$ and outputs a master public key $\mathsf{mpk}$ and a master secret key $\mathsf{msk}$. The
 key generation algorithm $\DEKg$ takes as input $\mathsf{pms}$, $\mathsf{msk}$ and a hierarchical
 identity $(\id_1,\ldots,\id_{\ell})\in \IdSp$, for some $\ell \geq 1$ and
 outputs a secret key $\mathbf{SK}_{(\id_1,\ldots,\id_\ell)}$. The delegation algorithm $\DEDel$ takes as
 input $\mathsf{pms}$, $\mathsf{msk}$, a hierarchical identity $(\id_1,\ldots,\id_{\ell})$, a secret key
 $\mathbf{SK}_{(\id_1,\ldots,\id_\ell)}$ for it, and an additional identity $\id_{\ell + 1}$; the output is a secret
 key $\mathbf{SK}_{(\id_1,\ldots,\id_\ell,\id_{\ell + 1})}$ for the hierarchical identity $(\id_1,\ldots,\id_\ell,\id_{\ell + 1})$ iff $(\id_1,\ldots,\id_\ell,\id_{\ell + 1})\in \IdSp$.
 The evaluation algorithm $\DEEnc$ takes as input $\mathsf{pms}$, $\mathsf{msk}$, a hierarchical identity $\id=(\id_1,\ldots,\id_{\ell})$ and a
 value $m \in \{0,1\}^s$; the result of the evaluation is denoted as $C$. Finally, the inversion algorithm $\DEDec$ takes as
 input  $\mathsf{pms}$, $\mathsf{msk}$, a hierarchical identity $\id=(\id_1,\ldots,\id_{\ell})$, a secret key $\mathbf{SK}_{\id}$
 for it and an evaluation $C$, and outputs a value $\tilde{m} \in
  \{0,1\}^s$. \\
 \indent
A HIB-DE satisfies the property of correctness if
$$
\DEDec \big(\mathsf{pms},\mathsf{mpk},\id,\mathbf{SK}_{\id}, \DEEnc
\big(\mathsf{pms},\mathsf{mpk},\id=(\id_1,\ldots,\id_{\ell}),m \big)
\big) \ =\ m ,
$$
for any $m \in \{0,1\}^s$, any
$\mathsf{pms},(\mathsf{mpk},\mathsf{msk})$ generated by  $\DESetup$
and $\DEMKg$, any hierarchical identity $(\id_1,\ldots,\id_\ell)\in
\IdSp$ and any secret key $\mathbf{SK}_{(\id_1,\ldots,\id_\ell)}$
generated either by running
$\DEKg\big(\mathsf{pms},\mathsf{msk},(\id_1,\ldots,\id_{\ell})
\big)$ or by applying   the delegation algorithm $\DEDel$ to secret
keys of shorter hierarchical identities.

%

Bellare \textit{et al.} \label{security-defs-de} \cite{BFOR08} gave
several definitions for deterministic encryption and proved them
equivalent. In the case of  {\it block sources}\footnote{Informally,
a block source is a distribution of message vectors where each
component has high min-entropy conditionally on previous ones. }
\cite{BFO08}, Boldyreva {\it et al.} \cite{BFO08} proved that
single-challenge security (called PRIV1 security in
\cite{BBO07,BFO08}) is  equivalent to multi-challenge security
(referred to as PRIV security \cite{BBO07}, where the adversary is
given a vector of challenge ciphertexts) in the sense of
indistinguishability-based   definitions. The simplified
indistinguishability-based notion, called PRIV1-IND hereafter, is
somewhat handier to work with and we thus use this one.

We define PRIV1-IND-ID security, the natural analogue of PRIV1-IND
security in the IBE scenario. The security notion is defined via the
following  experiment between a challenger and an adversary. Some
instructions depend on whether we are in the \emph{selective} or
\emph{adaptive}  security case.

%
%
%
\begin{definition} \label{HIB-DE-exp-def}
We define $\textsf{Guess}_{\DE}^\A(M)$, for a random variable $M$ as
follows:
\begin{itemize}
\item[0.] The challenger  $\mathcal{C}$ chooses  global
parameters $\mathsf{pms}$ by running $\DESetup$. The parameters
$\mathsf{pms}$ are given to $\A$, who replies by choosing a
hierarchical identity $\id^\dagger
=(\id_1^\dagger,\ldots,\id_{\ell^\dagger}^\dagger)$, for some
$\ell^\dagger \leq d$.

\item[1.] The challenger runs $(\mathsf{mpk},\mathsf{msk}) \leftarrow \DEMKg(\mathsf{pms})$ and sends $\mathsf{mpk}$ to $\A$. Also,
 two lists $QS \leftarrow \emptyset$, $IS \leftarrow \emptyset$ are initialized.

\item[2.]   $\A$ is allowed to make a number of adaptive queries for hierarchical identities $\id = (\id_1,\ldots,\id_{\ell})$ (where hierarchical identities are encoded as hierarchical
vectors $(\vec{X}_1,\ldots,\vec{X}_\ell)$).
\\ \vspace{-0.3 cm}
\begin{itemize}
\item[-] {\bf Create-key}: $\A$ provides $\id$
and the challenger $\mathcal{C}$ creates a private key $SK_\id$ by
running $\DEKg(\mathsf{pms},\mathsf{msk},\id )$. The list $QS$ is
updated as $QS=QS\cup \{\id\}$.
\item[-] {\bf Create-delegated-key}: $\A$ provides $\id=(\id_1,\ldots,\id_{\ell})$ and $\id_{\ell+1}$ such that $\id\in QS$.
The challenger $\mathcal{C}$ then computes $SK_{\id'}$ for
$\id'=(\id_1,\dots,\id_{\ell+1})$ by running the delegation
algorithm
$\DEDel\big(\mathsf{pms},\mathsf{mpk}_\beta,SK_\id,\id_{\ell+1}\big)$.
The list $QS$ is updated as $QS=QS\cup \{\id'\}$.
\item[-] {\bf Reveal-key}: $\A$ provides $\id$ with the
restriction that if $\A$ is selective, then
$\id\not\leq\id^\dagger$. $\mathcal{C}$ returns $\perp$ if
$\id\not\in QS$. Otherwise, $SK_{\id}$ is returned to $\A$ and the
list $IS$ is updated   as $IS=IS
\cup \{\id\}$.\\
\vspace{-0.3 cm}
\end{itemize}

\item[3.] The adversary $\A$ outputs a hierarchical identity
$\id^\star=(\id_1^\star,\ldots,\id_{\ell^\star}^\star)$. In the
selective setting, we impose $\ell^\star=\ell^\dagger$ and
$\id_j^\star=\id_j^\dagger$ for each $j \in
\{1,\ldots,\ell^\star\}$. In the adaptive case, no element of $IS$
is a prefix of $\id^\star$.

\item[4.] The challenger encrypts a message $m$ sampled from the given message distribution $M$. The resulting
ciphertext is sent to $\A$.

\item[5.] $\A$ outputs a bit $b'\in\{0,1\}$, which is the output of
the experiment.
\end{itemize}
\end{definition}

Recall that a random variable $X$ over $\{0,1\}^s$  is called a
$(t,s)$-source if $H_\infty(X)\geq t$, where $H_\infty(X)$ is the
min-entropy of $X$; $H_\infty(X)=-\log(\max_xP_X(x)).$ We now give
the identity-based version of IND-PRIV1 security of \cite{BFO08}.

\begin{definition}
An $s$-bit encryption scheme $\DE$ is PRIV1-IND-ID secure for
$(t,s)$-sources if for any $(t,s)$-sources $M_0$ and $M_1$ and all
polynomial time adversaries $\A$, the PRIV1-IND-ID advantage
$$Adv_{\DE}^{\mathrm{priv1}\textrm{-}\mathrm{ind}\textrm{-}\mathrm{id}}(\A,M_0,M_1)=\Pr[\textsf{Guess}_{\DE}^\A(M_0)=1]-\Pr[\textsf{Guess}_{\DE}^\A(M_1)=1]$$
of $\A$ against $\DE$ is negligible.
\end{definition}

\subsection{Universal $\HF$ implies deterministic
encryption}\label{det-construction} As it can be seen from the
definition, a hierarchical identity-based encryption scheme is very
close  to an  HIB-TDF, even syntactically.  Indeed, as shown in
\cite{BFO08}, in the public key setting a natural way to construct a
deterministic $\mathsf{DE}$ encryption scheme is by  defining the
algorithms of  $\mathsf{DE}$ as their natural counterparts in some
lossy LTDF.  Boldyreva \textit{et al.} \cite{BFO08} show that if the
lossy function is also universal, this construction is PRIV1-IND
secure. For functions not satisfying this property,  the result
follows also directly by an extension of  the Crooked Leftover Hash
Lemma of Dodis and Smith \cite{DS05} given by Boldyreva \emph{et
al.} for this purpose. Similarly, in the identity-based setting, one
can construct a hierarchical deterministic identity-based encryption
scheme $\DE$  from $\HF$, a lossy HIB-TDF, by defining the
algorithms of $\DE$ from those of  $\HF$ in the natural way (in this
case, we say that $\DE$ is defined by $\HF$).  We will show that
with the additional requirement of universality, such a construction
is secure in some natural extension of PRIV1-IND security. To do
without this extra condition, it is possible to use the same
techniques of Boldyreva \emph{et al.}.

More specifically, universality requires that if $\LHF$ is the lossy
sibling of a $(\omega,\delta)$-lossy  HIB-TDF function, then
$\HFSetup$, $\LHFMKg$ and $\HFEv(\mathsf{pms,mpk},\id^\star,\cdot)$
induce a universal hash function when we have a lossy function for
$\id^\star$, i.e., $\lambda\left( \HFEv
\big(\mathsf{pms},\mathsf{mpk}_1,(\id_1^\star,\ldots,\id_{\ell^\star}^\star),\cdot
\big) \right) \geq \omega$.  It is not difficult to see that both
\cite{BKPW} and our construction induce a universal hash in the
lossy mode.

\begin{theorem}
If $\HF$ is a universal $(\omega, \delta)$-lossy HIB-TDF function
with input $\{0,1\}^n$, then the hierarchical deterministic
encryption function $\DE$ defined by it is secure for poly-time sampleable
$(t,s)$-sources such that $t\geq n-\omega+2\log(1/\epsilon)$ for
some negligible $\epsilon$. In particular, for any two
$(t,s)$-sources $M_0$ and $M_1$ and for every PRIV1-IND-ID adversary
$\B$ against $\DE$ there exists a PPT adversary $\A$ against $\HF$
such that
$$\mathbf{Adv}_{\HF,\LHF,\mathcal{P},\omega,\zeta}^{\mathrm{lossy}}(\A) \geq \frac{1 }{3} \cdot \delta  \cdot  \mathbf{Adv}_{\DE}^{\mathrm{priv1-ind-id}}(\mathcal{B},M_0,M_1)- \nu(\varrho).$$
\end{theorem}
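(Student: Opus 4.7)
The plan is to follow the blueprint of Theorem \ref{cpa-construction-secure}, replacing the Leftover Hash Lemma by the Crooked Leftover Hash Lemma of Dodis-Smith \cite{DS05} (as extended by Boldyreva, Fehr and O'Neill \cite{BFO08}), which is the natural tool that exploits the universality of $\HF$. Given a PRIV1-IND-ID adversary $\B$ against $\DE$ with non-negligible advantage $\zeta'=\mathbf{Adv}_{\DE}^{\mathrm{priv1-ind-id}}(\B,M_0,M_1)$ on sources $M_0,M_1$, I would build a reduction $\A$ that contradicts condition (i) of Definition \ref{HIB-TDF-sec-def} for $\HF$, under the assumption that (ii) and (iii) hold. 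Concretely, $\A$ simulates the PRIV1-IND-ID challenger for $\B$: it forwards an initial $\id^\dagger$ (random in the adaptive case, equal to $\B$'s committed identity in the selective case), relays the received $\mathsf{mpk}_\beta$ to $\B$, and answers key queries through its own oracle. When $\B$ outputs the target identity $\id^\star$, $\A$ samples $\gamma\sample\{0,1\}$ and $m\sample M_\gamma$, computes $C=\HFEv(\mathsf{pms},\mathsf{mpk}_\beta,\id^\star,m)$ and hands $C$ to $\B$. Upon receiving $\B$'s guess $\gamma'$, $\A$ outputs $d_\A=1$ iff $\gamma=\gamma'$, and sets $\zeta := \zeta'/2$ as the threshold plugged into conditions (ii)-(iii).

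In the $\REAL$ experiment the simulation is perfect, and the standard reduction from distinguishing two sources to guessing a hidden bit gives $\Pr[d_\A=1\mid \REAL]-\tfrac{1}{2}=\tfrac{1}{2}\,\zeta'>\zeta$, so conditions (ii)-(iii) apply and yield
$$\Pr[d_{\neg abort}^{\A}=1\mid \REAL]\cdot\Bigl(\Pr[d_\A=1\mid \REAL\wedge d_{\neg abort}^\A=1]-\tfrac{1}{2}\Bigr)>\epsilon_1\epsilon_2\cdot\tfrac{\zeta'}{2}=\tfrac{\delta\,\zeta'}{2}.$$
In the $\LOSSY$ experiment, conditioned on $d_{\neg abort}^\A=1$ we have $\lambda(\HFEv(\mathsf{pms},\mathsf{mpk}_1,\id^\star,\cdot))\geq\omega$, so $\HFEv(\mathsf{pms},\mathsf{mpk}_1,\id^\star,\cdot)$ is a universal hash from $\{0,1\}^n$ with image of size at most $2^{n-\omega}$. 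Since $M_0$ and $M_1$ are $(t,s)$-sources with $t\geq n-\omega+2\log(1/\epsilon_{CLHL})$, the Crooked LHL ensures that for each $b\in\{0,1\}$ the distribution of $\HFEv(\mathsf{pms},\mathsf{mpk}_1,\id^\star,m)$ for $m\sample M_b$ is within statistical distance $\epsilon_{CLHL}$ of the distribution obtained when $m$ is uniform on $\{0,1\}^n$. By the triangle inequality, $\B$'s view is almost independent of $\gamma$, so $\Pr[d_\A=1\mid \LOSSY\wedge d_{\neg abort}^\A=1]\leq \tfrac{1}{2}+\epsilon_{CLHL}$.

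From this point one proceeds exactly as in the proof of Theorem \ref{cpa-construction-secure}: assuming $\Pr[d_{\neg abort}^\A=1\mid\LOSSY]-\Pr[d_{\neg abort}^\A=1\mid\REAL]\leq\mathbf{Adv}^{\mathrm{lossy}}(\A)$ (else a trivial distinguisher already breaks (i)), expand the definition of $\mathbf{Adv}^{\mathrm{lossy}}(\A)$ in terms of $\Pr[d_{exp}^\A=1\mid\REAL]-\Pr[d_{exp}^\A=1\mid\LOSSY]$ and solve for the advantage, absorbing a factor $1/2$ from the above inequality. The resulting bound is $\mathbf{Adv}^{\mathrm{lossy}}(\A)\geq\tfrac{2}{3}\cdot\tfrac{\delta\,\zeta'}{2}-\tilde{\nu}=\tfrac{1}{3}\delta\,\mathbf{Adv}_\DE^{\mathrm{priv1-ind-id}}(\B,M_0,M_1)-\tilde{\nu}$, as required. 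The main obstacle is the delicacy of the Crooked LHL step in the adaptive setting: the universal-hash key (encoded in $\mathsf{pms},\mathsf{mpk}_1$) is fixed before $\B$ chooses $\id^\star$ adaptively, so one must use that the universality condition assumed on $\HF$ holds simultaneously for every lossy identity $\id^\star$ and that $M_0,M_1$ are block sources (so the min-entropy bound survives conditioning on $\B$'s transcript); the rest of the argument is the same accounting with the artificial-abort pre-output stage $\mathcal{P}$ of Section \ref{adaptive-id-sec} that forced the $2/3$ factor in Theorem \ref{cpa-construction-secure} and now halves to $1/3$ because the PRIV1-IND advantage is the gap between two source-dependent probabilities rather than an overshoot above $1/2$.
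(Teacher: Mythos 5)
Your proposal follows the paper's proof essentially verbatim: the same reduction (simulate the PRIV1-IND-ID challenger, sample $\gamma$ and $m\sample M_\gamma$, output $d_\A=1$ iff $\gamma=\gamma'$), the same $\zeta'/2$ accounting in $\REAL$ that turns the $2/3$ of Theorem \ref{cpa-construction-secure} into $1/3$, and the same appeal to universality in $\LOSSY$ to make $\B$'s view independent of $\gamma$ conditioned on $d_{\neg abort}^\A=1$. Your explicit invocation of the Crooked Leftover Hash Lemma is just a more careful spelling-out of the step the paper dispatches by saying that universality makes the output distribution independent of the input distribution, so this is the same argument.
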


\begin{proof}
We use similar ideas as those of \cite{BFO08}. 
Namely, our proof goes as follows: assume that there is an adversary
$\B$ that breaks PRIV1-ID-IND security of the deterministic
encryption scheme. Namely, there exist  two $(t,s)$-sources
$M_0,M_1$ and a non-negligible  $\zeta$  such that
$$\mathbf{Adv}_{\DE}^{\mathrm{priv1}\textrm{-}\mathrm{ind}\textrm{-}\mathrm{id}}(\B,M_0,M_1)=\Pr[\textsf{Guess}_{\DE}^\B(M_0)=1]-\Pr[\textsf{Guess}_{\DE}^\B(M_1)=1]\geq \zeta. $$ Then, we build an adversary $\A$ that breaks the security of the underlying $\HF$, that is,  $\A$ will interact with a challenger according to the lossy or the real experiment and will tell the difference between both scenarios with non-negligible probability.

$\A$ forwards an identity $\id^\dagger$ to its challenger, which is
some random identity in the adaptive case.  When the challenger runs
the setup and gives the output to $\A$, $\A$ forwards this
information to $\B$. When $\B$ asks for a secret key for a
hierarchical identity $\id$, $\A$ forwards the query to the
experiment and forwards the reply to $\B$. At some point, $\B$
outputs $\id^\star$. Adversary $\A$ then forwards $\id^\star$ to its
challenger, chooses $\gamma\gets\{0,1\}$ at random and encrypts
$m_\gamma \sample M_{\gamma}$ under the identity $\id^\star$. Note
that this corresponds to an execution of
$\textsf{Guess}_{\DE}^\B(M_{\gamma})$. After some more secret key
queries, $\B$ outputs a bit $\gamma' \in \{0,1\}$ and $\A$ outputs
$d_\A=1$ if $\gamma=\gamma'$ and $d_\A=0$ otherwise.

The security analysis is very similar to the one in Section
\ref{sec:app_IND_CPA}. A simple argument shows that
$$ \Pr[\gamma'=\gamma|\ \REAL]-\frac{1}{2}=\Pr[d_\A=1|\
\REAL]-\frac{1}{2}\geq \zeta/2,$$ since $\A$ perfectly simulated the
experiment $\textsf{Guess}_{\DE}^\B(M_{\gamma})$ with $\B$. On the
 other hand, in the $\LOSSY$ setting
when $\id^\star$ is lossy, the advantage of $\B$ in guessing
$\gamma$ is negligible because the universality property means that
the output distribution of the encryption algorithm is independent
of the input distribution, and this holds regardless of whether
$d_{\neg abort}^\A=1$, therefore:
$$\Pr[d_\A=1|\ \LOSSY \wedge d_{\neg abort}^\A=1] \leq \dfrac{1}{2}+\nu,$$  for some negligible $\nu$. At this point the analysis
follows as in the analysis of the IND-ID-CPA case (proof of Theorem
\ref{cpa-construction-secure}).
\end{proof}

\subsection*{Related Work}

In a recent, independent work \cite{XXZ}, Xie, Xue and Zhang gave a
lattice-based deterministic IBE construction in the auxiliary input
setting \cite{BS11}. While their scheme does provide adaptive
security in the auxiliary input setting, it was described as a
single-level IBE. Our results are incomparable to theirs: on one
hand, we do not consider auxiliary inputs; on the other hand, we are
concerned with pairing-based schemes in a hierarchical setting and
focus on a more powerful primitive.

\section{On Hedged (H)IBE}\label{hedged-appendix}

Bellare \textit{et al.} \cite{BBN09} studied the problem of designing \emph{hedged} public key encryption schemes, which remain secure even if the randomness used for encryption is relatively bad. They consider two notions of hedged security: non-adaptive and adaptive. A non-adaptive attacker can make only a single
challenge encryption query, whereas an adaptive attacker can make up to $q$ challenge adaptive encryption queries, all of them for the same public key.

Bellare \textit{et al.} give in \cite{BBN09}  different constructions of public key encryption scheme enjoying non-adaptive hedged security. For instance, their construction $\mathsf{RtD}$ combines an IND-CPA secure encryption scheme and a deterministic PRIV1-IND secure encryption scheme to achieve non-adaptive hedged security. Then they prove a general result stating that a public key encryption scheme which enjoys both anonymity and non-adaptive hedged security, already enjoys adaptive hedged security. They show how to provide the necessary anonymity property by using universal lossy trapdoor functions.

The notion of non-adaptive hedged security, with only one challenge encryption query, can be easily extended to the identity-based setting, as well as the non-adaptively secure constructions in \cite{BBN09}. In particular, since our new notion of partial lossiness for (H)IB-TDFs implies both IND-CPA secure IBE and PRIV1-IND secure (H)IB deterministic encryption, we have that the existence of (H)IB-TDFs enjoying partial lossiness implies (through the identity-based version of $\mathsf{RtD}$) non-adaptive hedged (H)IBE secure against adversaries who choose the challenge identity in an adaptive way.

Regarding the notion of adaptive hedged security, the extension to the identity-based setting is more challenging, because the adversary could choose different identities for each of the $q$ challenge adaptive encryption queries. The partial lossiness definitions in \cite{BKPW} and in this paper do not seem to imply any positive result in that case, because these definitions consider only one challenge (lossy) identity. Maybe more general definitions for the partial lossiness notion of (H)IB-TDFs are needed to overcome this obstacle, in the line of all-but-$N$ trapdoor functions \cite{HLOV11} and all-but-many trapdoor functions \cite{Hof12}. We leave the problem of designing (H)IBE schemes with adaptive hedged security as an interesting open problem.

\end{document}